\documentclass[11pt]{article}

\usepackage{hdb_macros}
\usepackage{fullpage}
\usepackage{xcolor}
\usepackage{thm-restate}

\renewcommand{\C}{\mathcal{C}}
\newcommand{\D}{\mathcal{D}}
\newcommand{\GapCDP}{\problem{GapCDP}}
\newcommand{\GapNorm}{\problem{GapNorm}}
\newcommand{\CDP}{\problem{CDP}}

\newcommand{\LCE}{\problem{LCE}}
\newcommand{\LIP}{\problem{LIP}}

\newcommand{\supp}{\mathrm{supp}}
\newcommand{\defeq}{\coloneqq}
\newcommand{\zznorm}[1]{\norm{#1}_{0 \to 0}}
\newcommand{\M}{\mathcal{M}}

\newcommand{\full}[1]{#1}
\newcommand{\fullornot}[2]{#1}

\newcounter{casenum}
\newenvironment{caseof}{\setcounter{casenum}{1}}{\vskip.5\baselineskip}
\newcommand{\case}[2]{\vskip.5\baselineskip\par\noindent {\bfseries Case \arabic{casenum}:} #1\\#2\addtocounter{casenum}{1}}

\title{The Code Distortion Problem}
\author{\fullornot{Huck Bennett\thanks{University of Colorado Boulder. \email{Huck.Bennett@colorado.edu}. Supported in part by NSF Award No. 2432132.} \and Matthew Fox\thanks{University of Colorado Boulder. \email{Matthew.Fox@colorado.edu}.} \and Bryant Morrell\thanks{University of Colorado Boulder. \email{Bryant.Morrell@colorado.edu}. Supported in part by NSF Award No. 2432132.}}{(Anonymous.)}}
\date{\today}

\begin{document}

\maketitle
\begin{abstract}
Two linear error-correcting codes $\C_1, \C_2 \subseteq \F_q^n$ are called \emph{linearly equivalent} if there is a linear isometry mapping $\C_1$ to $\C_2$.
In this work, we generalize the notion of linear equivalence and study the minimum distortion $\D(\C_1, \C_2)$ of a linear mapping between codes $\C_1, \C_2 \subseteq \F_q^n$, which \emph{quantifies} how similar $\C_1$ and $\C_2$ are. 
We introduce and study the Code Distortion Problem ($\CDP$), which asks to find a minimum distortion mapping between two input codes $\C_1$ and $\C_2$.
$\CDP$ generalizes the Linear Code Equivalence Problem ($\LCE$), which is essentially the special case of $\CDP$ where $\D(\C_1, C_2) = 1$ and which is well-studied because of its role in cryptography.

We prove that (decisional) $\CDP$ is $\NP$-hard to approximate to within any constant factor, and that it is in $\Sigma_2^\P$.
We also give a single-exponential-time $k^2$-approximation algorithm for $\CDP$, where $k$ is the dimension of the input codes. Furthermore, we give a single-exponential-time $\big(\frac{2k + 1}{3})^2$-approximation algorithm for a natural special case of $\CDP$, and we show that our analysis is tight in this case.

We use techniques from analogous work on the Lattice Distortion Problem ($\LDP$) by Bennett, Dadush, and Stephens-Davidowitz (ESA, 2016). We also introduce or study a number of additional concepts that might be of independent interest. These include an adaptation of the celebrated reduction of Goldreich, Micciancio, Safra, and Seifert (IPL, 1999) from the Shortest Vector Problem ($\SVP$) to the Closest Vector Problem ($\CVP$) on lattices to the analogous problems on codes; successive minima bases for codes; and the matrix $0 \to 0$ ``norm'' on subspaces.

\end{abstract}

\newpage

\section{Introduction}
\label{sec:introduction}

A \emph{linear error-correcting code} (or simply, \emph{code}) is a linear subspace $\C \subseteq \F_q^n$.
Although best known for their use in robust communication, codes also appear prominently in algorithms, computational complexity, and cryptography. Notably, several of the cryptosystems appearing in late rounds or standardized by the National Institute of Standards and Technology's (NIST) post-quantum cryptography standardization process are based on the intractability of certain computational problems on codes~\cite{classic-mceliece,conf/africacrypt/BiasseMPS20,hqc_spec_2025}. 

One central problem on codes of cryptographic interest is the \emph{Linear Code Equivalence Problem} ($\LCE$), in which, given generator matrices of two codes $\C_1, \C_2 \subseteq \F_q^n$ as input, the goal is to decide whether $\C_1$ and $\C_2$ are \emph{linearly equivalent}, i.e., whether one can convert $\C_1$ into $\C_2$ simply by permuting coordinates and scaling coordinates by non-zero values.
In particular, the LESS cryptosystem~\cite{conf/africacrypt/BiasseMPS20,conf/pqcrypto/BarenghiBPS21} crucially relies on the hardness of $\LCE$. It is currently a second-round candidate in NIST's ``Additional Digital Signature Schemes'' standardization process~\cite{nist_pqc_round2_additional_signatures_2025}.
Furthermore, a large body of work has studied $\LCE$ and its variants from an algorithmic and complexity-theoretic standpoint. See, e.g.,~\cite{Leon82,article/PetrankR97,journals/tit/Sendrier00,conf/soda/BabaiCGQ11,conf/sacrypt/Beullens20,journals/amco/BarenghiBPS23,conf/pkc/DucasG23,BennettWin2024,Nowakowski25,BBB+asymp-improvements26}.

A linear isometry (i.e., distance-preserving map) on $\F_q^n$ corresponds to a matrix $M \in \F_q^{n \times n}$ such that for all $\vec{x} \in \F_q^n$, $\norm{M \vec{x}}_0 = \norm{\vec{x}}_0$, where $\norm{\cdot}_0$ denotes the Hamming weight of a vector.
It is not hard to check that a matrix $M$ satisfying this property can be written as the product $M = DP$ of a non-singular diagonal matrix $D$ (which corresponds to scaling) and a permutation matrix $P$ (such matrices $M$ are called \emph{monomial matrices}).
In other words, the two operations in the definition of $\LCE$---permuting and scaling coordinates by non-zero values---exactly characterize the linear isometries on $\F_q^n$.

So, $\LCE$ asks whether there is a (perfectly) distance-preserving linear map between two input codes $\C_1, \C_2$. However, it is natural to ask a refined version of this question: %
\begin{quote}
    Is there a linear map $T$ such that $T(\C_1) = \C_2$ and $T$ \emph{approximately} preserves distances?
\end{quote}
We make this precise and quantitative by asking what the minimum \emph{distortion} of such a map $T$ from $\C_1$ to $\C_2$ is.
We define the distortion of a map $T$ with $T(\C_1) = \C_2$ as
\begin{equation} \label{eq:distortion-def-intro}
\D_T(\C_1, \C_2) :=
\Big( \max_{\vec{x} \in \C_1 \setminus \set{\vec{0}}} \frac{\norm{T\vec{x}}_0}{\norm{\vec{x}}_0} \Big) \big/ \Big(\min_{\vec{x} \in \C_1\setminus \set{\vec{0}}} \frac{\norm{T \vec{x}}_0}{\norm{\vec{x}}_0} \Big) \ \text{,}
\end{equation}
and we define $\D(\C_1, \C_2) := \min_{T} \D_T(\C_1, \C_2)$, where the minimum is taken over all linear maps $T$ such that $T(\C_1) = \C_2$.
Informally, $\D_T(\C_1, \C_2)$ is the ratio of the most $T$ blows up distances and the most it shrinks distances.
We note that $\D_T(\C_1, \C_2)$ is simply the distortion of the map $T$ in the standard sense of metric embeddings, where the codes $\C_1$ and $\C_2$ are the ambient spaces and each is equipped with the Hamming metric. (See, e.g., the lecture notes of Matou{\v{s}}ek~\cite{Matousek2013MetricEmbeddings} for background on metric embeddings.)
Moreover, $\D(\C_1, \C_2)$ is the minimum distortion over all such linear maps $T$, and it therefore quantifies how similar $\C_1$ and $\C_2$ are. Intuitively, linearly equivalent codes are essentially the same, and codes $\C_1, \C_2$ with $\D(\C_1, \C_2) \gg 1$ are quite different.
One can show that $\D(\C_1, \C_2) \geq 1$ (see \cref{lem:code-dist-basic-bounds}), and that equality holds if and only if there exist linearly equivalent ``unary scalings'' of $\C_1$ and $\C_2$ (see \cref{prop:equiv-codes-scaling-up}).  

Recall that a code $\C \subseteq \F_q^n$ of dimension $k$, called an $[n, k]_q$ code, is often represented by a generator matrix (basis) $G \in \F_q^{n \times k}$.\footnote{In this work we use \emph{column} bases for codes, although row bases are often used in the literature.} 
I.e., the code generated by $G$ is $\C(G) := \set{G\vec{x} : \vec{x} \in \F_q^k}$.
And, one can represent an arbitrary linear map between two $[n, k]_q$ codes in terms of generator matrices $G_1 := (\vec{g}_1^{(1)}, \ldots, \vec{g}_k^{(1)}) \in \F_q^{n \times k}$ and $G_2 := (\vec{g}_1^{(2)}, \ldots, \vec{g}_k^{(2)}) \in \F_q^{n \times k}$ of the respective codes by $\vec{g}_i^{(1)} \mapsto \vec{g}_i^{(2)}$.

For example, consider the generator matrices
\[
G_1 :=
\begin{pmatrix}
1 & 0 \\
0 & 1 \\
0 & 1 \\
0 & 1
\end{pmatrix} \in \F_2^{4 \times 2} \ , \qquad
G_2 :=
\begin{pmatrix}
1 & 0 \\
1 & 1 \\
1 & 1 \\
0 & 1
\end{pmatrix} \in \F_2^{4 \times 2} \ , \qquad
G_2' :=
\begin{pmatrix}
1 & 0 \\
0 & 1 \\
0 & 1 \\
1 & 1
\end{pmatrix} \in \F_2^{4 \times 2} \ \text{.}
\]
Notice that $\C_1 := \C(G_1)$ and $\C_2 := \C(G_2)$ are not linearly equivalent, but that $\C_2 = \C(G_2')$, i.e., $G_2$ and $G_2'$ are different generator matrices of the same code.
Letting $T$ and $T'$ be linear maps such that $T(G_1) = G_2$ and $T'(G_1) = G_2'$, it is straightforward to check that
\begin{align*}
\D_T(\C_1, \C_2) 
&= \Big( \max_{\vec{x} \in \C_1 \setminus \set{\vec{0}}} \frac{\norm{T\vec{x}}_0}{\norm{\vec{x}}_0} \Big) \big/ \Big(\min_{\vec{x} \in \C_1\setminus \set{\vec{0}}} \frac{\norm{T \vec{x}}_0}{\norm{\vec{x}}_0} \Big) \\
&= \Big(\frac{\norm{(1, 1, 1, 0)}_0}{\norm{(1, 0, 0, 0)}_0}\Big) \big/ \Big(\frac{\norm{(1, 0, 0, 1)}_0}{\norm{(1, 1, 1, 1)}_0}\Big) \\
&= \frac{3}{1} \cdot \Big(\frac{1}{2}\Big)^{-1} = 6 \ \text{,}
\end{align*}
and
\begin{align*}
\D_{T'}(\C_1, \C_2) 
&= \Big( \max_{\vec{x} \in \C_1 \setminus \set{\vec{0}}} \frac{\norm{T' \vec{x}}_0}{\norm{\vec{x}}_0} \Big) \big/ \Big(\min_{\vec{x} \in \C_1\setminus \set{\vec{0}}} \frac{\norm{T' \vec{x}}_0}{\norm{\vec{x}}_0} \Big) \\
&= \Big(\frac{\norm{(1, 0, 0, 1)}_0}{\norm{(1, 0, 0, 0)}_0}\Big) \big/ \Big(\frac{\norm{(1, 1, 1, 0)}_0}{\norm{(1, 1, 1, 1)}_0}\Big) \\
&= \frac{2}{1} \cdot \Big(\frac{3}{4}\Big)^{-1} = \frac{8}{3} \ \text{.}
\end{align*}
In other words, $T'$ is a substantially lower distortion mapping between $\C_1$ and $\C_2$ than $T$ is.
Part of the reason for this is that $G_2'$ is a better basis of $\C_2$ than $G_2$ is in a sense that we will make precise in the sequel.

In this work, we study code distortion from a computational standpoint. Let $\gamma \geq 1$ be an approximation factor. Specifically, we study the $\gamma$-approximate Code Distortion Problem ($\gamma$-$\CDP$), which asks to compute a linear mapping $T$ between two codes $\C_1$ and $\C_2$ such that $\D_T(\C_1, \C_2) \leq \gamma \cdot \D(\C_1, \C_2)$. We also study its decision version, $\gamma$-$\GapCDP$, in which the goal is to decide whether $\D(\C_1, \C_2) \leq D$ or $\D(\C_1, \C_2) > \gamma D$ for a given input distortion threshold value $D \geq 1$.
When emphasizing the finite field $\F_q$ that the input codes are over, we add a subscript $q$ to the problems.

\subsection{Our Results}
\label{sec:intro-results}

We study $\CDP$ from both an algorithmic and complexity-theoretic perspective.
We first prove hardness of approximation for $\GapCDP$ with (fixed) $D > 1$.
\begin{restatable}{theorem}{hardnessapprox} \label{thm:intro-hardness-of-approx}
For any constant $\gamma \geq 1$, any constant $D > 1$, and any prime power $q \leq \poly(n)$, $\gamma$-$\GapCDP_q$ with distortion threshold $D$ is $\NP$-hard under deterministic Cook reductions.
\end{restatable}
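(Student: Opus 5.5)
Following the analogous hardness proof for the Lattice Distortion Problem of Bennett, Dadush, and Stephens-Davidowitz, I will reduce from an approximate minimum-distance problem on codes. The reduction pairs an input code with a fixed reference code whose distortion to any \emph{good} code is small and to any \emph{bad} code is large. Concretely, I start from $\gamma'$-$\GapSVP$ for codes (the minimum distance problem), which is $\NP$-hard for every constant $\gamma'$ and every prime power $q$ (Dumer--Micciancio--Sudan, Cheng--Wan, Austrin--Khot). Those hardness results are randomized, so to obtain a \emph{deterministic Cook} reduction I will instead go through the Nearest Codeword Problem, which is $\NP$-hard to approximate deterministically. I will use the code analogue of the Goldreich--Micciancio--Safra--Seifert reduction mentioned in the abstract. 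This is a Cook reduction that calls a $\GapSVP$-style oracle on the $n$ codes obtained by doubling one basis vector at a time. It transfers hardness in the direction needed, once the target problem is phrased as a promise problem on the relevant family of codes.

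\textbf{Step 1: choose a target instance.} From an instance $(G,t)$, I build a code $\C$ of dimension $k$ together with a reference code $\C^\ast$, a scaled repetition/identity-type code of the same dimension. In the YES case, $\C$ should be close to $\C^\ast$, with distortion at most $D$. In the NO case, every linear map $T$ with $T(\C^\ast)=\C$ should have distortion above $\gamma D$. The key structural fact is a two-sided bound. For any $T$, the distortion is at least the ratio between the minimum distance of $\C$ and a ``typical'' weight of $\C$, compared with the same ratio for $\C^\ast$. This is where the lemma giving basic bounds on $\D(\C_1,\C_2)$ (\cref{lem:code-dist-basic-bounds}) enters.

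\textbf{Step 2: make the NO case far.} I want $\lambda_1(\C)$ to be tiny relative to the weights of the other codewords in the NO case, while in the YES case all nonzero codewords of $\C$ have weights within a factor $D$ of each other. To achieve this, I tensor or concatenate with a code in which all weights are equal, such as a simplex code. This pads every codeword to near-uniform weight, and a small slack in the padding tunes the YES distortion to exactly $D$. Repeating or tensoring then amplifies the NO-case gap to exceed $\gamma$. Since $D>1$ is a fixed constant, the padding parameters are constants and the blowup is polynomial; a simplex code of dimension $O(\log n)$ keeps its length $\poly(n)$ even when $q\le\poly(n)$.

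\textbf{Main obstacle.} The hardest part is the YES-case upper bound. I must exhibit an explicit map $T$ of distortion at most $D$, which requires controlling \emph{all} codewords of $\C$, not just the shortest one. GMSS-type embeddings give control only over the short vectors. I would handle this by constructing the YES instance so that $\C$ has a basis of nearly disjoint-support, near-equal-weight vectors: a ``successive minima basis'' in the paper's terminology. Mapping the standard basis of $\C^\ast$ to that basis then gives distortion bounded by the support-overlap ratio, which I can force to be at most $D$.
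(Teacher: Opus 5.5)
Your plan has a genuine gap at its core. You compare the instance code $\C$ with a fixed, instance-independent reference code $\C^\ast$. As you note yourself, this forces you to control the weights of \emph{all} codewords of $\C$ to within a factor $D$ in the YES case, and $D$ may be $1.01$.

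Nothing in an $\MDP$ or $\NCP$ instance bounds the weight distribution of $\C(G)$ beyond its shortest vectors, and your remedies do not supply this control:
\begin{itemize}
\item Appending a constant-weight code adds the same constant to every weight, and concatenating with one multiplies every weight by the same constant. Either way the max-to-min weight ratio is pushed toward $1$ or left unchanged, in the YES and NO cases alike, so no $\gamma$-factor separation can be created.
\item Tensoring multiplies the minimum distance but gives no comparable control over the remaining weights.
\item A simplex code has length exponential in its dimension, so one of dimension $O(\log n)$ cannot be attached to a dimension-$k$ code.
\item A ``nearly disjoint-support basis in the YES case'' cannot be imposed by a reduction that does not know which case it is in.
\end{itemize}

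The missing idea (\cref{thm:ncp-to-cdp}) is to compare the code with a \emph{perturbation of itself}. Take $G_1=\mathrm{diag}(G,G,\vec{1}_r)$, and let $G_2$ agree with $G_1$ except that its last column is $(-\vec{t},\vec{0},\vec{1}_r)$.
\begin{itemize}
\item YES case: pick $\vec{t}'\in\C-\vec{t}$ of weight at most $d$. The map fixing the $G$-columns and sending $(\vec{0},\vec{0},\vec{1}_r)\mapsto(\vec{t}',\vec{0},\vec{1}_r)$ changes each codeword in at most $d$ coordinates, and it changes only codewords of weight at least $r$. So every ratio $\norm{M\vec{x}}_0/\norm{\vec{x}}_0$ lies in $[1-d/r,1+d/r]$, and $r=\lceil\frac{D+1}{D-1}\rceil d$ gives distortion at most $D$. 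No knowledge of the weights of $\C(G)$ is needed.
\item NO case: the image of $(\vec{0},\vec{0},\vec{1}_r)$ under any valid map has weight greater than $\gamma' d$, so the forward stretch exceeds $\gamma' d/r$. The duplicated copy of $G$ ensures that one of the independent codewords $(\vec{x}',\vec{0},\vec{0})$, $(\vec{0},\vec{x}',\vec{0})$ of weight $\lambda_1(\C)$ has a preimage of weight at least $\lambda_1(\C)$, so the backward stretch is at least $1$.
\end{itemize}

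The source-problem part is also wrong in ways that matter:
\begin{itemize}
\item Cheng--Wan (and Austrin--Khot) give \emph{deterministic} $\NP$-hardness for approximate $\MDP$; only Dumer--Micciancio--Sudan is randomized. The paper starts exactly from $\MDP$.
\item GMSS goes from $\MDP$ to $\NCP$, so it queries a nearest-codeword oracle, not a minimum-distance one.
\item ``Doubling a basis vector'' is meaningless over $\F_q$: it either kills the column or leaves the code unchanged. The code analogue (\cref{thm:gmss-for-codes}) instead deletes column $i$ and queries the targets $-j\vec{g}_i$ for every $j\in\F_q^\ast$. These $(q-1)n$ queries are what make the final reduction Cook rather than Karp, and what force $q\le\poly(n)$.
\item Most importantly, GMSS outputs instances of $\NCP_q^{1/\gamma}$, whose NO case carries the extra promise $d<\lambda_1(\C)/\gamma$. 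Plain $\NCP$ hardness does not give this. It is exactly what handles the NO subcase in which $(\vec{0},\vec{0},\vec{1}_r)$ is sent to a codeword with no $\vec{1}_r$ component.
\item Finally, \cref{lem:code-dist-basic-bounds} only says $1\le\D\le n^2$; it does not provide the two-sided bound you attribute to it.
\end{itemize}
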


As we formalize in \cref{prop:equiv-codes-scaling-up}, $\GapCDP$ with distortion threshold $D = 1$ is equivalent to $\LCE$ up to scaling. From this it follows that $\GapCDP$ with $D = 1$ is in $\NP$, but, because of a result of Petrank and Roth~\cite{article/PetrankR97} showing that $\LCE$ is in $\coAM$, it is not $\NP$-hard unless the polynomial hierarchy collapses.
In contrast, \cref{thm:intro-hardness-of-approx} asserts that if $D$ is just slightly larger than $1$ then $\GapCDP$ is $\NP$-hard even to approximate to within any constant factor $\gamma$.
For example, taking $D = 1.01$ and $\gamma = 1000$, \cref{thm:intro-hardness-of-approx} shows that it is $\NP$-hard to decide whether $\D(\C_1, \C_2) \leq 1.01$ or $\D(\C_1, \C_2) > 1010$ when one of these is promised to hold.
The result holds for codes $\C_1, \C_2 \subseteq \F_q^n$ for any field $\F_q$ with $q \leq \poly(n)$.

We further show in \cref{thm:sigma2-containment} that $\GapCDP$ is in the complexity class $\Sigma_2^\P$.
Interestingly, it is not clear that $\GapCDP$ is contained in $\NP$. 
This stands in contrast both to $\LCE$ and the Lattice Distortion Problem ($\LDP$), the analogous problem to $\GapCDP$ on lattices, which~\cite{conf/esa/BennettDS16} noted is in $\NP$.%
\footnote{Recall that a \emph{lattice} is the analog of a code over the integers or real numbers. Specifically, the lattice generated by a basis $B \in \R^{n \times k}$ with full column rank is $\lat(B) := \set{B\vec{z} : \vec{z} \in \Z^k}$.}
The main difference between $\LDP$ and $\GapCDP$ is that the $2 \to 2$ norm $\norm{T}$ of a real-valued matrix $T$ is efficient to compute (indeed, $\norm{T}$ is simply the largest singular value of $T$), whereas the problem of computing the $0 \to 0$ ``norm'' of a linear operator $T$ mapping one code to another is $\coNP$-hard even to approximate to within any constant factor! See \cref{cor:zznorm-conp-hard}.\footnote{The $0 \to 0$ norm of a linear operator acting on all of $\F_q^n$ (and not restricted restricted to a code/subspace) is also efficiently computable; see \cref{prop:zznorm-val}.}

We next give algorithms for $\CDP$. 
In \cref{lem:brute-force-alg}, we give and analyze a brute force, exact algorithm for $\CDP_q$ on $[n, k]_q$ codes, which runs in $O^*(q^{k(k+1)}) \approx q^{k^2}$ time---essentially the amount of time it takes to enumerate all generator matrices of a given $[n, k]_q$ code.\footnote{In this paper, we use $O^*(\cdot)$ to suppress polynomial factors in $n$ and $\log q$.} It is unclear how to get a faster exact algorithm, and so we turn to giving approximation algorithms.
We first give a single-exponential-time, $k^2$-approximation algorithm for $\CDP$ on $[n, k]_q$ codes.
\begin{restatable}{theorem}{approxalg} \label{thm:intro-approx-alg}
For any prime power $q$, there is an algorithm for $k^2$-$\CDP$ on $[n, k]_q$ codes that runs in $O^*(q^k)$ time and $\poly(n, \log q)$ space.
\end{restatable}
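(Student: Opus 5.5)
The plan follows the approach of Bennett, Dadush, and Stephens-Davidowitz for $\LDP$: compute a ``successive minima basis'' of each code, namely a basis $G = (\vec{g}_1, \ldots, \vec{g}_k)$ with $\norm{\vec{g}_i}_0 = \lambda_i(\C)$. Here $\lambda_i(\C)$ is the smallest $r$ such that the codewords of weight at most $r$ span a subspace of dimension at least $i$. The algorithm outputs the map $T$ sending $\vec{g}^{(1)}_i \mapsto \vec{g}^{(2)}_i$. To find such a basis, we enumerate all $q^k$ codewords $G\vec{x}$ and build the basis greedily: repeatedly take a minimum-weight codeword not in the span of those already chosen. The standard matroid/greedy argument shows this attains every $\lambda_i$. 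To keep the space polynomial, we do not store the list. Instead we run $k$ passes: in pass $i$ we re-enumerate $\F_q^k$ and keep only the lightest codeword outside the current span, checking membership by Gaussian elimination. This takes $O^*(k q^k) = O^*(q^k)$ time and $\poly(n,\log q)$ space.

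For the analysis, we upper bound $\D_T$ and lower bound $\D(\C_1,\C_2)$ in terms of the quantities $\lambda_i(\C_1), \lambda_i(\C_2)$.

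\textbf{Upper bound.} Let $\vec{x} = \sum_i a_i \vec{g}^{(1)}_i \neq \vec{0}$ and let $j$ be the largest index with $a_j \neq 0$. Then $\vec{x} \notin \mathrm{span}(\vec{g}^{(1)}_1,\ldots,\vec{g}^{(1)}_{j-1})$, so by the successive minima property $\norm{\vec{x}}_0 \ge \lambda_j(\C_1)$. We also have $\norm{T\vec{x}}_0 \le \sum_{i \le j} \lambda_i(\C_2) \le j\,\lambda_j(\C_2)$. So $\norm{T\vec{x}}_0/\norm{\vec{x}}_0 \le k \max_j \lambda_j(\C_2)/\lambda_j(\C_1)$. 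Applying the same argument to $T^{-1}$ gives $\norm{\vec{x}}_0/\norm{T\vec{x}}_0 \le k \max_j \lambda_j(\C_1)/\lambda_j(\C_2)$. Hence
\[
\D_T(\C_1,\C_2) \le k^2 \cdot \max_j \frac{\lambda_j(\C_2)}{\lambda_j(\C_1)} \cdot \max_j \frac{\lambda_j(\C_1)}{\lambda_j(\C_2)} .
\]

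\textbf{Lower bound.} For any $T'$ with $T'(\C_1)=\C_2$ we show $\D_{T'} \ge \max_j \frac{\lambda_j(\C_2)}{\lambda_j(\C_1)} \cdot \max_j \frac{\lambda_j(\C_1)}{\lambda_j(\C_2)}$. Take $i$ independent vectors of $\C_1$ with weights at most $\lambda_i(\C_1)$. Their images under $T'$ are independent, so at least one has weight $\ge \lambda_i(\C_2)$. This gives $\max_{\vec{x}} \norm{T'\vec{x}}_0/\norm{\vec{x}}_0 \ge \lambda_i(\C_2)/\lambda_i(\C_1)$ for every $i$. The symmetric argument with $T'^{-1}$ gives $\min_{\vec{x}} \norm{T'\vec{x}}_0/\norm{\vec{x}}_0 \le \lambda_j(\C_2)/\lambda_j(\C_1)$ for every $j$. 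Dividing and optimizing over $i,j$ yields the bound. Combining it with the upper bound gives $\D_T \le k^2 \D(\C_1,\C_2)$.

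The main obstacle is the upper bound on $\norm{\vec{x}}_0$ from below. It requires that every codeword outside the span of the first $j-1$ basis vectors has weight at least $\lambda_j$, which is exactly where we need a genuine successive minima basis rather than just independent short vectors. The greedy construction provides this: if a lighter such codeword existed, greedy would have picked it at step $j$ or earlier. I'd verify this carefully, along with the nonzero-coefficient bookkeeping over $\F_q$, which is harmless because scaling does not change Hamming weight.
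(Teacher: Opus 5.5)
Your proposal is correct and follows essentially the same route as the paper. It computes successive minima bases greedily in $k$ passes, proves the upper bound $\D_T(\C_1,\C_2)\le k^2\, M(\C_1,\C_2)M(\C_2,\C_1)$ via the last nonzero message coordinate, and proves the lower bound $\D(\C_1,\C_2)\ge M(\C_1,\C_2)M(\C_2,\C_1)$ via linear independence of images of short codewords. Your justification of $\norm{\vec{x}}_0\ge\lambda_j(\C_1)$ from the greedy choice at step $j$ is sound, and is slightly more careful than the paper's appeal to linear independence of $\vec{x},\vec{v}_1,\ldots,\vec{v}_{j-1}$ alone, which only gives $\max\{\norm{\vec{x}}_0,\lambda_{j-1}(\C_1)\}\ge\lambda_j(\C_1)$ and so does not settle the case $\lambda_{j-1}(\C_1)=\lambda_j(\C_1)$.
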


We contrast the approximation factor of $k^2$ in \cref{thm:intro-approx-alg} with the trivial bound bound of $\D_T(\C_1, \C_2) \leq n^2$, which holds for any linear map $T$ such that $T(\C_1) = \C_2$ for $[n, k]_q$ codes $\C_1, \C_2$; see \cref{lem:code-dist-basic-bounds}. Since $\D_T(\C_1, \C_2) \geq 1$ also always holds, this implies a trivial $n^2$-approximation algorithm for $\GapCDP$.
We do not know of a better bound (even a non-constructive one), although we show an upper bound of $n$ when allowing for \emph{non-linear} maps between the codes; see \cref{lem:non-linear-dist-ub}.
However, even in this case it is not clear that \emph{finding} such a map would be at all efficient, and we note that \cref{thm:intro-approx-alg} holds for approximate \emph{search} $\CDP$.

For an $[n, k]_q$ code $\C$ and $i \in [k]$, we define the \emph{$i$th successive minimum} $\lambda_i(\C)$ of $\C$ to be the minimum value $r \in \set{1, \ldots, n}$ such that $\C$ contains $i$ linearly independent vectors of Hamming weight at most $r$. In particular, $\lambda_1(\C)$ is the minimum distance of $\C$.
We get a strengthening of \cref{thm:intro-approx-alg} when the input codes are binary and when all of the successive minima of both of the input codes are the same.\footnote{We note that all of the successive minima of a code being the same is equivalent to the code being generated by its minimum-weight non-zero codewords.}

\begin{restatable}{theorem}{approxalgsuccmin} \label{thm:intro-approx-alg-succ-min-same}
There is an algorithm for $\big(\frac{2k + 1}{3}\big)^2$-$\CDP$ on $[n, k]_2$ codes $\C_1, \C_2$ satisfying $\lambda_1(\C_1) = \cdots = \lambda_k(\C_1) = \lambda_1(\C_2) = \cdots = \lambda_k(\C_2)$ that runs in $O^*(2^k)$ time and $\poly(n)$ space.
\end{restatable}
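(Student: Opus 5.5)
The plan mirrors the proof of \cref{thm:intro-approx-alg}, but uses a sharper bound on the expansion of a map sending a basis of short vectors to a basis of short vectors. Let $\lambda$ be the common value of all successive minima of $\C_1$ and $\C_2$. The algorithm spends $O^*(2^k)$ time enumerating all codewords of each code, keeping only $\poly(n)$ space. From this it greedily extracts bases $B_1 = (\vec{b}_1,\ldots,\vec{b}_k)$ of $\C_1$ and $B_2 = (\vec{b}_1',\ldots,\vec{b}_k')$ of $\C_2$ consisting of weight-$\lambda$ codewords; these exist because each code is generated by its minimum-weight codewords. It outputs the linear map $T$ with $T\vec{b}_i = \vec{b}_i'$.

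To analyze $T$, note that every nonzero $\vec{x}\in\C_1$ has $\norm{\vec{x}}_0 \ge \lambda$, and likewise for nonzero vectors of $\C_2$. We need an upper bound on $\norm{T\vec{x}}_0/\norm{\vec{x}}_0$, and by symmetry (using $T^{-1}$) an upper bound on $\norm{\vec{x}}_0/\norm{T\vec{x}}_0$. Once both ratios are at most $\frac{2k+1}{3}$, we get $\D_T \le \big(\frac{2k+1}{3}\big)^2$. Combined with $\D(\C_1,\C_2)\ge 1$ from \cref{lem:code-dist-basic-bounds}, this yields the approximation factor.

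\textbf{Key step.} Write $\vec{x} = \sum_{i\in S}\vec{b}_i$ over $\F_2$ with $|S| = m \ge 1$. By the triangle inequality, $\norm{T\vec{x}}_0 \le m\lambda$. We want to show $\norm{\vec{x}}_0 \ge \frac{3m}{2m+1}\lambda$ for $m \ge 2$, and even $\norm{\vec{x}}_0 \ge \frac{3}{2}\lambda$ when $m\ge2$ is convenient. This gives $\norm{T\vec{x}}_0/\norm{\vec{x}}_0 \le \frac{2m+1}{3} \le \frac{2k+1}{3}$, with the case $m=1$ giving ratio $1$.

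The route to the lower bound is an inclusion--exclusion style counting argument over $\F_2$. Every nonzero sum of a subset of $S$ is a nonzero codeword and hence has weight at least $\lambda$. For each coordinate $j$, let $c_j$ be the number of $i\in S$ with $(\vec{b}_i)_j = 1$. Then $\sum_j c_j = m\lambda$. Averaging the weights of suitable subset sums, for instance all pairs $\vec{b}_i + \vec{b}_{i'}$ together with $\vec{x}$ and $\vec{x} + \vec{b}_i$, lets us compare per-coordinate contributions. A coordinate contributes to $\vec{x}$ exactly when $c_j$ is odd. The goal is to find nonnegative weights on the inequalities $\norm{\sum_{i\in A}\vec{b}_i}_0 \ge \lambda$ so that, coordinatewise, the combination is dominated by $(2m+1)[c_j \text{ odd}] + (\text{something}) \cdot c_j$. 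This leads to the bound $3m\lambda \le (2m+1)\norm{\vec{x}}_0$.

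A cleaner first attempt uses the family $\{\vec{x}+\vec{b}_i : i\in S\}$ together with $\vec{x}$. Each $\vec{x}+\vec{b}_i$ has weight at least $\lambda$. Summing gives
\[
m\lambda \le \sum_i \norm{\vec{x}+\vec{b}_i}_0 = \sum_j \big(m[c_j\text{ odd}] + c_j\cdot(\pm1)\big),
\]
where the sign is $-1$ on odd coordinates and $+1$ on even ones. This expresses things in terms of $\norm{\vec{x}}_0$ and the mass of $\sum c_j$ on even versus odd coordinates. Combining it with $\sum_j c_j = m\lambda$ should give the claimed $\frac{2m+1}{3}$.

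The main obstacle is getting exactly this constant. I would solve the small LP over the coordinate "types" $c_j\in\{0,\ldots,m\}$ to find the right weighting, and I would check tightness via the paper's matching example. The reverse direction is identical with the roles of $\C_1$ and $\C_2$ swapped.
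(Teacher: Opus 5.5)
\textbf{There is a genuine gap: the key step is false, and so is the one-sided bound it is meant to give.}

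Your lower bound $\norm{\vec{x}}_0 \ge \frac{3m}{2m+1}\lambda$ already fails at $m=2$. Take the even-weight code on four coordinates, padded to length $6$. Then $\vec{b}_1=\vec{e}_1+\vec{e}_2$, $\vec{b}_2=\vec{e}_2+\vec{e}_3$, $\vec{b}_3=\vec{e}_3+\vec{e}_4$ is a basis of minimum-weight codewords with $\lambda=2$, yet $\norm{\vec{b}_1+\vec{b}_2}_0=2<\frac{6}{5}\lambda$. The paper's own tight example also refutes it: there $2\ell+1$ basis vectors sum to a vector of weight exactly $\lambda$.

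No choice of LP weights can repair this, because the forward ratio $\max_{\vec{x}\in\C_1\setminus\{\vec{0}\}}\norm{T\vec{x}}_0/\norm{\vec{x}}_0$ can really be as large as $k$. Let $\C_2\subseteq\F_2^6$ be generated by $\vec{w}_1=\vec{e}_1+\vec{e}_2$, $\vec{w}_2=\vec{e}_3+\vec{e}_4$, $\vec{w}_3=\vec{e}_5+\vec{e}_6$. Both codes are $[6,3]_2$ codes with all successive minima equal to $2$, and your greedy step may output exactly these bases. Then $T$ sends $\vec{b}_1+\vec{b}_2+\vec{b}_3=\vec{e}_1+\vec{e}_4$ to $\vec{1}_6$, a ratio of $3=k>\frac{7}{3}=\frac{2k+1}{3}$. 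Here $\D_T=3$ only because the reverse ratio never exceeds $1$. So $\norm{T|_{\C_1}}_{0\to0}$ and $\norm{T^{-1}|_{\C_2}}_{0\to0}$ cannot each be bounded by $\frac{2k+1}{3}$; the bound holds only for their product.

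\textbf{The missing idea is to bound the two directions jointly.} Write
$\D_T(\C_1,\C_2)=\max_{\vec{a},\vec{b}}\frac{\norm{G_2\vec{a}}_0\norm{G_1\vec{b}}_0}{\norm{G_1\vec{a}}_0\norm{G_2\vec{b}}_0}$
over nonzero messages, and split on $\norm{\vec{a}}_0+\norm{\vec{b}}_0$.

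If $\norm{\vec{a}}_0+\norm{\vec{b}}_0 \le \frac{2(2k+1)}{3}$, the triangle inequality and AM--GM bound the numerator by $\lambda^2\norm{\vec{a}}_0\norm{\vec{b}}_0\le\lambda^2\big(\frac{2k+1}{3}\big)^2$. The denominator is at least $\lambda^2$.

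Otherwise $\supp(\vec{a})\cap\supp(\vec{b})$ has more than $\frac{k+2}{3}$ elements, so over $\F_2$ we get $\norm{\vec{a}-\vec{b}}_0<\frac{2k-2}{3}$. Combining $\norm{G_1\vec{b}}_0\le\norm{G_1\vec{a}}_0+\lambda\norm{\vec{a}-\vec{b}}_0$ with $\norm{G_1\vec{a}}_0\ge\lambda$ gives $\norm{G_1\vec{b}}_0/\norm{G_1\vec{a}}_0<\frac{2k+1}{3}$. The $G_2$ factor is handled identically.

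The rest of your proposal is fine and matches the paper: the greedy computation of minimum-weight bases in $O^*(2^k)$ time and polynomial space, and the use of $\D(\C_1,\C_2)\ge 1$ to turn the upper bound on $\D_T$ into an approximation ratio.
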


In fact, the \emph{algorithm} used to prove both \cref{thm:intro-approx-alg,thm:intro-approx-alg-succ-min-same} is the same, but all of the codes' successive minima being the same allows for a tighter analysis of its approximation factor.
\fullornot{
Furthermore, in \cref{thm:tight-approx} we show that our analysis of this algorithm is tight in the setting of \cref{thm:intro-approx-alg-succ-min-same}.
Although we give \cref{thm:intro-approx-alg-succ-min-same} only for $\F_2$, one can show that it extends to $\F_q$ for arbitrary prime powers $q$ with a more complicated proof; see~\cite{morrell2026adventures}.}
{In fact, the analysis of \cref{thm:intro-approx-alg-succ-min-same} is tight. However, due to space constraints, we defer the proof to the full version of our paper. Additionally, we note that although we give \cref{thm:intro-approx-alg-succ-min-same} only for $\F_2$, one can show that it extends to $\F_q$ for arbitrary prime powers $q$ with a more complicated proof; see~\cite{morrell2026adventures}.}

We note that there are a number of natural $[n, k]_q$ codes $\C$ with $\lambda_1(\C) = \lambda_k(\C)$. For example, $\C := \F_2^n$, the binary code $\C := \set{\vec{x} \in \F_2^n : \iprod{\vec{x}, \vec{1}} \equiv 0 \bmod 2}$ defined by the single parity-check constraint $\vec{1}$, and Reed-Solomon codes $\C \subseteq \F_q^n$ all have this property.\footnote{Recall that the codewords in a Reed-Solomon code $\C \subseteq \F_q^n$ of dimension $k$ are evaluations of polynomials $p(x) \in \F_q[x]$, $\deg(p) < k$ on $n$ distinct elements $\alpha_1, \ldots, \alpha_n \in \F_q$. It is not hard to show that codewords $p_S(\alpha_i)_{i=1}^n$ for polynomials $p_S(x) := \prod_{s \in S} (x - s)$ with $S \subset \set{\alpha_1, \ldots, \alpha_n}$, $\card{S} = k - 1$ are minimum-weight non-zero codewords in such a Reed-Solomon code $\C$. Furthermore, one can show that there are $k$ linearly independent such codewords by considering the codewords induced by $p_{S_i}$ for $S_i := \set{\alpha_1, \ldots, \alpha_k} \setminus \set{\alpha_i}$ for $i \in [k]$.}
Additionally, we construct an $[n, k]_2$ code $\C \subseteq \F_2^n$ with $\lambda_1(\C) = \lambda_k(\C)$ in \cref{thm:tight-approx} to show the tightness of \cref{thm:intro-approx-alg-succ-min-same}.

\subsection{Overview of Techniques}
\label{sec:intro-overview}

At a high level, the proofs of \cref{thm:intro-hardness-of-approx,thm:intro-approx-alg} follow along similar lines to those used to show upper and lower bounds for the Lattice Distortion Problem ($\LDP$) in~\cite{conf/esa/BennettDS16}.
However, they also require a number of additional concepts, which may be of independent interest.
The proof of \cref{thm:intro-approx-alg-succ-min-same} goes along similar lines to the proof of \cref{thm:intro-approx-alg}, but it is substantially more nuanced.

\paragraph{Approximation algorithms.}
We prove a constructive upper bound using what we call \emph{successive minima bases} for codes.
We define a successive minima basis of an $[n, k]_q$ code $\C$ to be a generator matrix $G := (\vec{g}_1, \ldots, \vec{g}_k) \in \F_q^{n \times k}$ such that $\norm{\vec{g}_i}_0 = \lambda_i(\C)$, i.e., the columns of $G$ are linearly independent and their Hamming weights achieve the successive minima of $\C$.
Every code has a successive minima basis (essentially because any $k$ linearly independent vectors in $\C$ form a basis of $\C$) and they can be computed in roughly $q^k$ time; see \cref{lem:succ-min-basis}.
In contrast, not every lattice $\lat$ has a successive minima basis.%
\footnote{\label{foot:lattice-succ-min-counter}As an explicit example, consider the lattice $\lat = \lat(B) \subset \R^5$ generated by the basis
\[
B = (\vec{b}_1, \ldots, \vec{b}_5) := 
\begin{pmatrix}
1 & 0 & 0 & 0 & 1/2 \\
0 & 1 & 0 & 0 & 1/2 \\
0 & 0 & 1 & 0 & 1/2 \\
0 & 0 & 0 & 1 & 1/2 \\
0 & 0 & 0 & 0 & 1/2
\end{pmatrix} \ \text{.}
\]
One can check that $\vec{e}_i \in \lat$ for $i = 1, \ldots, 5$, and that every vector in $\lat$ has norm at least $1$. So, $\lambda_1(\lat) = \cdots = \lambda_5(\lat) = 1$. However, any basis of $\lat$ must contain $\pm \vec{b}_5$, which has norm greater than $1$.}
Although successive minima are widely used in the study of lattices, to the best of our knowledge they have not previously been explicitly used for codes.

Our approximation algorithm for $\CDP$ that leads to \cref{thm:intro-approx-alg,thm:intro-approx-alg-succ-min-same} is simply to compute successive minima bases $G_1 := (\vec{g}_1^{(1)}, \ldots, \vec{g}_k^{(1)}) \in \F_q^{n \times k}$ and $G_2 := (\vec{g}_1^{(2)}, \ldots, \vec{g}_k^{(2)}) \in \F_q^{n \times k}$ of the respective input $[n, k]_q$ codes $\C_1$ and $\C_2$, and to take the mapping $T : \vec{g}_i^{(1)} \mapsto \vec{g}_i^{(2)}$ that they induce.

We show that such a mapping $T$ is a $k^2$-approximation, proving \cref{thm:intro-approx-alg}, as follows.
For $[n, k]_q$ codes $\C_1$ and $\C_2$, define $M(\C_1, \C_2) := \max_{i \in [k]} \lambda_i(\C_1)/\lambda_i(\C_2)$. (An analogous quantity for lattices appeared in~\cite{conf/esa/BennettDS16}.)
We show that the mapping $T$ defined above in terms of successive minima bases satisfies $\D_T(\C_1, \C_2) \leq k^2 \cdot M(\C_1, \C_2) M(\C_2, \C_1)$.
Furthermore, we show that \emph{any} linear map $T'$ such that $T' \C_1 = \C_2$ must satisfy $\D_{T'}(\C_1, \C_2) \geq M(\C_1, \C_2) M(\C_2, \C_1)$, and therefore $\D(\C_1, \C_2) \geq M(\C_1, \C_2) M(\C_2, \C_1)$.
Combining these two bounds, we have that
\[
M(\C_1, \C_2) M(\C_2, \C_1) \leq \D(\C_1, \C_2) \leq \D_T(\C_1, \C_2) \leq k^2 \cdot M(\C_1, \C_2) M(\C_2, \C_1) \ \text{,}
\]
and therefore that $T$ is $k^2$-approximately optimal, i.e., that $\D_T(\C_1, \C_2) \leq k^2 \cdot \D(\C_1, \C_2)$.

Showing the refined result in \cref{thm:intro-approx-alg-succ-min-same} for codes over $\F_2$ all of whose successive minima are the same requires more work.
We assume that $G_1$ and $G_2$ are successive minima bases of the input $[n, k]_2$ codes (so that each vector in $G_1$ and $G_2$ has Hamming weight $\lambda := \lambda_1(\C_1) = \cdots = \lambda_k(\C_1) = \lambda_1(\C_2) = \cdots = \lambda_k(\C_2)$), and we let $T$ be a mapping such that $TG_1 = G_2$. We then analyze
\[
\D_T(\C_1,\C_2) := \max_{\vec{a} \in \F_2^k}\Big(\frac{\norm{G_2\vec{a}}_0}{\norm{G_1\vec{a}}_0}\Big)\max_{\vec{b} \in \F_2^k}\Big(\frac{\norm{G_1\vec{b}}_0}{\norm{G_2\vec{b}}_0}\Big) = \max_{\vec{a},\vec{b} \in \F_2^k}\Big(\frac{\norm{G_1\vec{b}}_0}{\norm{G_1\vec{a}}_0}\frac{\norm{G_2\vec{a}}_0}{\norm{G_2\vec{b}}_0}\Big) 
\]
We upper bound the quantity $\frac{\norm{G_2\vec{a}}_0\norm{G_1\vec{b}}_0}{\norm{G_1\vec{a}}_0\norm{G_2\vec{b}}_0}$ on the right-hand side for non-zero $\vec{a}, \vec{b} \in \F_q^k$ in cases.
In each case, we make use of the fact that for non-zero $\vec{x}$, $\lambda \leq \norm{G_1 \vec{x}}_0 \leq \lambda \norm{\vec{x}}_0$, where the upper bound follows from the triangle inequality. Similar bounds hold for $G_2$.

If $\norm{\vec{a}}_0 + \norm{\vec{b}}_0 \leq \big(\frac{2k + 1}{3}\big)$, then by the AM-GM inequality $\norm{\vec{a}}_0 \norm{\vec{b}}_0 \leq \big(\frac{2k + 1}{3}\big)^2 \lambda^2$.
So,
\[
\frac{\norm{G_2\vec{a}}_0\norm{G_1\vec{b}}_0}{\norm{G_1\vec{a}}_0\norm{G_2\vec{b}}_0} \leq \frac{\big(\frac{2k + 1}{3})^2 \lambda^2}{\lambda^2} = \Big(\frac{2k + 1}{3}\Big)^2 \ \text{.}
\]
On the other hand, if $\norm{\vec{a}}_0 + \norm{\vec{b}}_0 > \big(\frac{2k + 1}{3} \big)$ then the supports of $\vec{a}$ and $\vec{b}$ have large intersection, which leads to a similar upper bound on $\frac{\norm{G_2\vec{a}}_0\norm{G_1\vec{b}}_0}{\norm{G_1\vec{a}}_0\norm{G_2\vec{b}}_0}$.

\paragraph{Hardness of approximation.}
To show hardness of approximation for $\GapCDP$, we roughly follow the approach used in~\cite{conf/esa/BennettDS16} for showing hardness of approximation for $\LDP$.
We give a pair of reductions, the first of which reduces from the Minimum Distance Problem ($\MDP$) to a promise variant of the Nearest Codeword Problem ($\NCP$).
This first reduction is an adaptation of the reduction from the Shortest Vector Problem ($\GapSVP$) to the Closest Vector Problem ($\GapCVP$) on lattices from~\cite{journals/ipl/GoldreichMSS99}, and it is likely of independent interest.\footnote{We note that $\MDP$ and $\NCP$ are the coding problems analogous to the lattice problems $\GapSVP$ and $\GapCVP$, respectively. Furthermore, we note that while~\cite{journals/jcss/AroraBSS97} showed hardness of approximation for $\NCP$, it did not show hardness of the variant that we use in the present work, which has an added promise in the NO case. Finally, we note that our reduction has the advantage of being directly from $\MDP$, and of being dimension- and approximation-preserving.}

The second reduction is from $\NCP$ to $\GapCDP$. We recall that the input to $\NCP$ is a pair consisting of a generator matrix $G \in \F_q^{n \times k}$ of a code $\C$ and a target vector $\vec{t} \in \F_q^n$, and the goal is to decide whether the minimum distance between $\vec{t}$ and a codeword in $\C$ (i.e.,
$\min_{\vec{x} \in \C}\|\vec{t} - \vec{x}\|_0$) is at most some threshold.
The idea behind our hardness reduction is to construct the following pair of generator matrices, corresponding to an instances of $\CDP$:
\begin{equation} \label{eq:intro-ncp-to-cdp-attempt}
G_1 := \begin{pmatrix}
G & \vec{0} \\
0 & \vec{1}_r
\end{pmatrix} \ \text{,} \qquad
G_2 := \begin{pmatrix}
G & -\vec{t} \\
0 & \vec{1}_r
\end{pmatrix} \ \text{.}
\end{equation}
Here $r \in \Z^+$. Let $\C_1 := \C(G_1)$ and $\C_2 := \C(G_2)$.

If $\min_{\vec{c} \in \C}\|\vec{c} - \vec{t}\|_0$ is small then $\D(\C_1, \C_2)$ is small.
If on the other hand $\min_{\vec{c} \in \C}\|\vec{c} - \vec{t}\|_0$ is large, then we would \emph{hope} is that $\D(\C_1, \C_2)$ is large.
However, this does not clearly work directly, and we need to modify the reduction.
(The attempted reduction sketched here in \cref{eq:intro-ncp-to-cdp-attempt} is the natural analog of the analogous hardness reduction in~\cite{conf/esa/BennettDS16} for $\LDP$; the modified reduction is slightly more involved.)

Because $\MDP$ is known to be $\NP$-hard to approximate to within any constant~\cite{journals/tit/DumerMS03,journals/tit/ChengW12}, this pair of reductions implies that $\CDP$ is also $\NP$-hard to approximate to within any constant.

\paragraph{The complexity of $\GapCDP$ and its relationship to the $0 \to 0$ norm.}
We note that the complexity of $\GapCDP$ seems closely related to the complexity of computing what we call the matrix \emph{$0 \to 0$ norm restricted to a subspace}.
The matrix $0 \to 0$ norm%
\footnote{In fact, $\norm{\cdot}_{0 \to 0}$ is not a matrix norm just as Hamming weight is not a vector norm (even over the real numbers).
It does not satisfy the absolute homogeneity property $\norm{\alpha M}_{0 \to 0} = \abs{\alpha} \norm{M}_{0 \to 0}$ for real-valued matrices $M$ and scalars $\alpha$, and the magnitude of a finite field element is not even defined.
However, the $0 \to 0$ norm satisfies the other axioms of a matrix norm, including the triangle inequality ($\norm{A + B}_{0 \to 0} \leq \norm{A}_{0 \to 0} + \norm{B}_{0 \to 0}$), and it is sub-multiplicative (i.e., $\norm{AB}_{0 \to 0} \leq \norm{A}_{0 \to 0} \norm{B}_{0 \to 0}$). So, we abuse notation slightly and refer to $\norm{\cdot}_{0 \to 0}$ as a matrix norm.}
\[
\zznorm{T} := \max_{\vec{x} \in \F_q^n \setminus \set{\vec{0}}} \frac{\norm{M\vec{x}}_0}{\norm{\vec{x}}_0} \ \text{.}
\]
of a matrix $T \in \F_q^{m \times n}$ is also efficiently computable; see \cref{prop:zznorm-val}. However, the matrix $0 \to 0$ norm restricted to a subspace $\C \subseteq \F_q^n$ (i.e., code $\C$), defined as,
\[
\zznorm{T_{|\C}} := \max_{\vec{x} \in \C \setminus \set{\vec{0}}} \frac{\norm{T\vec{x}}_0}{\norm{\vec{x}}_0} \ \text{.}
\]
is $\coNP$-hard even to approximate; see \cref{cor:zznorm-conp-hard}.
We note that for a linear map $T$ with $T \C_1 = \C_2$ for codes $\C_1, \C_2$, $\D_T(\C_1, \C_2) = \zznorm{T_{|\C_1}} \cdot \zznorm{(T^{-1})_{|\C_2}}$.
The most obvious approach to showing that $\GapCDP$ is in $\NP$ would be to use a low-distortion map $T$ between the input codes as a witness, and to have the verifier check that $\norm{T_{|\C_1}}_{0 \rightarrow 0} \cdot \norm{(T^{-1})_{|\C_2}}_{0 \rightarrow 0} \leq D$. However, because computing the matrix $0 \to 0$ norm restricted to a subspace is $\coNP$-hard, this approach does not work.
(We note that this $\coNP$-hardness does not obviously translate to $\GapCDP$.)
As a result, the best upper bound that we get on $\GapCDP$ is to show containment in $\Sigma_2^{\P}$. (As we also note elsewhere, it is a tantalizing question whether one can improve our $\NP$-hardness result for $\GapCDP$ to $\Sigma_2^{\P}$-hardness, and therefore show that $\GapCDP$ is $\Sigma_2^{\P}$-complete.)

\subsection{An Approach for a Faster Approximation Algorithm}
\label{sec:related-work}

A natural direction for making our approximation algorithm in \cref{thm:intro-approx-alg} more efficient would be to use reduced (but not necessarily optimal) bases in place of successive minima bases.
Indeed, this is how~\cite{conf/esa/BennettDS16} achieves a time-approximation tradeoff in its algorithm for $\LDP$, which includes a polynomial-time approximation algorithm in one regime.
(On the other hand, our approximation algorithm in \cref{thm:intro-approx-alg} only runs in polynomial time for an extreme parameter regime---when $k = O(\log_q n)$.)
And, a recent line of work has studied basis reduction algorithms for codes~\cite{journals/tit/Debris-AlazardD22,conf/approx/GhentiyalaS24}, including analogs of basis reduction algorithms for lattices.
However, it is not clear how to leverage bases reduced in any of the senses they consider to get an approximation algorithm running in a reasonable amount of time and with a reasonable approximation factor.

\subsection{Open Questions}
\label{sec:open-questions}

Our work leaves open several interesting questions. On the complexity side, our results showing that $\GapCDP$ is both $\NP$-hard and in $\Sigma_2^\P$ begs the question of whether $\GapCDP$ is also $\Sigma_2^\P$-hard, and thus $\Sigma_2^\P$-complete. Moreover, as part of the inspiration for this work comes from the analogous Lattice Distortion Problem ($\LDP$)~\cite{conf/esa/BennettDS16}, it is natural to ask whether (approximate) $\CDP$ reduces to $\LDP$.
We note that a fairly simple reduction from $\LCE$ to the Lattice Isomorphism Problem ($\LIP$)---problems that essentially correspond to the distortion $D = 1$ cases of $\CDP$ and $\LDP$, respectively---was given in~\cite{regev14,BennettWin2024}. 
On the algorithmic side, it is natural to ask whether there is a better exact algorithm than the brute force, roughly $q^{k^2}$-time algorithm that we give, and whether there is a more efficient approximation algorithm for $\CDP$ than our roughly $q^k$. Finally, we ask given whether it is possible to build cryptography whose security guarantee rests on the hardness of approximate $\CDP$. This seems particularly natural given the role of $\LCE$ both as security assumption for the LESS cryptosystem~\cite{conf/africacrypt/BiasseMPS20}, and as a closely related problem to the McEliece cryptosystem~\cite{mceliece78,classic-mceliece}.\footnote{One can check that the public and private generator matrices in McEliece generate linearly equivalent codes. However, it is not clear that breaking McEliece reduces to solving $\LCE$ or vice-versa.}

\full{
\subsection{Acknowledgments}
We thank Alexander Golovnev and Noah Stephens-Davidowitz for useful conversations~\cite{golovnev-stephens-davidowitz-personal-comm-26}, and in particular for allowing us to include \cref{thm:MDPtoGapNormReduction} about the $\coNP$-hardness of approximating the matrix $0 \to 0$ norm on subspaces in this work.
We also thank the anonymous APPROX reviewers for helpful comments, and in particular for identifying an incorrect inequality (which is now fixed).
}

\section{Preliminaries}
\label{sec:prelims}

We define a \emph{monomial matrix} $M$ to be the product $M = DP$ of a non-singular diagonal matrix $D$ and a permutation matrix $P$. We denote the set of $n \times n$ monomial matrices over a field $\F$ as $\M_n(\F)$.

\subsection{Codes}
A \emph{linear code} is a linear subspace $\C \subseteq \F^n$, where $\F$ is field.  Generally, $\F$ is the finite field $\F_q$ for some prime power $q$, and when $q=2$ (as is common in practice) the code is a  \emph{binary code}. A linear code $\C$ is often characterized as an $[n,k]$ or $[n,k,d]_q$ code, where $q$ indicates the order of the field $\F_q$ that the code is taken over, $k$ is the dimension of the code, and $d$ is the minimum Hamming distance of $\C$, i.e. 
$$d := \min_{\substack{\vec{x},\vec{y} \in \C\\\vec{x} \neq \vec{y}}} \norm{\vec{x}-\vec{y}}_0 = \min_{\vec{x} \in \C\setminus\{\vec{0}\}}\norm{\vec{x}}_0.$$
We also call $n$ the \emph{block length} of the code.

An $[n,k]$ code $\C \subseteq \F^n$ is usually specified by a \emph{generator matrix} $G \in \F^{n \times k}$ with linearly independent columns.  The code is then the span of the columns of $G$, and any \emph{codeword} $\vec{x} \in \C$ can be expressed as $G\vec{m}$ for some \emph{message} $\vec{m} \in \F^k$.  Likewise, each message $\vec{m} \in \F^k$ corresponds to a unique codeword $G\vec{m} \in \C$.

We will also find it useful to refer to the \emph{support} of a vector $\vec{x} \in \F^k$.  Denoted $\supp(\vec{x})$, the support of $\vec{x}$ is the set of indices $i$ for which $\vec{x}_i \neq 0$.  The Hamming weight can then be expressed as $|\supp(\vec{x})|$, the size of the support.

\begin{definition}[Successive Minima for Codes]
    Let $n,k \in \Z^+, k \le n$, let $q$ be a prime power, and let $\C$ be an $[n,k]_q$ code. For $1 \le i \le k$, the \emph{$i$th successive minimum} of $\C$, denoted $\lambda_i(\C)$, is defined as the minimum value $r \in \Z^+$ such that $\C$ contains at least $i$ linearly independent codewords with Hamming weight at most $r$.
\end{definition}
We note that $\lambda_1(\C)$ is the minimum distance $d$ of $\C$.
We emphasize again that successive minima are well-studied for lattices, but do not seem to have been studied much for codes.
We will make use of the following definition, which formalizes one notion of optimal bases for codes.

\begin{definition}[Successive Minima Basis]
    Let $n,k \in \Z^+, k \le n$, let $q$ be a prime power, and let $\C$ be an $[n,k]_q$ code. A generator matrix $G = (\vec{v}_1,\dots,\vec{v}_k)$ of $\C$ is a \emph{successive minima basis} of $\C$ if $\norm{\vec{v}_i}_0 = \lambda_i(\C)$ for all $1 \le i \le k$.
\end{definition}

We next give an algorithm for computing successive minima bases, which in particular implies that such bases always exist.

\begin{lemma} \label{lem:succ-min-basis}
    Let $n,k \in \Z^+, k \le n$, let $q$ be a prime power, and let $\C$ be an $[n,k]_q$ code.  Then there exists a $O^*(q^k)$-time, polynomial space algorithm for computing a sucessive minima basis of $\C$. In particular, such a basis of $\C$ always exists.
\end{lemma}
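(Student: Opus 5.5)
The plan is a greedy algorithm, as in matroid theory: enumerate codewords of $\C$ in order of increasing Hamming weight and keep each one that is linearly independent of those already kept. Since $\C$ has only $q^k$ codewords, we cannot afford to sort them all in memory within polynomial space. Instead we make $n$ passes, one for each target weight $w = 1, \dots, n$. In pass $w$ we enumerate all messages $\vec{m} \in \F_q^k$ in a fixed order, compute $\vec{c} = G\vec{m}$ from an input generator matrix $G$, and consider $\vec{c}$ only if $\norm{\vec{c}}_0 = w$. We append such a $\vec{c}$ to the current list $\vec{v}_1, \dots, \vec{v}_j$ exactly when it is not in their span; this is checked by Gaussian elimination in $\poly(n, \log q)$ time. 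We stop once $k$ vectors are collected. The total time is $O(n \cdot q^k \cdot \poly(n,\log q)) = O^*(q^k)$. The space is polynomial, since we store only a counter, the current $\vec{m}$, and at most $k$ vectors of length $n$.

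Correctness reduces to showing $\norm{\vec{v}_i}_0 = \lambda_i(\C)$ for each $i$; the output vectors are independent by construction, so they form a generator matrix once there are $k$ of them. The vectors are chosen in nondecreasing order of weight. First, $\norm{\vec{v}_i}_0 \ge \lambda_i(\C)$, because $\vec{v}_1, \dots, \vec{v}_i$ are $i$ independent codewords of weight at most $\norm{\vec{v}_i}_0$. For the reverse inequality, take $i$ independent codewords $\vec{u}_1,\dots,\vec{u}_i$ of weight at most $\lambda_i(\C)$. By the time the algorithm has finished all passes with $w \le \lambda_i(\C)$, every $\vec{u}_\ell$ has been examined and either selected or found to lie in the span of previously selected vectors. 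So that span contains $\mathrm{span}(\vec{u}_1,\dots,\vec{u}_i)$, which has dimension $i$. Hence at least $i$ vectors of weight at most $\lambda_i(\C)$ have been selected, and therefore $\norm{\vec{v}_i}_0 \le \lambda_i(\C)$.

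Because $\C$ itself spans a $k$-dimensional space, the same argument guarantees the algorithm collects exactly $k$ vectors, which establishes existence. The one step that needs care is the span-containment argument: the greedy procedure never skips a vector that could increase the rank at the current weight level. This is the standard matroid exchange argument and causes no real difficulty. The passes-by-weight structure is what keeps the space polynomial rather than $q^k$.
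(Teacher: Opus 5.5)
Your proof is correct and is essentially the paper's greedy argument. The paper performs $k$ full scans of $\C$, the $i$th picking a minimum-weight codeword outside $\mathrm{span}(\vec{v}_1,\dots,\vec{v}_{i-1})$, whereas you perform $n$ weight-by-weight scans; both rest on the same exchange fact (an $(i-1)$-dimensional span cannot contain $i$ independent codewords of weight at most $\lambda_i(\C)$) and the same monotonicity argument for $\norm{\vec{v}_i}_0 \ge \lambda_i(\C)$, and both give $O^*(q^k)$ time in polynomial space.
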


\begin{proof}
    Let $\vec{v}_1 \in \C$ be a codeword that satisfies $\norm{\vec{v}_1} = \lambda_1(\C)$, and define $\vec{v}_i$ recursively for $1 < i \le k$ so that
    $$\vec{v}_i \in \argmin_{\vec{x} \in (\C \setminus \lspan(\vec{v}_{1},\dots,\vec{v}_{i-1}))} \norm{\vec{x}}_0 \ \text{.} $$
    We prove inductively that $\norm{\vec{v}_i}_0 = \lambda_i(\C)$. This is clearly true in the base case of $i = 1$ since $\vec{v}_1$ is a shortest non-zero codeword by definition. 
    For the inductive step with $i \geq 2$, assume that we can find $\vec{v}_1,\dots,\vec{v}_{i-1} \in \C$ satisfying $\norm{\vec{v}_j}_0 = \lambda_j(\C)$ for $1 \le j \le i-1$.  Additionally, observe that for any $1 \le i \le k$, the span of the set $S_i = \{\vec{v} \in \C : \norm{\vec{v}}_0 \le \lambda_i(\C)\}$ has dimension at least $i$ as a result of there being at least $i$ linearly independent codewords in $S_i$.  On the other hand, the set $\{\vec{v}_1,\dots,\vec{v}_{i-1}\} \subset S_i$ has dimension only $i-1$, so there must exist some $\vec{v}_i \in \C$ such that $\norm{\vec{v}_i}_0 \le \lambda_i(\C)$ and $\vec{v}_i \notin \lspan(\vec{v}_1, \ldots, \vec{v}_{i-1})$.
    
    We next show that $\norm{\vec{v}_i}_0 \ge \lambda_i(\C)$ (and therefore $\norm{\vec{v}_i}_0 = \lambda_i(\C)$) for $i \geq 2$. 
    Suppose not. Then $\norm{\vec{v}_i}_0 < \lambda_i(\C)$. 
    We have by definition that $\vec{v}_1, \ldots, \vec{v}_i$ are linearly independent, and by the induction hypothesis, $\lambda_j(\C) = \norm{\vec{v}_{j}}_0$ for all $1 \leq j \leq i - 1$.
    By definition, $\norm{\vec{v}_1}_0 \leq \cdots \leq \norm{\vec{v}_i}_0$, and it follows that $\vec{v}_1, \ldots, \vec{v}_i \in \C$ are linearly independent vectors all of Hamming weight strictly less than $\lambda_i(\C)$, which is a contradiction.

    The recursive process above yields $k$ linearly independent codewords $\vec{v}_1,\dots,\vec{v}_k$ with $\norm{\vec{v}_i}_0 = \lambda_i(\C)$ for all $i$.  Since $\C$ is a $k$-dimensional subspace of $\F_q^{n}$, any $k$ linearly independent vectors in $\C$ (and particularly our $\vec{v}_1,\dots,\vec{v}_k$) must span $\C$ and thus form a basis for it. 
    
    Computing each $\vec{v}_i$ involves a search over the $O(q^k)$ codewords in $\C$ with polynomial time required to check the linear independence and Hamming weight of each, so it takes $O^*(q^k)$ time to compute. 
    This enumeration can be done in polynomial space.
    Therefore, by repeating this $k$ times, we find the entire successive minima basis in $O^*(q^k)$ time and polynomial space.
\end{proof}

\subsection{Coding Problems}

Given a code, it is natural to ask what its minimum distance is. The Minimum Distance Problem formalizes this, and we give here the approximation version of the problem.
\begin{definition}
For $\gamma = \gamma(k) \geq 1$ and a prime power $q$, the \emph{decisional $\gamma$-approximate Minimum Distance Problem over $\F_q$} ($\gamma$-$\MDP_q$) is the decision problem defined as follows. An instance consists of (a generator matrix $G \in \F_q^{n \times k}$ of) a code $\C$ and an integer distance $0 \le d \le m$. It is a:
\begin{itemize}
    \item YES instance if $\lambda_1(\C) \leq d$.
    \item NO instance if $\lambda_1(\C) > \gamma d$.
\end{itemize}
\end{definition}

We note that $\MDP$ is $\NP$-hard to approximate to within any constant factor.
This was first established by Cheng and Wan~\cite{journals/tit/ChengW12}. (Earlier work showed $\NP$-hardness under randomized reductions~\cite{journals/tit/DumerMS03}, and subsequent work showed simplified deterministic reductions~\cite{journals/tit/AustrinK14,conf/coco/Micciancio14}.)

\begin{theorem}[\cite{journals/tit/ChengW12}] \label{thm:mdp-hardness}
For all constants $\gamma \geq 1$ and all prime powers $q$, $\gamma$-$\MDP_q$ is $\NP$-hard.
\end{theorem}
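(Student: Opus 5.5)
This statement (\cref{thm:mdp-hardness}) is imported from the literature, so what follows is the route I would take to reprove it, not a proof this paper supplies. The argument has two stages.
\begin{enumerate}
\item \emph{Base gap.} Show deterministically that $\gamma_0$-$\MDP_q$ is $\NP$-hard for \emph{some} constant $\gamma_0 > 1$.
\item \emph{Amplification.} Boost $\gamma_0$ to an arbitrary constant $\gamma$ by tensoring codes.
\end{enumerate}
The amplification stage is the easy one for codes, in contrast to lattices. For codes $\C, \C'$, the tensor code $\C \otimes \C'$ (all $n \times n'$ matrices whose columns lie in $\C$ and whose rows lie in $\C'$) satisfies exactly $\lambda_1(\C \otimes \C') = \lambda_1(\C)\,\lambda_1(\C')$.
\begin{itemize}
\item The upper bound comes from the codeword $\vec{c}\,\vec{c}'^{T}$.
\item The lower bound holds because a nonzero codeword has at least $\lambda_1(\C')$ nonzero rows, and each nonzero column has weight at least $\lambda_1(\C)$.
\end{itemize}
So the map $(G, d) \mapsto (G^{\otimes t}, d^t)$ sends YES instances to YES instances and NO instances with gap $\gamma_0$ to NO instances with gap $\gamma_0^t$. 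A generator matrix of $\C^{\otimes t}$ is $G^{\otimes t}$ (Kronecker power), of size $n^t \times k^t$. For constant $t = \lceil \log \gamma / \log \gamma_0 \rceil$ this is polynomial, which gives every constant $\gamma$.

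For the base gap I would follow the deterministic approach of Austrin and Khot (or Micciancio's derandomized locally dense codes) rather than the randomized reduction of Dumer, Micciancio, and Sudan. The first step is to start from an $\NP$-hard gap problem over $\F_q$ with a homogeneous flavor. A natural choice is hardness of approximating $\NCP$~\cite{journals/jcss/AroraBSS97} or a gap linear-equations problem with perfect completeness. I would then homogenize it: the reduction should produce a code $\C$ in which every nonzero codeword corresponds to a nontrivial ``assignment.'' In the YES case some codeword has weight at most $d$. In the NO case every nonzero codeword is heavy. The difficulty with naive homogenization, adjoining the target $\vec{t}$ to the generators, is that nonzero codewords of $\C(G)$ itself, which do not use $\vec{t}$, might be light. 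To handle this I would concatenate with a code of large relative distance, e.g.\ Hadamard- or Reed--Solomon-style encodings of the assignment, repeated enough times. This forces every codeword not using the target to have weight well above $\gamma_0 d$. The codewords that do use the target are then controlled by the soundness of the source gap problem, up to a constant loss.

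I expect this base-gap step to be the main obstacle. Specifically, one must argue deterministically, for every prime power $q$, that no ``degenerate'' codeword (one that does not encode a genuine near-solution) is light. This is exactly where earlier work needed randomness via locally dense codes, and where Cheng and Wan instead used an explicit algebraic construction. My first attempt would be the Austrin--Khot style argument via weight-distribution properties of concatenated codes, which seems to give some small constant $\gamma_0 > 1$. That suffices, since the tensoring stage does the rest.
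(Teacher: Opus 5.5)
The paper does not prove this statement; it imports it verbatim from \cite{journals/tit/ChengW12}. Your amplification stage is correct and is the standard one used in \cite{journals/tit/DumerMS03,journals/tit/ChengW12}. Since $\lambda_1(\C\otimes\C')=\lambda_1(\C)\lambda_1(\C')$, the map $(G,d)\mapsto(G^{\otimes t},d^t)$ turns a $\gamma_0$ gap into a $\gamma_0^t$ gap in polynomial time for constant $t$. Your lower-bound justification swaps rows and columns, though. With columns in $\C$ and rows in $\C'$, a nonzero codeword has a nonzero row, hence at least $\lambda_1(\C')$ nonzero \emph{columns}, each of weight at least $\lambda_1(\C)$. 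If you simply cited a deterministic constant-factor base result, e.g.\ \cite{journals/tit/AustrinK14}, your argument would be complete.

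The genuine gap is the base stage. It carries all of the difficulty, and the mechanism you sketch for it cannot produce any gap. Suppose you append $m$ repetitions of $E\vec{x}$, for a fixed high-distance code $E$, to each codeword $G\vec{x}+a\vec{t}$. The YES witness $G\vec{x}^*-\vec{t}$ also has $\vec{x}^*\neq\vec{0}$, so it is charged at least the same extra weight $m\lambda_1(\C(E))$ that was supposed to make the target-free codewords heavy. The YES threshold must therefore grow by at least that much, and padding by a function of $\vec{x}$ alone only dilutes the ratio toward $1$.

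To escape this, the padding must involve the target coefficient $a$. You append $L\vec{z}+a\vec{s}$ and set $\vec{x}=P\vec{z}$. You then need a center $\vec{s}$ such that the codewords of $\C(L)$ within some radius $r$ of $\vec{s}$ are mapped by $P$ onto \emph{every} possible assignment, while $\lambda_1(\C(L))$ exceeds $r$ by a constant factor. That object is a locally dense code, and \cite{journals/tit/DumerMS03} needed randomness to construct it. Constructing one deterministically, with an explicit center, is exactly the technical contribution of \cite{journals/tit/ChengW12}, later simplified in \cite{conf/coco/Micciancio14}. Cheng and Wan build it from Reed--Solomon codes concatenated with Hadamard codes. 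So you named the right codes, but they are needed to build a dense ball, not to pad weight. Separately, \cite{journals/tit/AustrinK14} reaches a base constant by a different elementary reduction from SAT. Your proposal identifies the obstacle but supplies none of these ingredients, so as a self-contained reproof it stops exactly at the step that is the theorem's real content.
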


We now define a variant of the Nearest Codeword Problem with a stronger promise in the NO case.

\begin{definition} \label{def:gap-ncpvar}
For $\gamma = \gamma(k) \geq 1$, $\alpha = \alpha(k) > 0$, and a prime power $q$, the \emph{decisional $\gamma$-approximate Nearest Codeword Problem over $\F_q$ with distance promise $\alpha$} ($\gamma$-$\NCP_q^{\alpha}$) is the decision problem defined as follows. An instance consists of (a generator matrix $G \in \F_q^{n \times k}$ of) a code $\C$, a target vector $\vec{t} \in \F_q^n$, and a distance parameter $d \in \Z^+$. It is a:
\begin{itemize}
    \item YES instance if $\dist(\vec{t}, \C) \leq d$.
    \item NO instance if $\dist(\vec{t}, \C) > \gamma d$ and $d < \alpha \lambda_1(\C)$.
\end{itemize}
\end{definition}
We additionally define $\gamma$-$\NCP_q$ to be $\gamma$-$\NCP_q^{\infty}$. I.e., in $\gamma$-$\NCP_q$ there is no promised upper bound on $d$ in the NO case.
We note that $\gamma$-$\NCP_q^{\alpha}$ trivially reduces to $\gamma$-$\NCP_q^{\alpha'}$ for $\alpha' \geq \alpha$, and in particular $\gamma$-$\NCP_q^{\alpha}$ trivially reduces to ``plain'' $\gamma$-$\NCP_q$ for any $\alpha > 0$.

\subsection{Matrix Norms}
\label{sec:matrix-norms}

While there are many ways to define matrix norms, the most useful for us is the matrix $p \rightarrow q$ norm, where $p,q \in [1,\infty]$. We note in passing that matrix $p \to q$ norms over the real numbers have been studied from a computational standpoint; see, e.g.,~\cite{conf/soda/BhattiproluGGLT19}.

We can extend the definition of the matrix $p \to q$ norm to arbitrary (finite) fields $\F$ and to include $p = 0$ and $q = 0$, with $\norm{\vec{x}}_0$ being the Hamming weight as is standard. Note, however, that the matrix $0 \to 0$ ``norm'' is not in fact a matrix norm as it is not scale-invariant. However, it does satisfy the triangle inequality and it is sub-multiplicative. 
Ultimately, throughout this article, we abuse notation and nevertheless refer to it as a norm. Additionally, we employ a ``restricted'' version of the matrix $0 \rightarrow 0$ norm, where one only maximizes over those vectors $\vec{x} \in \C$ for some subspace $\C \subseteq \F^n$. More formally:
\begin{definition}
Let $n \in \Z^+$, let $\F$ be a field, let $T \in \GL_n(\F)$ be a matrix, and let $\C \subseteq \F^n$ be a subspace. The \emph{matrix $0 \rightarrow 0$ norm of $T$ restricted to the subspace $\C$} is the quantity
$$
\norm{T|_{\C}}_{0 \rightarrow 0} \defeq \max_{\vec{x} \in \C \setminus \{\vec{0}\}} \frac{\norm{T\vec{x}}_0}{\norm{\vec{x}}_0}.
$$
\end{definition} 
Note that the restricted matrix $0 \rightarrow 0$ norm is also sub-multiplicative.

\subsection{Distortion}
To compare pairs of general codes, we introduce a quantity that captures how similar they are. We formalize this with \emph{distortion}, similar to the formulation for the Lattice Distortion Problem \cite{conf/esa/BennettDS16} and distortion on metric embeddings.  We start by defining the distortion of a particular transformation between two codes.

\begin{definition} \label{def:matrix-dist}
Let $n, k \in \Z^+$, $k \leq n$, let $q$ be a prime power, let $\C_1, \C_2$ be $[n, k]_q$ codes, and let $T$ be a linear map such that $T(\C_1) = \C_2$. The \emph{distortion $\D_T(\C_1, \C_2)$ of $T$} the minimum value $D$ such that for all $\vec{x} \in \C_1$,
\[
D_1 \cdot \norm{\vec{x}}_0 \leq \norm{T \vec{x}}_0 \leq D_2 \cdot \norm{\vec{x}}_0
\]
for some values $D_1, D_2 > 0$ satisfying $D_2/D_1 \leq D$. Equivalently, 
$$
\D_T(\C_1, \C_2) :=
\Big( \max_{\vec{x} \in \C_1 \setminus \set{\vec{0}}} \frac{\norm{T\vec{x}}_0}{\norm{\vec{x}}_0} \Big) \big/ \Big(\min_{\vec{x} \in \C_1\setminus \set{\vec{0}}} \frac{\norm{T \vec{x}}_0}{\norm{\vec{x}}_0} \Big) \ \text{,}
$$
which is the definition we gave in \cref{sec:introduction}. Moreover, 
when $T \in \GL_n(\F_q)$ and thus is invertible, this is equivalent to
\begin{align} \label{eq:dist-via-matrix-norms}
\D_T(\C_1,\C_2) = \norm{T|_{\C_1}}_{0\rightarrow 0} \cdot \norm{T^{-1}|_{\C_2}}_{0\rightarrow 0}.
\end{align}
\end{definition}

Consequently, at least in the case when $T$ is invertible, the distortion between two codes $\C_1$ and $\C_2$ is inextricably tied to computing two matrix $0 \rightarrow 0$ norms, where one is restricted to $\C_1$ and the other is restricted to $\C_2$. Interestingly, as we show in \cref{lem:invertibilitylemma}, the assumption that $T$ is invertible can be made without loss of generality, so we may take \cref{eq:dist-via-matrix-norms} as an equivalent definition of distortion. 
Furthermore, while we require the codes in \cref{def:matrix-dist} to have the same block length $n$, the definition can be extended to codes with different block lengths by simply padding the shorter code with zeros (or, one could allow for non-square matrices $T$).

We can now define the distortion between two codes generically as the lowest possible distortion between them using any transformation.

\begin{definition} \label{def:code-dist-new}
Let $n, k \in \Z^+$, $k \leq n$ and let $q$ be a prime power. 
The \emph{distortion} $\D(\C_1, \C_2)$ between $[n, k]_q$ codes $\C_1$ and $\C_2$ is defined as the minimum value $D$ such that there exists $T \in \GL_n(\F_q)$ with $T(\C_1) = \C_2$ and $\D_T(\C_1,\C_2) = D$. In other words, 
\begin{equation}\label{eq:dist-equiv}
\D(\C_1, \C_2) \defeq \min_{\substack{T \in \GL_n(\F_q), \\ T(\C_1) = \C_2}}\D_T(\C_1, \C_2) \ \text{.}
\end{equation}

\end{definition}

\subsubsection{The Code Distortion Problem}
\label{sec:code-dist-prob-defs}

Using this definition of code distortion, we now define both the search and decision versions of the ($\gamma$-approximate) Code Distortion Problem (CDP).

\begin{definition} \label{def:searchcdp}
For $\gamma = \gamma(n) \geq 1$ and a prime power $q$, the \emph{$\gamma$-approximate Code Distortion Problem over $\F_q$} ($\gamma$-$\CDP_q$) is the search problem defined as follows. An instance consists of (generator matrices $G_1, G_2 \in \F_q^{m \times n}$ of) codes $\C_1, \C_2 \subseteq \F_q^n$, and the goal is to output $D$ such that $\D(\C_1, \C_2) \leq D \leq \gamma \cdot \D(\C_1, \C_2)$.
\end{definition}

\begin{definition} \label{def:cdp}
For $\gamma = \gamma(n) \geq 1$ and a prime power $q$, the \emph{decisional $\gamma$-approximate Code Distortion Problem over $\F_q$} ($\gamma$-$\GapCDP_q$) is the decision problem defined as follows. An instance consists of (generator matrices $G_1, G_2 \in \F_q^{m \times n}$ of) codes $\C_1, \C_2 \subseteq \F_q^n$ and $D \geq 1$. It is a:
\begin{itemize}
\item YES instance if $\D(\C_1, \C_2) \leq D$.
\item NO instance if $\D(\C_1, \C_2) > \gamma D$.
\end{itemize}
\end{definition}

The exact versions of these problems, $\CDP_q$ and $\GapCDP_q$, are then defined to be $1$-$\CDP_q$ and $1$-$\GapCDP_q$, respectively. Furthermore, the Linear Code Equivalence Problem is essentially the special case of $\CDP$ with $D = 1$. We formalize this relationship in \cref{prop:equiv-codes-scaling-up} in the next section.

\subsection{Basic Facts about Distortion}
\label{sec:prelims-dist-basic-facts}

We next present several basic facts about the distortion between codes. We start with the following bounds on $\D(\C_1, \C_2)$.

\begin{lemma} \label{lem:code-dist-basic-bounds}
Let $\C_1$ and $\C_2$ be $[n, k]_q$ codes. Then, for any linear map $T$ such that $T(\C_1) = \C_2$, $1 \leq \D_T(\C_1, \C_2) \leq n^2$. As a consequence, $1 \leq \D(\C_1, \C_2) \leq n^2$
\end{lemma}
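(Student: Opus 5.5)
The plan is to bound the two factors in the definition of $\D_T(\C_1,\C_2)$ separately. Write
\[
D_{\max} := \max_{\vec{x} \in \C_1 \setminus \set{\vec{0}}} \frac{\norm{T\vec{x}}_0}{\norm{\vec{x}}_0} \ \text{,} \qquad
D_{\min} := \min_{\vec{x} \in \C_1 \setminus \set{\vec{0}}} \frac{\norm{T\vec{x}}_0}{\norm{\vec{x}}_0} \ \text{.}
\]
First, $T$ restricted to $\C_1$ is injective. Indeed, $T(\C_1) = \C_2$ and $\dim \C_1 = \dim \C_2 = k$, so by rank--nullity the restriction $T|_{\C_1}$ has trivial kernel. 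Hence $T\vec{x} \neq \vec{0}$ for every non-zero $\vec{x} \in \C_1$.

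For such $\vec{x}$ we have both $1 \leq \norm{\vec{x}}_0 \leq n$ and $1 \leq \norm{T\vec{x}}_0 \leq n$. This gives $D_{\max} \leq n/1 = n$ and $D_{\min} \geq 1/n$. Therefore $\D_T(\C_1,\C_2) = D_{\max}/D_{\min} \leq n^2$. The lower bound $\D_T(\C_1, \C_2) \geq 1$ is immediate, since a maximum over a nonempty set is at least the minimum over the same set. (If $k = 0$ the statement is degenerate; I would adopt the convention that the distortion is $1$ and set this case aside.)

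For the consequence about $\D(\C_1,\C_2)$: it is a minimum of values $\D_T(\C_1,\C_2)$, each lying in $[1,n^2]$. It remains only to check that the minimization set in \cref{def:code-dist-new} is nonempty, that is, that some $T \in \GL_n(\F_q)$ satisfies $T(\C_1) = \C_2$. To build one, extend a basis $\vec{g}^{(1)}_1,\dots,\vec{g}^{(1)}_k$ of $\C_1$ to a basis of $\F_q^n$, and likewise extend a basis of $\C_2$. Let $T$ map the first extended basis to the second, sending $\vec{g}^{(1)}_i \mapsto \vec{g}^{(2)}_i$. This $T$ is invertible and maps $\C_1$ onto $\C_2$. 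The set of such $T$ is finite, so the minimum is attained and lies in $[1, n^2]$.

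None of these steps should be a real obstacle. The one point needing care is the injectivity of $T$ on $\C_1$, which guarantees that $D_{\min} > 0$ and that the ratio is well-defined. That is exactly where the hypothesis that both codes have dimension $k$ is used.
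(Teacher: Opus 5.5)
Your proof is correct and is essentially the paper's argument. The paper bounds $\norm{T|_{\C_1}}_{0 \to 0} \leq n$ and $\norm{T^{-1}|_{\C_2}}_{0 \to 0} \leq n$, which are exactly your bounds $D_{\max} \leq n$ and $D_{\min} \geq 1/n$. It gets the lower bound from sub-multiplicativity, $1 = \norm{(T^{-1}T)|_{\C_1}}_{0 \to 0} \leq \D_T(\C_1, \C_2)$, which is just a repackaging of your observation that a maximum dominates a minimum.

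Working directly with the max/min formula, as you do, has a small advantage: it does not need $T$ to be invertible. The paper's proof tacitly assumes invertibility even though the statement covers all linear $T$. Your explicit checks of injectivity on $\C_1$ and of nonemptiness of the minimization set also fill in details the paper leaves implicit.
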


\begin{proof}
For the lower bound, simply note that by the sub-multiplicativity of the matrix $0 \rightarrow 0$ norm on subspaces, it holds for all $T \in \GL_n(\F)$ such that $T(\C_1) = \C_2$ that
$$
1 = \norm{(T^{-1}T)|_{\C_1}}_{0 \rightarrow 0} \leq \norm{T^{-1}|_{\C_2}}_{0 \rightarrow 0} \cdot \norm{T|_{\C_1}}_{0 \rightarrow 0} = \D_T(C_1, \C_2).
$$
For the upper bound, note that for all $T \in \GL_n(\F)$ such that $T(\C_1) = \C_2$, it holds that
$$
\norm{T|_{\C_1}}_{0 \rightarrow 0} = \max_{\vec{x} \in \C_1 \backslash \{\vec{0}\}} \frac{\norm{T\vec{x}}_0}{\norm{\vec{x}}_0} \leq \max_{\vec{x} \in \C_1 \backslash \{\vec{0}\}} \norm{T\vec{x}}_0 \leq n \ \text{,}
$$
where the last inequality follows from the fact that $\C_2$ is an $[n,k]_q$ code and for all $x \in \C_1 \backslash \{\vec{0}\}$, $T\vec{x} \in \C_2 \backslash \{\vec{0}\}$. A similar argument establishes that $\norm{T^{-1}|_{\C_2}}_{0 \rightarrow 0} \leq n$, and consequently
$$
\D_T(\C_1, \C_2) = \norm{T|_{\C_1}}_{0\rightarrow 0} \cdot \norm{T^{-1}|_{\C_2}}_{0\rightarrow 0} \leq n^2 \ \text{,}
$$
as desired. The fact that $1 \leq \D(\C_1, \C_2) \leq n^2$ then follows from the definition of $\D(\C_1, \C_2)$.
\end{proof}

It is an open question if the upper bound in \cref{lem:code-dist-basic-bounds} can be reduced below $n^2$. Indeed, we do not know an example of a code with distortion $n^2$, but we do know an example of a code with distortion $n$, which we detail in the following.

\begin{lemma}
For every $n \in \Z^+$, there exist $[n, 2]_2$ codes $\C_1, \C_2$ such that $\D(\C_1, \C_2) = n$.
\end{lemma}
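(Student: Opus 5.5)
The plan is to take two explicit $2$-dimensional binary codes. The key observation is that in an $[n,2]_2$ code there are exactly three non-zero codewords $\vec{x}, \vec{y}, \vec{x}+\vec{y}$. Any linear map $T$ with $T(\C_1) = \C_2$ restricts to a bijection between the non-zero codewords of $\C_1$ and those of $\C_2$. So $\D_T(\C_1,\C_2)$ is just the ratio of the largest to the smallest of three weight ratios. If every non-zero codeword of $\C_1$ has the same weight, this quantity does not depend on $T$ at all.

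The construction (for $n \geq 3$) is as follows. Let $\C_1 := \lspan(\vec{e}_1+\vec{e}_2, \vec{e}_2+\vec{e}_3) \subseteq \F_2^n$. Its non-zero codewords $\vec{e}_1+\vec{e}_2$, $\vec{e}_2+\vec{e}_3$, $\vec{e}_1+\vec{e}_3$ all have Hamming weight $2$. Let $\C_2 := \lspan(\vec{e}_1, \vec{1})$, where $\vec{1}$ is the all-ones vector. Its non-zero codewords $\vec{e}_1$, $\vec{1}$, $\vec{1}-\vec{e}_1$ have weights $1, n, n-1$. Both codes are $[n,2]_2$ codes.

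Now fix any $T \in \GL_n(\F_2)$ with $T(\C_1)=\C_2$. At least one such $T$ exists: map one basis to the other and extend to a basis of $\F_2^n$. Since $T$ maps the three weight-$2$ codewords bijectively onto $\{\vec{e}_1, \vec{1}, \vec{1}-\vec{e}_1\}$, the ratios $\norm{T\vec{x}}_0/\norm{\vec{x}}_0$ over $\vec{x} \in \C_1\setminus\{\vec{0}\}$ are exactly $1/2, (n-1)/2, n/2$, in some order. Hence $\D_T(\C_1,\C_2) = (n/2)/(1/2) = n$ for every such $T$. Taking the minimum in \cref{def:code-dist-new} gives $\D(\C_1,\C_2)=n$.

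There is no real obstacle in this argument. The only delicate point is the range of $n$. The construction needs $n \geq 3$. For $n \le 2$, an $[n,2]_2$ code exists only when $n=2$, and then both codes equal $\F_2^2$, so the distortion is $1$. The statement should therefore be read for $n \ge 3$, and I would add that restriction explicitly when writing up the proof.
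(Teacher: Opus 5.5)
Your proof is correct and is essentially the paper's argument: the paper uses the same two codes with the roles of $\C_1$ and $\C_2$ swapped (the span of $\vec{e}_1$ and $\vec{1}_n$ against the constant-weight-$2$ code on the first three coordinates), and it likewise observes that every $T$ has distortion $\frac{2}{1}\cdot\frac{n}{2} = n$, which matches your computation since $\D_T(\C_1,\C_2) = \D_{T^{-1}}(\C_2,\C_1)$. Your caveat about $n \geq 3$ is also correct and points to an oversight in the paper's statement: its construction needs three coordinates as well, and for $n \le 2$ the claim fails, since no $[1,2]_2$ code exists and $\D(\F_2^2,\F_2^2) = 1$.
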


\begin{proof}
Let $\C_1,\C_2$ be the codes defined as follows:
\begin{align*}
    \C_1 = \left\{\vec{0}_n,\begin{pmatrix}
        1\\\vec{0}_{n-1}
    \end{pmatrix},\begin{pmatrix}
        0\\\vec{1}_{n-1}
    \end{pmatrix},\vec{1}_n\right\}, \quad 
    \C_2 = \left\{\vec{0}_n,\begin{pmatrix}
        1\\1\\0\\\vec{0}_{n-3}
    \end{pmatrix},\begin{pmatrix}
        0\\1\\1\\\vec{0}_{n-3}
    \end{pmatrix},\begin{pmatrix}
        1\\0\\1\\\vec{0}_{n-3}
    \end{pmatrix}\right\}.
\end{align*}
It is easy to check that $\C_1$ and $\C_2$ are linear codes.
Each non-zero codeword in $\C_2$ has Hamming weight $2$ (the constant weight of non-zero codewords in $\C_2$ is the only property of it that we are using) while the non-zero codewords in $\C_1$ have minimum weight $1$ and maximum weight $n$.  As such, the distortion for any transformation $T$ from $\C_1$ to $\C_2$ is
$$\max_{\vec{x} \in \C_1 \backslash \{\vec{0}\}} \frac{\norm{T\vec{x}}_0}{\norm{\vec{x}}_0}\max_{\vec{x} \in \C_1 \backslash \{\vec{0}\}} \frac{\norm{\vec{x}}_0}{\norm{T\vec{x}}_0}=\frac{2}{1} \cdot \frac{n}{2} = n \ \text{.} \qedhere $$
\end{proof}

Ultimately, while we are unable to prove a matching upper bound of $n$ on $\D(\C_1, \C_2)$, we are nevertheless able to prove such a bound when including \emph{non-linear transformations} $T$.

\begin{lemma} \label{lem:non-linear-dist-ub}
Let $\C_1$ and $\C_2$ be $[n,k]_q$ codes. Then, $\min_{T} \D_T(\C_1, \C_2) \leq n$, where the minimum is over all maps $T$ (including non-linear maps) such that $T(\C_1) = \C_2$.
\end{lemma}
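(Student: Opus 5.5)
We need a map $T$ with $T(\C_1)=\C_2$ (bijection on nonzero codewords, $\vec 0\mapsto\vec 0$) such that the ratio $\norm{T\vec x}_0/\norm{\vec x}_0$ varies by at most a factor $n$. The plan is to sort both codes by Hamming weight and match them in order. We list the nonzero codewords of $\C_1$ as $\vec x_1,\dots,\vec x_N$ with $\norm{\vec x_1}_0\le\cdots\le\norm{\vec x_N}_0$, and those of $\C_2$ as $\vec y_1,\dots,\vec y_N$ with $\norm{\vec y_1}_0\le\cdots\le\norm{\vec y_N}_0$; here $N=q^k-1$ for both codes. We then define $T(\vec 0)=\vec 0$ and $T(\vec x_j)=\vec y_j$. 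This is a bijection $\C_1\to\C_2$, though in general not a linear one.

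Next we need a crude bound on each ratio. Since every nonzero codeword has weight in $[1,n]$, each ratio satisfies $\norm{\vec y_j}_0/\norm{\vec x_j}_0\in[1/n,n]$. Taken alone, this only gives a bound of $n^2$ on the distortion.

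The key step is to compare the weight distributions of the two codes directly. Let $w_1(j)=\norm{\vec x_j}_0$ and $w_2(j)=\norm{\vec y_j}_0$. I want to show that the maximum ratio $\max_j w_2(j)/w_1(j)$ and the minimum ratio $\min_j w_2(j)/w_1(j)$ differ by a factor of at most $n$. The cleanest route is to use the weight of a "typical" codeword. In any $[n,k]_q$ code, the average weight of a nonzero codeword is fixed by the support: if the code has $s$ non-identically-zero coordinates, each such coordinate is nonzero on exactly a $(1-1/q)$ fraction of codewords. This pins down an anchor, though the two codes may have different supports. A more robust observation may therefore work better: since both lists are sorted, the last entries satisfy $w_1(N),w_2(N)\le n$ and the first entries satisfy $w_1(1),w_2(1)\ge 1$.

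To bound the distortion I would split it as
\[
\frac{\max_j w_2(j)/w_1(j)}{\min_j w_2(j)/w_1(j)}\le \max_{i,j}\frac{w_2(i)\,w_1(j)}{w_1(i)\,w_2(j)},
\]
and look for an index-dependent argument. For $i\le j$ we have $w_2(i)\le w_2(j)$, so the ratio is at most $w_1(j)/w_1(i)\le n$. For $i>j$ we have $w_1(j)\le w_1(i)$, so the ratio is at most $w_2(i)/w_2(j)\le n$. Hence the distortion is at most $n$. This monotone matching is exactly what makes the argument work, so the main obstacle is just noticing that the sorted matching handles both orderings of $i$ and $j$. I expect no further difficulty, beyond noting that the two codes have the same number of codewords, $q^k$.
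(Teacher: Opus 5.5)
Your proof is correct and is essentially the paper's argument: you use the same weight-sorted matching, and you bound the distortion by splitting on the relative order of the two indices, using monotonicity of one weight sequence together with $1 \le w \le n$. The only difference is cosmetic. You bound the symmetric product $\frac{w_2(i)\,w_1(j)}{w_1(i)\,w_2(j)}$ over all pairs (which in fact equals the distortion), whereas the paper fixes the maximizing index $i^*$ of one ratio and shows the reciprocal ratio is at most $n/M$ for every $j$; your exploratory remarks about average weights are unnecessary and can be dropped.
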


\begin{proof}
List all non-zero codewords in $\C_1$ in monotonically increasing order of Hamming weight, and prepare a similar list for $\C_2$. Let $T$ be the map that takes codeword $i$ in the first list to codeword $i$ in the second list. Let $N := q^k - 1$, and let $1 \leq a_1 \leq \cdots \leq a_N \leq n$ and $1 \leq b_1 \leq \cdots \leq b_N \leq n$ be the Hamming weights of the non-zero codewords in the respective lists. Then, the distortion of $T$ is
$$
\D_T(\C_1, \C_2) = \left(\max_{i \in [N]}\frac{a_i}{b_i}\right) \cdot \left(\max_{j \in [N]}\frac{b_j}{a_j}\right) \ \text{.}
$$
Now put $i^* = \argmax_{i \in [N]} a_i / b_i$, and let $M = a_{i^*} / b_{i^*}$. We assume without loss of generality that $M \geq 1$ (otherwise, repeat the argument with the two codes flipped). Then, for all $j > i^*$, $b_j / a_j \leq n/a_{i^*} \leq n b_{i^*}/a_{i^*} = n/M$. And, for all $j \leq i^*$, $b_j / a_j \leq b_{i^*} / 1 = a_{i^*} / M \leq n / M$. Consequently, for all $j \in [N]$, $b_j / a_j \leq n/M$. Therefore, 
$$
\D_T(\C_1, \C_2) = \left(\max_{i \in [N]}\frac{a_i}{b_i}\right) \cdot \left(\max_{j \in [N]}\frac{b_j}{a_j}\right) \leq M \cdot \frac{n}{M} = n \ \text{,}
$$
as desired.
\end{proof}

We next show that if there exists a matrix $T$ such that $T \C_1 = \C_2$, then then exists an efficiently computable, \emph{invertible} matrix $T'$ such that $T' \C_1 = \C_2$ and $\D_T(\C_1, \C_2) = \D_{T'}(\C_1, \C_2)$.

\begin{lemma}
\label{lem:invertibilitylemma}
Let $T$ be a (not necessarily invertible) matrix such that $T(\C_1) = \C_2$ for $[n, k]_q$ codes $\C_1$, $\C_2$
such that $\D_T(\C_1, \C_2) \leq D$. Then there exists an invertible matrix $T'$ such that $T'(\C_1) = \C_2$ and $\D_{T'}(\C_1, \C_2)$.
\end{lemma}

\begin{proof}
Now consider a generator matrix $G_1 \in \F_q^{n \times k}$ of $\C_1$ and the corresponding generator matrix of $\C_2$, $G_2 := T(G_1) \in \F_q^{n \times k}$. 
Now, let $H_1, H_2 \in \F_q^{n \times (n-k)}$ be such that $A := (G_1 | H_1), B := (G_2 | H_2) \in \F_q^{n \times n}$ are full-rank.
(In particular, it suffices to take $H_1$ and $H_2$ to be (transposed) parity-check matrices for $\C_1$ and $\C_2$, respectively.)
Finally, let $T' := BA^{-1}$. 
Note that $T'$ is invertible, and that $T\vec{c} = T'\vec{c}$ for every $\vec{c} \in \C$. It follows that $T'(\C_1) = \C_2$ and $\D_{T'}(\C_1, \C_2) = \D_T(\C_1, \C_2) = D$, as needed.
\end{proof}

We next show that two codes $\C_1$ and $\C_2$ being linearly equivalent is equivalent to $\D(\C_1, \C_2) = 1$ up to ``scaling.''

\begin{proposition} \label{prop:equiv-codes-scaling-up}
    Let $\C_1, \C_2$ be $[n, k]_q$ codes for some $n, k \in \Z^+$ and prime power $q$.  Then $\D(\C_1, \C_2) = 1$ if and only if $\C_1 \otimes \vec{1}_{r_1}$ (padded with some number of zeros to reach block length $r_2n$) is linearly equivalent to $\C_2 \otimes \vec{1}_{r_2}$ for some $r_1,  r_2 \in \Z^+$, $r_1 \leq r_2$.
\end{proposition}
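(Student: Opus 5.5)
The plan is to reduce both directions to the observation that $\D_T(\C_1,\C_2)=1$ means exactly that $T$ scales every Hamming weight by the same constant. Then I convert between ``constant-ratio linear map'' and ``linear isometry between repeated codes.'' The nontrivial ingredient will be the MacWilliams extension theorem: any Hamming-weight-preserving linear bijection between two linear codes in $\F_q^N$ is the restriction of a monomial matrix in $\M_N(\F_q)$. I will cite it rather than reprove it. Throughout, $\C\otimes\vec{1}_r$ means each coordinate is repeated $r$ times, so $\norm{\vec{x}\otimes\vec{1}_r}_0=r\norm{\vec{x}}_0$, and the map $\vec{x}\mapsto\vec{x}\otimes\vec{1}_r$ is linear and injective.

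\textbf{($\Leftarrow$).} Suppose a monomial $M$ maps $\C_1\otimes\vec{1}_{r_1}$, padded with $(r_2-r_1)n$ zeros, onto $\C_2\otimes\vec{1}_{r_2}$. For $\vec{x}\in\C_1$, $M(\vec{x}\otimes\vec{1}_{r_1},\vec{0})$ equals $\vec{y}\otimes\vec{1}_{r_2}$ for a unique $\vec{y}\in\C_2$; set $T\vec{x}:=\vec{y}$. This $T$ is linear and is a bijection $\C_1\to\C_2$. Since $M$ preserves weight, $r_2\norm{T\vec{x}}_0=r_1\norm{\vec{x}}_0$ for every $\vec{x}$. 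So the max and min in the definition of $\D_T$ coincide and $\D_T(\C_1,\C_2)=1$. \cref{lem:invertibilitylemma} extends $T$ to an invertible matrix on $\F_q^n$ with the same distortion. Combined with \cref{lem:code-dist-basic-bounds}, this gives $\D(\C_1,\C_2)=1$.

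\textbf{($\Rightarrow$).} Take an invertible $T$ with $T(\C_1)=\C_2$ and $\D_T=1$. Then $\min$ and $\max$ of $\norm{T\vec{x}}_0/\norm{\vec{x}}_0$ agree, so there is a constant $c>0$ with $\norm{T\vec{x}}_0=c\norm{\vec{x}}_0$ for all $\vec{x}\in\C_1$. Because weights are positive integers, $c$ is rational; write $c=r_1/r_2$ in lowest terms, so that $r_2\norm{T\vec{x}}_0=r_1\norm{\vec{x}}_0$.

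If $c\le 1$, then $r_1\le r_2$. In that case define $\varphi:(\vec{x}\otimes\vec{1}_{r_1},\vec{0})\mapsto T\vec{x}\otimes\vec{1}_{r_2}$. This is a linear bijection from the padded code $\C_1\otimes\vec{1}_{r_1}\subseteq\F_q^{r_2n}$ onto $\C_2\otimes\vec{1}_{r_2}\subseteq\F_q^{r_2n}$, and it preserves Hamming weight exactly. The MacWilliams extension theorem then yields a monomial matrix realizing $\varphi$, which is the claimed linear equivalence.

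If $c>1$, I apply the same argument to $T^{-1}$, which maps $\C_2$ to $\C_1$ with ratio $1/c<1$. The statement should be read symmetrically here, since the padding is placed on whichever side carries the smaller repetition factor. I will note this explicitly, because as literally written the statement fixes which code gets padded.

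\textbf{Main obstacle.} The main step is the ($\Rightarrow$) extension: a weight-preserving linear map defined only on a subspace must come from a global monomial map. This is exactly MacWilliams' theorem, and it holds over any finite field, so I expect it to go through by citation. The remaining points are routine bookkeeping: the rationality of $c$, the $r_1\le r_2$ orientation, and the padding length.
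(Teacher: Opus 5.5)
Your proposal is correct and follows essentially the same route as the paper, which in each direction composes with repetition/projection maps ($S_2 T S_1$) to pass between a constant-ratio map with $\D_T = 1$ and an exact weight-preserving map between the repeated codes. Your explicit appeal to the MacWilliams extension theorem supplies the step the paper only asserts (that a map preserving weight just on $\C_1'$ gives a monomial equivalence). Your symmetric reading when $c > 1$ is needed (the literal statement fails, e.g., for $\C_1 = \lspan((1,0))$, $\C_2 = \lspan((1,1))$), and it is exactly what the paper's ``without loss of generality $c \le 1$'' implicitly assumes.
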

\begin{proof}
    We start with the backward direction.  Let $\C_1'$ be the result of padding $\C_1 \otimes \vec{1}_{r_1}$ with $r_2n-r_1n$ zeros and let $\C_2' = \C_2 \otimes \vec{1}_{r_2}$. Suppose we have $r_1, r_2 \in \Z^+$, $r_1 \le r_2$ such that $\C_1'$ is linearly equivalent to $\C_2'$.  Then, there must exist a 
    monomial matrix $T$ such that $T(\C_1') = \C_2'$ and $\norm{T(\vec{x} \otimes \vec{1}_{r_1})}_0 = \norm{\vec{x} \otimes \vec{1}_{r_1}}_0$ for all $\vec{x} \in \C_1$.
    Additionally, we can define $S_1 = I_n\otimes \vec{1}_{r_1}$, which maps $\C_1$ to $\C_1'$ with uniform scaling $\norm{S_1\vec{x}}_0 = r_1\norm{\vec{x}}_0$ for any $\vec{x} \in \C_1$, and $S_2 = I_n\otimes \vec{e}_{1}^{\top}$,\footnote{Here $\vec{e}_{1} \in \F_q^{r_2}$ is the first standard normal basis vector in $r_2$ dimensions.} which maps $\C_2'$ to $\C_2$ with uniform scaling of $\norm{S_2\vec{x}}_0 = \frac{1}{r_2}\norm{\vec{x}}_0$ for any $\vec{x} \in \C_2'$.
    We now compose these to define a transformation $T' = S_2TS_1$ that maps $\C_1$ to $\C_2$ and note that for any $\vec{x} \in \C_1$,
    $$\frac{\norm{T'\vec{x}}_0}{\norm{\vec{x}}_0} 
    = \frac{\norm{S_2TS_1\vec{x}}_0}{\norm{\vec{x}}_0}
    = \frac{1}{r_2} \cdot \frac{\norm{TS_1\vec{x}}_0}{\norm{\vec{x}}_0} 
    = \frac{1}{r_2} \cdot \frac{\norm{S_1\vec{x}}_0}{\norm{\vec{x}}_0} 
    = \frac{r_1}{r_2} \cdot \frac{\norm{\vec{x}}_0}{\norm{\vec{x}}_0} = \frac{r_1}{r_2}\text{.}$$
    As this is true for all $\vec{x} \in \C_1$, we can see that 
    $\zznorm{T'|_{\C_1}} =\frac{r_1}{r_2}$ and $ \zznorm{(T')^{-1}|_{\C_2}} = \frac{r_2}{r_1}$. Thus, $\D(\C_1,\C_2) = \D_T(\C_1,\C_2) = 1$.

    For the forward direction, we are now given that $\D(\C_1,\C_2) = 1$, i.e., there exists some $T$ such that $T(\C_1) = \C_2$ and $\norm{T\vec{x}}_0 = c\norm{\vec{x}}_0$ for all $\vec{x} \in \C_1$ and some $c \in \Q^+$ (we assume without loss of generality that $c \le 1$).  Let $c = \frac{r_1}{r_2}$ for some $r_1,r_2 \in \Z^+$, and once again take $\C_1' = \C_1\otimes\vec{1}_{r_1}$ (padded with $r_2n-r_1n$ zeros) and $\C_2' = \C_2\otimes\vec{1}_{r_2}$.  Finally, we %
    redefine $S_1 = \left(I_n\otimes\begin{pmatrix}
        1 & \vec{0}_{r_1-1}
    \end{pmatrix}\right)$, $S_2 = I_n\otimes \vec{1}_{r_2}$.  As before, note that $S_1(\C_1') = \C_1$, $S_2(\C_2) = \C_2'$, $\norm{S_1\vec{x}}_0 = \frac{1}{r_1}\norm{\vec{x}}_0$ for all $\vec{x} \in \C_1'$, and $\norm{S_2\vec{x}}_0 = r_2\norm{\vec{x}}_0$ for all $\vec{x} \in \C_2$.  We define $T' = S_2TS_1$ so that $T'$ maps $\C_1'$ to $\C_2'$, and again compute that 
    $$\norm{T'\vec{x}}_0 = \norm{S_2TS_1\vec{x}}_0 = r_2\norm{TS_1\vec{x}}_0 = cr_2\norm{S_1\vec{x}}_0 = \frac{cr_2}{r_1}\norm{\vec{x}}_0 = \norm{\vec{x}}_0 \ \text{.}$$
    This is true for all $\vec{x} \in \C_1'$, so $\C_1' = \C_1\otimes\vec{1}_{r_1}$ is equivalent to $\C_2' = \C_2\otimes\vec{1}_{r_2}$ as required to complete the proof.
\end{proof}

\subsection{Reductions}
\label{sec:reductions}

In this paper, we employ three different types of reductions. The first and most general notion is that of a \emph{Turing reduction}. Formally, there is a Turing reduction from a decision problem $A$ to a decision problem $B$ if and only if there exists an algorithm for $A$, given oracle access to $B$. In other words, $A$ Turing-reduces to $B$ if, given a subroutine for $B$, one can decide $A$. The second type of reduction is a \emph{Cook reduction}. Formally, a Cook reduction is a polynomial-time Turing reduction, i.e., $A$ Cook-reduces to $B$ if and only if $A$ Turing-reduces to $B$ and the overall algorithm runs in polynomial time. The third and final type of reduction is that of a \emph{Karp reduction} (a.k.a., a \emph{polynomial-time many-one reduction}). Formally, a Karp reduction is a deterministic and polynomial-time algorithm that maps YES and NO instances of $A$ to YES and NO instances of $B$, respectively.

\section{The Complexity of the Code Distortion Problem}
\label{sec:complexity}

We now study some basic facts about the complexity of both the exact and approximate versions of $\GapCDP$. Given the relationship of $\GapCDP$ to the matrix $0 \rightarrow 0$ norm (\cref{sec:matrix-norms}), we start by discussing the complexity of computing both the unrestricted and restricted versions of the matrix $0 \rightarrow 0$ norm.

\subsection{The Complexity of the Matrix \texorpdfstring{$0\rightarrow 0$}{0 to 0} Norm}

Interestingly, the (unrestricted) matrix $0 \rightarrow 0$ norm is easy to compute.

\begin{proposition} \label{prop:zznorm-val}
Let $\F$ be a field, let $m, n \in \Z^+$, and let $T = (\vec{t}_1, \ldots, \vec{t}_n) \in \F^{m \times n}$. Then,
\[
\norm{T}_{0 \rightarrow 0} = \max_{i \in [n]} \norm{\vec{t}_i}_0 \ \text{.}
\]
\end{proposition}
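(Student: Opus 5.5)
The plan is to prove the two inequalities $\norm{T}_{0\to0} \ge \max_i \norm{\vec{t}_i}_0$ and $\norm{T}_{0\to0} \le \max_i \norm{\vec{t}_i}_0$ separately.

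For the lower bound, take $\vec{x} = \vec{e}_i$, the $i$th standard basis vector of $\F^n$. Then $T\vec{e}_i = \vec{t}_i$ and $\norm{\vec{e}_i}_0 = 1$. So the ratio $\norm{T\vec{e}_i}_0/\norm{\vec{e}_i}_0$ equals $\norm{\vec{t}_i}_0$ for every $i \in [n]$, and maximizing over $i$ gives the lower bound.

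For the upper bound, fix a non-zero $\vec{x} \in \F^n$ and let $S := \supp(\vec{x})$, so that $\card{S} = \norm{\vec{x}}_0 \ge 1$. Write $T\vec{x} = \sum_{i \in S} x_i \vec{t}_i$. A coordinate $j$ of $T\vec{x}$ can be non-zero only if $j \in \supp(\vec{t}_i)$ for some $i \in S$, since scaling by a non-zero $x_i$ does not change supports. Hence
\[
\supp(T\vec{x}) \subseteq \bigcup_{i \in S} \supp(\vec{t}_i) \ \text{,}
\]
which is just the triangle inequality for Hamming weight together with $\norm{\alpha \vec{v}}_0 \le \norm{\vec{v}}_0$. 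A union bound then gives
\[
\norm{T\vec{x}}_0 \le \sum_{i \in S} \norm{\vec{t}_i}_0 \le \card{S} \cdot \max_{i} \norm{\vec{t}_i}_0 = \norm{\vec{x}}_0 \cdot \max_i \norm{\vec{t}_i}_0 \ \text{.}
\]
Dividing by $\norm{\vec{x}}_0$ and taking the maximum over non-zero $\vec{x}$ gives the upper bound.

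There is no real obstacle here. The only point needing care is the support containment when $\F$ is a general field: cancellations can only shrink the support, never enlarge it. This is why the argument gives an inequality rather than an equality, and why the basis vectors $\vec{e}_i$ already attain the maximum.
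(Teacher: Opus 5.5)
Your proposal is correct and follows essentially the same route as the paper: the upper bound comes from the triangle inequality applied to $T\vec{x} = \sum_{i \in \supp(\vec{x})} x_i \vec{t}_i$, which gives $\norm{T\vec{x}}_0 \leq \card{\supp(\vec{x})} \cdot \max_i \norm{\vec{t}_i}_0$, and the lower bound comes from evaluating at the standard basis vectors $\vec{e}_i$. Your support-containment phrasing is just a slightly more explicit version of the paper's triangle-inequality step.
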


\begin{proof}
By the triangle inequality, it holds for all $\vec{x} \in \F^n \setminus \set{\vec{0}}$ that
\fullornot{
\[
\frac{\norm{T\vec{x}}_0}{\norm{\vec{x}}_0}
= \frac{\norm{\sum_{i \in \supp(\vec{x})} x_i \vec{t}_i}_0}{\card{\supp(\vec{x})}}
\leq \frac{\sum_{i \in \supp(\vec{x})} \norm{x_i \vec{t}_i}_0}{\card{\supp(\vec{x})}}
\leq \frac{\card{\supp(\vec{x})} \cdot \max_{i \in \supp(\vec{x})} \norm{\vec{t}_i}_0}{\card{\supp(\vec{x})}}
= \max_{i \in \supp(\vec{x})} \norm{\vec{t}_i}_0 \ \text{.}
\]
}{
\begin{align*}
\frac{\norm{T\vec{x}}_0}{\norm{\vec{x}}_0}
&= \frac{\norm{\sum_{i \in \supp(\vec{x})} x_i \vec{t}_i}_0}{\card{\supp(\vec{x})}} \\
&\leq \frac{\sum_{i \in \supp(\vec{x})} \norm{x_i \vec{t}_i}_0}{\card{\supp(\vec{x})}} \\
& \leq \frac{\card{\supp(\vec{x})} \cdot \max_{i \in \supp(\vec{x})} \norm{\vec{t}_i}_0}{\card{\supp(\vec{x})}} \\
&= \max_{i \in \supp(\vec{x})} \norm{\vec{t}_i}_0 \ \text{.}
\end{align*}
}
Furthermore,
$\max_{i \in \supp(x)} \norm{\vec{t}_i}_0 \leq \max_{i \in [n]} \norm{\vec{t}_i}_0$, so $\norm{T}_{0 \rightarrow 0} \leq \max_{i \in [n]} \norm{\vec{t}_i}_0$.
On the other hand,
\[
\norm{T}_{0 \rightarrow 0}
\geq \max_{i \in [n]} \frac{\norm{T\vec{e}_i}_0}{\norm{\vec{e}_i}_0} = \max_{i \in [n]} \norm{\vec{t}_i}_0 \ \text{.} \qedhere
\]
\end{proof}

We note that an immediate consequence of \cref{prop:zznorm-val} is that there is a $\poly(m, n)$-time algorithm for computing $\norm{T}_{0 \to 0}$ for a matrix $T \in \F^{m \times n}$: simply compute the maximum Hamming weight of a column of $T$.
That said, we show that matrix $0\rightarrow 0$ norm \emph{restricted to a subspace} is hard to compute, even approximately. We formalize this by defining a decisional version of the problem.

\begin{definition}
Let $\F$ be a field, let $m,n,k,q \in \Z^+$, and let $\gamma = \gamma(m,n) \geq 1$. The \emph{decisional $\gamma$-approximate matrix $0 \rightarrow 0$ norm problem over a subspace of $\F_q$} ($\gamma$-GapNorm$^{0 \rightarrow 0}_{q}$) is the promise problem defined as follows. An instance consists of a basis $G \subseteq \F_q^{n \times k}$ of a subspace $\C \subseteq \F_q^n$, a matrix $T \in \F_q^{m \times n}$, and threshold $r \geq 0$. It is a:
\begin{itemize}
\item YES instance if $\norm{T|_\C}_{0 \rightarrow 0} \leq r$.
\item NO instance if $\norm{T|_\C}_{0 \rightarrow 0} > \gamma r$.
\end{itemize}
\end{definition}

We will now give a reduction from $\gamma'$-$\MDP_2$ to $\gamma$-$\GapNorm^{0 \rightarrow 0}_2$ for which it suffices to take $\gamma' = O(\gamma)$. This result is due to Golovnev and Stephens-Davidowitz~\cite{golovnev-stephens-davidowitz-personal-comm-26}.

\begin{theorem}[{\cite{golovnev-stephens-davidowitz-personal-comm-26}}]
\label{thm:MDPtoGapNormReduction}
Let $\gamma \geq 1$ and let $\epsilon > 0$ be a constant. Then, for all $\gamma' > \big(\frac{1 + \epsilon}{1 - \epsilon}\big) \cdot \gamma$, there is a Karp reduction from $\gamma'$-$\MDP_{2}$ to $\gamma$-$\GapNorm_2^{0\rightarrow 0}$.
\end{theorem}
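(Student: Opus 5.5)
The plan is to convert a codeword of small Hamming weight into a witness of a \emph{large} restricted $0 \to 0$ norm. The reduction therefore maps YES instances of $\gamma'$-$\MDP_2$ to NO instances of $\gamma$-$\GapNorm_2^{0\to 0}$ and vice versa. This is the direction needed for the $\coNP$-hardness in \cref{cor:zznorm-conp-hard}, and I will present it that way. The key gadget is an explicit \emph{$\epsilon$-balanced} binary code. This is a linear map $E \in \F_2^{N \times k}$ with $N = \poly(k)$ such that every nonzero $\vec{m} \in \F_2^k$ satisfies
\[
(1-\epsilon)\tfrac{N}{2} \leq \norm{E\vec{m}}_0 \leq (1+\epsilon)\tfrac{N}{2}.
\]
For constant $\epsilon$ such codes are deterministically constructible in polynomial time, e.g.\ a Reed--Solomon code concatenated with the Hadamard code, or the AGHP $\epsilon$-biased sets. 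This gadget is where the factor $\frac{1+\epsilon}{1-\epsilon}$ in the statement comes from.

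Given an $\MDP_2$ instance $(G, d)$ with $G \in \F_2^{n \times k}$ of full column rank, the steps are as follows.
\begin{enumerate}
\item Compute by Gaussian elimination a left inverse $L \in \F_2^{k \times n}$ with $LG = I_k$.
\item Set $T := EL \in \F_2^{N \times n}$, take the subspace $\C := \C(G)$, and set the threshold $r := \frac{(1+\epsilon)N}{2\gamma' d}$.
\end{enumerate}
For every nonzero codeword $\vec{c} = G\vec{m}$ we have $T\vec{c} = E\vec{m}$, so
\[
\frac{\norm{T\vec{c}}_0}{\norm{\vec{c}}_0} \in \Big[\frac{(1-\epsilon)N}{2\norm{\vec{c}}_0}, \frac{(1+\epsilon)N}{2\norm{\vec{c}}_0}\Big].
\]
Maximizing over $\vec{c}$ therefore gives
\[
\frac{(1-\epsilon)N}{2\lambda_1(\C)} \leq \norm{T|_\C}_{0\to 0} \leq \frac{(1+\epsilon)N}{2\lambda_1(\C)}.
\]

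Now I check the two cases. If $\lambda_1(\C) > \gamma' d$ (a NO instance of $\MDP$), then $\norm{T|_\C}_{0 \to 0} < \frac{(1+\epsilon)N}{2\gamma' d} = r$, which is a YES instance of $\GapNorm$. If $\lambda_1(\C) \leq d$ (a YES instance of $\MDP$), then $\norm{T|_\C}_{0 \to 0} \geq \frac{(1-\epsilon)N}{2d}$. This exceeds $\gamma r = \frac{\gamma (1+\epsilon) N}{2\gamma' d}$ exactly when $\gamma' > \frac{1+\epsilon}{1-\epsilon}\gamma$, which is the hypothesis, so we get a NO instance of $\GapNorm$. The degenerate case $d = 0$ is trivial and is mapped to a fixed instance.

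The main obstacle is the balanced-code step. I must make sure an explicit deterministic construction exists with $N = \poly(k)$ and weights concentrated within a $(1 \pm \epsilon)$ factor of $N/2$ for every nonzero message. The plan is to cite the concatenated Reed--Solomon$\circ$Hadamard construction and verify its bias bound. Everything else is linear algebra: the key point is that on $\C$ the map $T$ acts as $G\vec{m} \mapsto E\vec{m}$, so its output weight is essentially constant and the ratio is governed solely by $\norm{\vec{c}}_0$.
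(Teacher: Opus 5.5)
Your proposal is correct and follows essentially the paper's argument. An explicit $\epsilon$-balanced binary code forces every nonzero codeword to map to a word of weight in $[(1-\epsilon)\frac{N}{2}, (1+\epsilon)\frac{N}{2}]$, which pins $\norm{T|_\C}_{0\to 0}$ to within a $\frac{1+\epsilon}{1-\epsilon}$ factor of $\frac{N}{2\lambda_1(\C)}$, and YES/NO instances are swapped, exactly as the paper does for the co-NP-hardness corollary; your composition $T = EL$ with a left inverse of $G$ and your threshold $r = \frac{(1+\epsilon)N}{2\gamma' d}$ (strict inequalities on both sides exactly when $\gamma' > \frac{1+\epsilon}{1-\epsilon}\gamma$) also make precise two points the paper's write-up leaves loose, namely applying the balanced generator matrix directly to codewords and its closing choice of $r$ and $\gamma$.
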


\begin{proof}
Let $(G \in \F_2^{m \times n}, r')$ be an instance of $\gamma'$-$\MDP_2$, where $m,n \in \Z^+$. Moreover, let $T$ be a generator matrix of an $[m,n]_2$ code that is $\eps$-balanced, i.e., all non-zero codewords in $\C(T)$ have Hamming weight in the interval $[(1-\eps) \frac{m}{2}, (1 + \eps)\frac{m}{2}]$. By a result of Ta-Shma \cite{Ta-Shma2017}, such a $T$ exists and, moreover, can be constructed in deterministic polynomial time.

Now, if $(G, r')$ is a YES instance of $\gamma'$-MDP$_2$, then for all $\vec{x} \in \C \backslash \{\vec{0}\}$ with $\norm{\vec{x}}_0 = \lambda_1(\C)$,
$$
\norm{T|_{\C}}_{0 \rightarrow 0} \geq \frac{\norm{T\vec{x}}_0}{\norm{\vec{x}}_0} \geq \frac{\norm{T\vec{x}}_0}{r'} \geq \frac{(1 - \epsilon)m/2}{r'}.
$$
Here, the first inequality uses the definition of the restricted matrix $0 \rightarrow 0$ norm, the second uses the fact that $(G,r')$ is a YES instance, and the third uses the fact that $T$ generates an $\eps$-balanced code. 

On the other hand, if $(G, r')$ is a NO instance of $\gamma'$-MDP$_2$, then for all $\vec{x} \in \C \backslash \{\vec{0}\}$,
$$
\frac{\norm{T\vec{x}}_0}{\norm{\vec{x}}_0} < \frac{(1 + \epsilon)m/2}{\gamma' r'},
$$
which implies
$$
\norm{T|_{\C}}_{0 \rightarrow 0} = \max_{\vec{x} \in \C \backslash \{\vec{0}\}}\frac{\norm{T\vec{x}}_0}{\norm{\vec{x}}_0} < \frac{(1 + \epsilon)m/2}{\gamma' r'}.
$$

Altogether, with $\gamma = \frac{1-\eps}{1 + \eps} (\gamma' + \delta)$ and $r = \frac{(1 + \epsilon) m/2}{(\gamma' + \delta) r'}$ for any $\delta > 0$, it holds that $\norm{T|_\C}_{0 \rightarrow 0} > \gamma r$ (a YES instance of co-$\gamma$-$\GapNorm_2^{0\rightarrow0}$) implies a YES instance of $\gamma'$-MDP$_2$, and $\norm{T|_\C}_{0 \rightarrow 0} \leq r$ (a NO instance of co-$\gamma$-$\GapNorm_2^{0\rightarrow0}$) implies a NO instance of $\gamma'$-MDP$_2$.
\end{proof}

Consequently, by \cref{thm:mdp-hardness}, $\gamma$-$\GapNorm^{0\rightarrow0}_2$ is $\coNP$-hard. In fact, it is $\coNP$-complete.

\begin{corollary} \label{cor:zznorm-conp-hard}
For all $\gamma \geq 1$, $\gamma$-$\GapNorm_2^{0\rightarrow0}$ is $\coNP$-complete.
\end{corollary}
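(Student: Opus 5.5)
Completeness has two parts: hardness and membership.

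\emph{Hardness.} Fix any $\gamma \geq 1$ and choose a constant $\epsilon > 0$ (say $\epsilon = 1/2$). Then $\gamma' := \frac{1+\epsilon}{1-\epsilon}\gamma + 1$ is also a constant. By \cref{thm:mdp-hardness}, $\gamma'$-$\MDP_2$ is $\NP$-hard. \cref{thm:MDPtoGapNormReduction} gives a Karp reduction from $\gamma'$-$\MDP_2$ to the \emph{complement} of $\gamma$-$\GapNorm_2^{0\to0}$, since YES instances of $\MDP$ map to instances with large norm. Composing the two, every language in $\NP$ Karp-reduces to the complement of $\gamma$-$\GapNorm_2^{0\to0}$. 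Equivalently, every language in $\coNP$ Karp-reduces to $\gamma$-$\GapNorm_2^{0\to0}$, so the problem is $\coNP$-hard.

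\emph{Membership in $\coNP$.} I will exhibit an $\NP$ verifier for the complement promise problem, whose YES instances are those with $\norm{T|_\C}_{0\to0} > \gamma r$. The witness is a message $\vec{m} \in \F_2^k \setminus \set{\vec{0}}$. The verifier computes $\vec{x} := G\vec{m}$ and checks that $\vec{x} \neq \vec{0}$, which is automatic because $G$ is a basis. It then accepts iff $\norm{T\vec{x}}_0 > \gamma r \norm{\vec{x}}_0$; this comparison involves rationals of polynomial size, so it takes polynomial time.

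Soundness and completeness of this verifier follow directly from the definition of $\norm{T|_\C}_{0\to0}$ as a maximum over $\C \setminus \set{\vec{0}}$:
\begin{itemize}
\item If $\norm{T|_\C}_{0\to0} > \gamma r$, then a maximizing codeword $\vec{x}$ gives an accepting witness.
\item If $\norm{T|_\C}_{0\to0} \leq r \leq \gamma r$, then every $\vec{x} \in \C \setminus \set{\vec{0}}$ satisfies $\norm{T\vec{x}}_0 \leq r\norm{\vec{x}}_0 \leq \gamma r \norm{\vec{x}}_0$, so no witness is accepted.
\end{itemize}
In fact the verifier decides the exact version of the problem, so membership holds for every $\gamma \geq 1$.

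The one point needing care is bookkeeping of directions. The reduction in \cref{thm:MDPtoGapNormReduction} maps $\MDP$ YES instances to $\GapNorm$ NO instances, which is exactly why the result is $\coNP$-hardness rather than $\NP$-hardness. For hardness we also need $\gamma'$ to be a constant, so that \cref{thm:mdp-hardness} applies; this holds by the choice above.
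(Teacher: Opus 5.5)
Your proposal is correct and follows essentially the same route as the paper. Hardness comes from composing \cref{thm:mdp-hardness} with \cref{thm:MDPtoGapNormReduction}, which, as you rightly note, is really a reduction to the complement of $\GapNorm_2^{0\to0}$; membership comes from a single codeword $\vec{x} \in \C \setminus \set{\vec{0}}$ witnessing $\norm{T\vec{x}}_0 > \gamma r \norm{\vec{x}}_0$, and your write-up is more explicit than the paper's terse argument about the choice of a constant $\gamma' > \frac{1+\epsilon}{1-\epsilon}\gamma$ and the direction of the reduction.
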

\begin{proof}
Since $\gamma$-$\GapNorm^{0\rightarrow0}_2$ is $\coNP$-hard, it suffices to prove that $\gamma$-$\GapNorm_2^{0\rightarrow0} \in \coNP$. But this is plain, since if $(G,T,r)$ is a YES instance, then for all $\vec{x} \in \C \backslash \{\vec{0}\}$, $\|T \vec{x}\|_0 / \|\vec{x}\|_0 \leq r$, and if $(G,T,r)$ is a NO instance, then there exists $\vec{x} \in \C \backslash \{\vec{0}\}$ for which $\|T \vec{x}\|_0 / \|\vec{x}\|_0 \geq \gamma r$. 
\end{proof}

\subsection{GapCDP is in \texorpdfstring{$\Sigma_2^\P$}{Sigma\_2\^P}}

We now show that the code distortion problem is in $\Sigma_2^\P$, the second level of the polynomial hierarchy. Recall that a language $L \in \Sigma_2^\P$ if and only if there exists two polynomials $p_1$ and $p_2$ as well as a polynomial-time deterministic Turing machine $M$ such that for all inputs $x \in \{0,1\}^*$, $x \in L$ if and only if $(\exists y \in \{0,1\}^{p_1(|x|)})(\forall z \in \{0,1\}^{p_2(|x|)}) : M(x,y,z) = 1$.

\begin{theorem} \label{thm:sigma2-containment}
For all $q \in \Z^+$, $\GapCDP_q \in \Sigma_2^\P$.
\end{theorem}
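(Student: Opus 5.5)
We prove the exact problem $\GapCDP_q$ (i.e., $\gamma = 1$) lies in $\Sigma_2^\P$. Since every promise version with $\gamma \geq 1$ has the same YES instances and fewer NO instances, the same verifier then works for all $\gamma$. The plan is to write the YES condition $\D(\C_1,\C_2) \leq D$ directly as an $\exists\forall$ statement. By \cref{def:code-dist-new} and \cref{eq:dist-via-matrix-norms},
\[
\D(\C_1,\C_2) \leq D \iff \exists\, T \in \GL_n(\F_q) \text{ with } T(\C_1) = \C_2 \text{ and } \zznorm{T|_{\C_1}}\cdot\zznorm{T^{-1}|_{\C_2}} \leq D \ \text{.}
\]
The product of the two maxima is at most $D$ exactly when, for every pair of nonzero $\vec{x} \in \C_1$, $\vec{y} \in \C_2$,
\[
\norm{T\vec{x}}_0 \cdot \norm{T^{-1}\vec{y}}_0 \leq D \cdot \norm{\vec{x}}_0 \cdot \norm{\vec{y}}_0 \ \text{.}
\]
Here we use that $\max_x a_x \cdot \max_y b_y = \max_{x,y} a_x b_y$ for positive quantities, which is the same decoupling trick used in the overview for \cref{thm:intro-approx-alg-succ-min-same}.

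The existential witness $y$ is the matrix $T \in \F_q^{n\times n}$, or more economically, a generator matrix $G_2'$ of $\C_2$ paired with $G_1$. By \cref{lem:invertibilitylemma} we may take $T$ invertible, and $T$ has $n^2\lceil\log q\rceil$ bits, so it is polynomial size. The universal string $z$ encodes a pair of messages $\vec{a},\vec{b} \in \F_q^k$, where $\vec{x} = G_1\vec{a}$ and $\vec{y} = G_2\vec{b}$; this is also polynomial size.

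The verifier $M$ performs the following checks:
\begin{enumerate}
\item It checks by Gaussian elimination that $T$ is invertible and that $TG_1$ spans the same space as $G_2$, i.e.\ $\mathrm{rank}(TG_1) = \mathrm{rank}(G_2) = \mathrm{rank}(TG_1 \mid G_2) = k$. This check does not depend on $z$.
\item It computes $T^{-1}$.
\item If $\vec{a} = \vec 0$ or $\vec{b} = \vec 0$, it accepts.
\item Otherwise it checks the displayed inequality using integer Hamming weights and the rational $D$.
\end{enumerate}
All steps run in $\poly(n,\log q)$ time, including the arithmetic in $\F_q$, given a representation of the field. Correctness follows from the equivalence above: YES instances have a witness $T$ passing every $z$; in NO instances, for every $T$ either check 1 fails or some pair $(\vec{a},\vec{b})$ violates the inequality.

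The only real subtlety, and the point to be careful about, is the decoupling of the ratio of max and min into a single universally quantified condition over pairs. It must be checked that $T^{-1}$ maps $\C_2$ onto $\C_1$, so that $\min_{\vec{x}} \norm{T\vec{x}}_0/\norm{\vec{x}}_0 = 1/\max_{\vec{y}} \norm{T^{-1}\vec{y}}_0/\norm{\vec{y}}_0$. This holds because $T$ restricted to $\C_1$ is a bijection onto $\C_2$. The remaining concern is the encoding of $D$ as part of the input, which we assume is a rational given in binary.
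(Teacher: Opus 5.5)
Your proposal is correct and follows the paper's proof: existentially guess an invertible $T$ with $T(\C_1)=\C_2$, then universally check a pair of codewords against the product-of-ratios bound. The only cosmetic difference is that you draw $\vec{x}\in\C_1$, $\vec{y}\in\C_2$ and use $T^{-1}$, whereas the paper draws both vectors from $\C_1$ and compares $\norm{T\vec{x}}_0/\norm{\vec{x}}_0$ to $\norm{T\vec{y}}_0/\norm{\vec{y}}_0$; this is equivalent because $T|_{\C_1}$ is a bijection onto $\C_2$, and your explicit rank check and cross-multiplied inequality simply fill in details the paper leaves implicit.
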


\begin{proof}
Let $(G_1, G_2, D)$ be an instance of $\GapCDP_q$, where $G_1$ and $G_2$ generate the codes $\C_1$ and $\C_2$, respectively. By definition, $(G_1, G_2, D)$ is a YES instance if and only if $\D(\C_1, \C_2) \leq D$. This holds if and only if there exists $T \in \GL_m(\F_q)$ with $T(\C_1) = \C_2$ such that for all $\vec{x}, \vec{y} \in \C_1 \backslash \{\vec{0}\}$, 
$$
\left(\frac{\norm{T\vec{x}}_0}{\norm{\vec{x}}_0}\right) \cdot \left(\frac{\norm{T\vec{y}}_0}{\norm{\vec{y}}_0}\right)^{-1} \leq D.
$$
For all such $T$, $\vec{x}$, and $\vec{y}$, the above inequality is checkable in deterministic polynomial time. Moreover, the size of all such $T$, $\vec{x}$, and $\vec{y}$ is a polynomial in the size of the input $(G_1, G_2, D)$. Consequently, as the order of the $\exists$ and $\forall$ quantifiers in the above reformulation of $\GapCDP_q$ is consistent with the class $\Sigma_2^\P$, it holds that $\GapCDP_q \in \Sigma_2^\P$, as desired.
\end{proof}

As mentioned in \cref{sec:open-questions}, we suspect $\GapCDP_q$ is also $\Sigma_2^\P$-hard, thus making it $\Sigma_2^\P$-complete. However, proving this remains an interesting open question.

\subsection{GMSS for Codes}

We will now give a reduction from $\gamma$-$\MDP$ to $\gamma$-$\NCP$ that is analogous to the seminal reduction from the Shortest Vector Problem to the Closest Vector Problem on lattices due to~\cite{journals/ipl/GoldreichMSS99}. We will use this result in our proof of $\NP$-hardness of approximate $\CDP$.

\begin{theorem}[GMSS for Codes] \label{thm:gmss-for-codes}
Let $m, n \in \Z^+$ with $n \leq m$, let $\gamma = \gamma(m, n) \geq 1$, and let $q$ be a prime power. Then there is a $\poly(m, q)$-time Turing reduction from $\gamma$-$\MDP_q$ on $[m, n]_q$ codes to $\gamma$-$\NCP_q^{1/\gamma}$ (and hence also $\gamma$-$\NCP_q$) on $[m, n - 1]_q$ codes.
\end{theorem}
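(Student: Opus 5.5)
We adapt the GMSS construction directly. Given an instance $(G, d)$ of $\gamma$-$\MDP_q$ with $G = (\vec{g}_1, \ldots, \vec{g}_n) \in \F_q^{m \times n}$, the reduction makes $n$ oracle calls. For each $i \in [n]$ it forms the instance $(G^{(i)}, \vec{t}^{(i)}, d)$ with $G^{(i)} := (\vec{g}_1, \ldots, \vec{g}_{i-1}, \vec{g}_{i+1}, \ldots, \vec{g}_n) \in \F_q^{m \times (n-1)}$ and $\vec{t}^{(i)} := \vec{g}_i$. It outputs YES if and only if some oracle call returns YES. Over $\F_q$ a nonzero message has some nonzero coordinate, which we can scale to $1$, so unlike the lattice case we do not need the ``$2\vec{b}_i$'' trick. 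This also keeps the rank exactly $n-1$ and the reduction polynomial time (even independent of $q$ beyond arithmetic).

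Completeness: suppose $\lambda_1(\C(G)) \le d$, witnessed by $G\vec{x}$ with $\vec{x} \ne \vec{0}$. Pick $i$ with $x_i \ne 0$ and replace $\vec{x}$ by $x_i^{-1}\vec{x}$; the weight is unchanged since scaling preserves support. Then $G\vec{x} = \vec{g}_i + G^{(i)}\vec{y}$ for some $\vec{y} \in \F_q^{n-1}$, so $\dist(\vec{t}^{(i)}, \C(G^{(i)})) = \dist(\vec{g}_i, \C(G^{(i)})) \le \norm{\vec{g}_i - G^{(i)}(-\vec{y})}_0 \le d$. Thus instance $i$ is a YES instance and the oracle must accept it.

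Soundness: suppose $\lambda_1(\C(G)) > \gamma d$. For every $i$ and every $\vec{y}$, the vector $\vec{g}_i - G^{(i)}\vec{y}$ equals $G\vec{x}$ for some $\vec{x}$ with $x_i = 1$. Since $G$ has full column rank, $G\vec{x} \neq \vec{0}$, so its weight exceeds $\gamma d$. Hence $\dist(\vec{t}^{(i)}, \C(G^{(i)})) > \gamma d$. For the extra promise we need $d < \frac{1}{\gamma}\lambda_1(\C(G^{(i)}))$. But $\C(G^{(i)}) \subseteq \C(G)$, so $\lambda_1(\C(G^{(i)})) \ge \lambda_1(\C(G)) > \gamma d$, which gives exactly that. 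So every instance is a NO instance of $\gamma$-$\NCP_q^{1/\gamma}$ and every call rejects.

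The main point requiring care is the NO-case promise $d < \alpha\lambda_1$ with $\alpha = 1/\gamma$. It follows from the subcode inclusion as shown, using the strict inequality in the $\MDP$ NO condition. A second minor point is $n = 1$, where $G^{(i)}$ is empty. There the code is $\{\vec{0}\}$, $\lambda_1 = \infty$ by convention, and $\dist(\vec{g}_1, \{\vec{0}\}) = \norm{\vec{g}_1}_0$, so the argument goes through; alternatively the reduction can solve this case directly. The reduction to plain $\gamma$-$\NCP_q$ then follows from the trivial inclusion noted after \cref{def:gap-ncpvar}.
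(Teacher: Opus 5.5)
Your proof is correct and is essentially the paper's argument: delete the $i$th column, use a multiple of $\vec{g}_i$ as the target, and obtain the NO-case promise from $\C(G^{(i)}) \subseteq \C(G)$. The one difference is that you normalize the nonzero coefficient to $x_i = 1$ and make only $n$ oracle calls with target $\vec{g}_i$, whereas the paper makes $(q-1)n$ calls with targets $-j\vec{g}_i$ for all $j \in \F_q^*$. Your version is valid, since scaling by $j \neq 0$ preserves both $\C(G^{(i)})$ and Hamming weight, and it makes the reduction polynomial in $\log q$ rather than in $q$, which would remove the $q \le \poly(n)$ restriction that the paper attributes to this step.
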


\begin{proof}
Let $(G = (\vec{g}_1, \ldots, \vec{g}_n) \in \F_q^{m \times n}, d)$ be the input instance of $\gamma$-$\MDP_q$.
The reduction does the following. It constructs $(q-1) n$ many instances $(G_i, \vec{t}_{i,j} := -j \cdot \vec{g}_i, d)$ of $\NCP_q$ for $1 \leq i \leq n$ and $j \in \F_q^*$, where $G_i := (\vec{g}_1, \ldots, \vec{g}_{i-1}, \vec{g}_{i+1}, \ldots, \vec{g}_n)$ is ``$G$ with its $i$th column removed.''
It then calls its $\gamma$-$\NCP_q^{1/\gamma}$ oracle on each instance $(G_i, \vec{t}_{i,j}, d)$. The reduction outputs YES if the oracle responds with YES on some input, and otherwise it outputs NO.

It is clear that the reduction runs in the stated amount of time, and it remains to show its correctness. 
Suppose that the input is a YES instance. Then there exists $\vec{a} \in \F_q^n \setminus \set{\vec{0}}$ such that $\norm{G \vec{a}}_0 \leq d$. Let $a_i$ be a non-zero coordinate of $\vec{a}$, and let $\vec{a}' := (a_1, \ldots, a_{i-1}, a_{i+1}, \ldots, a_n) \in \F_q^{n-1}$ be ``$\vec{a}$ with its $i$th coordinate deleted.'' Then for $j = a_i$, we have that
\[
\norm{G_i \vec{a}' - \vec{t}_{i,j}}_0
= \norm{G_i \vec{a}' + j \vec{g}_i}_0
= \big\| \sum_{\ell \neq i} a_{\ell} \vec{g}_{\ell} + a_i \vec{g}_i \big\|_0
= \norm{G \vec{a}}_0 \leq d \ \text{.}
\]
So, the $\gamma$-$\NCP_q^{1/\gamma}$ oracle outputs YES on input $(G_i, \vec{t}_{i,j}, d)$, as needed.

Now, suppose that the input is a NO instance.
Let $\C := \C(G)$ and let $\C_i := \C(G_i)$.
We note that for every $1 \leq i \leq n$ and $j \in \F_q^*$,
\[
\C_i \subset \C \text{ and } \C_i - \vec{t}_{i,j} \subset \C \setminus \set{\vec{0}} \ \text{.}
\]
It follows that $\lambda_1(\C_i) \geq \lambda_1(\C)$ and that $\dist(\vec{t}_{i,j}, \C_i) \geq \lambda_1(\C)$ for all $1 \leq i \leq n$ and $j \in \F_q^*$. Because the input is a NO instance, $\lambda_1(\C) > \gamma d$, and so we have that for every $1 \leq i \leq n$, $\lambda_1(\C_i) \geq \lambda_1(\C) > \gamma d$, and therefore $d < \lambda_1(\C_i)/\gamma$. Moreover, for every $1 \leq i \leq n$ and $j \in \F_q^*$, $\dist(\vec{t}_{i,j}, \C_i) \geq \lambda_1(\C) > \gamma d$.
Therefore, each instance $(G_i, \vec{t}_{i,j}, d)$ is a NO instance of $\gamma$-$\NCP_q^{1/\gamma}$, as needed.
\end{proof}

\subsection{\texorpdfstring{$\NP$}{NP}-hardness of the Code Distortion Problem}
\label{sec:np-hardness-cdp}

We next give a reduction from $\gamma'$-$\NCP_q^{\alpha}$ to $\gamma$-$\CDP_q$ for which it suffices to take $\gamma' = O(\gamma)$. 

\begin{theorem} \label{thm:ncp-to-cdp}
Let $\gamma \geq 1$, let $q$ be a prime power, and let $D > 1$ be a constant. Then for any $\gamma' > \ceil{\frac{D+1}{D-1}}D \cdot \gamma$, there is a Karp reduction from $\gamma'$-$\NCP_{q}^{1/\gamma'}$ to $\gamma$-$\CDP$ with distortion $D$.
\end{theorem}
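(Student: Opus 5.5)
The plan is to give the Karp reduction suggested by \cref{eq:intro-ncp-to-cdp-attempt}, with the repetition length tuned to $D$. Given an instance $(G,\vec{t},d)$ of $\gamma'$-$\NCP_q^{1/\gamma'}$, set $k := \ceil{\frac{D+1}{D-1}}$ and $r := kd$. Output the generator matrices
\[
G_1 := \begin{pmatrix} G & \vec{0} \\ 0 & \vec{1}_r \end{pmatrix}, \qquad
G_2 := \begin{pmatrix} G & -\vec{t} \\ 0 & \vec{1}_r \end{pmatrix},
\]
together with the distortion threshold $D$. If the NO-case analysis below forces it, I would modify the construction by padding or repeating blocks. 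Throughout I would use \cref{eq:dist-via-matrix-norms} and \cref{lem:invertibilitylemma}, so that I may assume $T$ is invertible. A codeword of $\C_1$ is $(G\vec{x},c\vec{1}_r)$ with weight $\norm{G\vec{x}}_0+r[c\neq 0]$, and every codeword of $\C_2$ has the analogous form with $G\vec{x}-c\vec{t}$ in the first block.

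\textbf{YES case.} Choose $\vec{y}$ with $\vec{e}:=G\vec{y}-\vec{t}$ of weight at most $d$. Then $\big(\begin{smallmatrix} G & \vec{e}\\ 0&\vec{1}_r\end{smallmatrix}\big)$ is another basis of $\C_2$. Let $T$ map $(G\vec{x},c\vec{1}_r)\mapsto(G\vec{x}+c\vec{e},c\vec{1}_r)$. For $c=0$ the ratio of weights is exactly $1$. For $c\neq0$ the ratio lies in
\[
\Big[\frac{\norm{G\vec{x}}_0+r-d}{\norm{G\vec{x}}_0+r},\ \frac{\norm{G\vec{x}}_0+r+d}{\norm{G\vec{x}}_0+r}\Big]\subseteq\Big[\frac{k-1}{k},\frac{k+1}{k}\Big].
\]
Hence $\D_T(\C_1,\C_2)\le\frac{k+1}{k-1}\le D$ by the choice of $k$. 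This step is routine.

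\textbf{NO case.} Here $\dist(\vec{t},\C)>\gamma'd$ and $\lambda_1(\C)>\gamma'd$. Every nonzero codeword of $\C_2$ therefore has weight greater than $\gamma'd$: if $c\neq0$ the first block alone has weight at least $\dist(\vec{t},\C)$, and if $c=0$ the codeword is a nonzero element of $\C$. Take $\vec{v}=(\vec{0},\vec{1}_r)\in\C_1$. Then $\norm{T\vec{v}}_0/\norm{\vec{v}}_0>\gamma'/k$, which gives $\zznorm{T|_{\C_1}}>\gamma'/k$. The remaining task is to bound $\zznorm{T^{-1}|_{\C_2}}$ from below by a constant, i.e.\ to exhibit $\vec{w}\in\C_1$ with $\norm{T\vec{w}}_0\lesssim\norm{\vec{w}}_0$. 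Then $\D_T>\gamma'/(kD')$ for a suitable constant $D'$, and the hypothesis $\gamma'>kD\gamma$ gives $\D_T>\gamma D$.

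I expect this lower bound on $\zznorm{T^{-1}|_{\C_2}}$ to be the main obstacle. It is exactly why the naive construction "does not clearly work" and must be modified. The difficulty is that the light vectors of $\C_1$ (those of weight about $kd$) need not be the preimages of the light vectors of $\C_2$. My first attempt would be a dimension and pigeonhole argument. The subspace $T^{-1}(\C\times\{0\})$ has codimension one in $\C_1$, so it meets $\C\times\{0\}$ in a subspace of dimension at least $\dim\C-1$. On that intersection both the weights and the images live inside copies of $\C$, and I would try to compare a heaviest preimage with a lightest image there. If that fails to give a constant, I would modify the instance so that both codes contain an identical known subcode, for example a block of fresh coordinates shared verbatim. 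The goal is to force $\zznorm{T^{-1}|_{\C_2}}\ge1$ up to a constant while keeping the YES-case bound. Such a padding must be chosen so that disjoint-support sums only form mediants of the ratios, which keeps the YES distortion at most $\frac{k+1}{k-1}$. I would then pay any resulting constant loss through the slack between $\gamma'$ and $\ceil{\frac{D+1}{D-1}}D\gamma$.
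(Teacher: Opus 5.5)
Your YES case, and in the NO case the bound $\zznorm{T|_{\C_1}}>\gamma'/k$ from $T(\vec 0,\vec 1_r)$ (with your $k=\ceil{\frac{D+1}{D-1}}$), are correct and agree with the paper. The gap is the other factor. You never establish a lower bound on $\zznorm{T^{-1}|_{\C_2}}$, and for the single-copy construction no such bound holds, so your dimension-counting plan cannot be completed.

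Here is a concrete failure. Work over $\F_2$ and take an integer $\rho>\gamma'/k$; since $\gamma'/k>D\gamma>1$, this gives $\rho\ge 2$. Let $\C$ be spanned by disjointly supported $\vec u,\vec v$ of weights $\rho r$ and $\rho^2 r$. Let $\vec t$ be supported on further coordinates with $\norm{\vec t}_0=\rho^3 r-r$. This is a NO instance, since $\lambda_1(\C)=\rho r>\gamma' d$ and $\dist(\vec t,\C)=\norm{\vec t}_0>\gamma' d$. But $\C_1$ and $\C_2$ then have bases with pairwise disjoint supports: $(\vec 0,\vec 1_r),(\vec u,\vec 0),(\vec v,\vec 0)$ of weights $r,\rho r,\rho^2 r$, and $(\vec u,\vec 0),(\vec v,\vec 0),(\vec t,\vec 1_r)$ of weights $\rho r,\rho^2 r,\rho^3 r$. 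The linear map sending the first basis to the second multiplies the weight of every codeword by exactly $\rho$. So $\D(\C_1,\C_2)=1$, and the reduction outputs a YES instance.

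Your fallback of a block shared verbatim by both codes points in the right direction. The argument, however, needs that block to contribute a second vector of weight $\lambda_1(\C)$ while having no nonzero vector lighter than $\lambda_1(\C)$. The reduction cannot compute $\lambda_1(\C)$, so the block must be a second copy of $G$. This is exactly the paper's construction: $G_1,G_2$ are block-diagonal in $G,G,\vec 1_r$, with $-\vec t$ in the first block row of $G_2$.

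The missing step is then short. For a minimum-weight $\vec x'\in\C$, the vectors $(\vec x',\vec 0,\vec 0)$ and $(\vec 0,\vec x',\vec 0)$ of $\C_2$ are linearly independent and each has weight $\lambda_1(\C)$, so their preimages are linearly independent. The only nonzero codewords of $\C_1$ that can weigh less than $\lambda_1(\C)$ are the multiples of $(\vec 0,\vec 0,\vec 1_r)$. Hence one preimage has weight at least $\lambda_1(\C)$, and $\zznorm{T^{-1}|_{\C_2}}\ge 1$. In successive-minima terms, $\lambda_2(\C_1)\ge\lambda_1(\C)\ge\lambda_2(\C_2)$, which is precisely what fails in the example above.

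Finally, you cannot absorb a constant loss into the slack. The hypothesis allows $\gamma'$ to exceed $\ceil{\frac{D+1}{D-1}}D\gamma$ by an arbitrarily small amount, so you need $\zznorm{T^{-1}|_{\C_2}}\ge1$ exactly. That bound gives $\D_T(\C_1,\C_2)>\gamma'/k>\gamma D$.
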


\begin{proof}
Let $(G \in \F_q^{n \times k}, \vec{t} \in \F_q^n, d \in \Z^+)$ be an instance of $\gamma'$-$\NCP_{q}^{1/\gamma'}$. Define
\[
G_1 := \begin{pmatrix}
G & 0 & \vec{0} \\
0 & G & \vec{0} \\
0 & 0 & \vec{1}_r 
\end{pmatrix} \ \text{,} \qquad 
G_2 := \begin{pmatrix}
G & 0 & -\vec{\vec{t}} \\
0 & G & \vec{0} \\
0 & 0 & \vec{1}_r 
\end{pmatrix}
\]
with
$r := \ceil{\frac{D+1}{D-1}} \cdot d$. The reduction outputs the $\gamma$-$\CDP_q$ instance $(G_1, G_2, D)$.

It is clear that the reduction is efficient, and it remains to shows its correctness.
Let $\C := \C(G)$, $\C_1 := \C(G_1)$, and $\C_2 := \C(G_2)$.
Suppose that the input is a YES instance of $\gamma'$-$\NCP_{q}^{1/\gamma'}$.
Let $\vec{t}' \in \C - \vec{t}$ be such that $\norm{\vec{t}'}_0 = \dist(\vec{t}, \C)$, i.e., $\vec{t}'$ is a shortest codeword in the coset $\C - \vec{t}$.
Define
\[
G_2' := \begin{pmatrix}
G & 0 & \vec{\vec{t}}' \\
0 & G & \vec{0} \\
0 & 0 & \vec{1}_r 
\end{pmatrix}
\]
Notice that $\C(G_2) = \C(G_2')$, and, because the input is a YES instance, that $\norm{\vec{t}'}_0 \leq d$.%
\footnote{Actually computing $\vec{t}'$ amounts to solving $\gamma'$-$\NCP_{q}^{1/\gamma'}$, which is a hard problem. However, here we are only using $\vec{t}'$ and $G_2'$ for analysis, and we are not computing them.}
Let $G_1 = (\vec{g}_1, \ldots, \vec{g}_{2n+1})$, let $G_2' = (\vec{g}_1', \ldots, \vec{g}_{2n+1}')$, and let $M$ be an invertible linear map such that $M : \vec{g}_i \mapsto \vec{g}_i'$ for every $i$. Such a full-rank map $M$ exists because $G_1$ and $G_2'$ are full-rank, and it is clear that $M$ maps $\C_1$ to $\C_2$.

Let $\vec{x} \in \C_1$ be an arbitrary non-zero codeword.
We can write $\vec{x} = (\vec{x}', \vec{x}'', a \vec{1}_r)$ for some $\vec{x}', \vec{x}'' \in \C$ and $a \in \F_q$.
If $a = 0$, then $\norm{M\vec{x}}_0 = \norm{\vec{x}}_0$, and so $\norm{M\vec{x}}_0/\norm{\vec{x}}_0 = 1$.
On the other hand, assume that $a \neq 0$. Then $M\vec{x} = \vec{x} + (a \vec{t}', \vec{0})$,
and so by triangle inequality and the fact that $\norm{\vec{t}'}_0 \leq d$, $\norm{M\vec{x}}_0 \leq \norm{\vec{x}}_0 + d$.
Therefore,
\begin{equation} \label{eq:dist-ub}
\frac{\norm{M\vec{x}}_0}{\norm{\vec{x}}_0} \leq \frac{\norm{\vec{x}}_0 + d}{\norm{\vec{x}}_0} \leq
\frac{\norm{\vec{x}}_0 + d}{\norm{a \vec{1}_r}_0} =
1 + d/r \ \text{.}
\end{equation}
Furthermore, $\vec{x} = M\vec{x} - (a\vec{t}', \vec{0})$, and so $\norm{\vec{x}}_0 = \norm{M\vec{x} - (a\vec{t}', \vec{0})}_0 \leq \norm{M\vec{x}}_0 + \norm{\vec{t}'}_0 \leq \norm{M\vec{x}}_0 + d$. It follows that $\norm{M\vec{x}}_0 \geq \norm{\vec{x}}_0 - d$, and therefore,
\begin{equation} \label{eq:dist-lb}
\frac{\norm{M\vec{x}}_0}{\norm{\vec{x}}_0} 
\geq \frac{\norm{\vec{x}}_0 - d}{\norm{\vec{x}}_0} 
= 1 - \frac{d}{\norm{\vec{x}}_0}
\geq 1 - \frac{d}{\norm{a \vec{1}_r}_0} =
1 - d/r \ \text{.}
\end{equation}

Recalling that $r  = \ceil{\frac{D+1}{D-1}}d \ge \frac{D+1}{D-1}d$, we get that $d/r \le \frac{D-1}{D+1}$. By combining \cref{eq:dist-ub,eq:dist-lb} we then have that
\begin{align*}
    \D(\C_1, \C_2) &\le \max_{\vec{x} \in \C_1 \setminus \set{\vec{0}}} \Big(\frac{\norm{M\vec{x}}_0}{\norm{\vec{x}}_0}\Big) \big/ \min_{\vec{x} \in \C_1 \setminus \set{\vec{0}}} \Big(\frac{\norm{M\vec{x}}_0}{\norm{\vec{x}}_0}\Big)  \\
    &\leq \left(1 + d/r\right)/\left(1 - d/r\right)\\
    &\le \left(1 + \frac{D-1}{D+1}\right)/\left(1 - \frac{D-1}{D+1}\right)\\
    &= \left(\frac{2D}{D+1}\right)/\left(\frac{2}{D+1}\right)\\
    &= D,
\end{align*}
as needed.

Now, suppose that the input is a NO instance, and let $M$ be an invertible linear map such that $M(\C_1) = \C_2$.
Let $\vec{y} := M \vec{g}_{2n+1} = M \cdot (\vec{0}, \vec{1}_r)$, and note that $\vec{y} = (\vec{y}' - a \vec{t}, \vec{y}'', a \vec{1}_r)$ for some $\vec{y}', \vec{y}'' \in \C$ and $a \in \F_q$.
If $a \neq 0$, then
\[
\norm{\vec{y}}_0 = \norm{\vec{y}' - a \vec{t}}_0 + \norm{\vec{y}''}_0 + \norm{a \vec{1}_r}_0 \geq \dist(-a \vec{t}, \C) + r > \gamma' d + r \ \text{,}
\]
where we have used that $\dist(-a \vec{t}, \C) = \dist(\vec{t}, \C)$ for $a \neq 0$.
On the other hand, if $a = 0$, then
\[
\norm{\vec{y}}_0 = \norm{\vec{y}'}_0 + \norm{\vec{y}''}_0 \geq \lambda_1(\C) > \gamma' d \ \text{,}
\]
where the first inequality holds because at least one of $\vec{y}'$ and $\vec{y}''$ is a non-zero codeword in $\C$ (since $\vec{y} \neq \vec{0}$), and the second inequality holds because of the assumption that $d < \lambda_1(\C)/\gamma'$, which follows from the definition of NO instances of $\gamma'$-$\NCP_q^{1/\gamma'}$.
In either case, we have that
\begin{equation} \label{eq:C1-to-C2}
\max_{\vec{x} \in \C_1 \setminus \set{\vec{0}}} \frac{\norm{M\vec{x}}_0}{\norm{\vec{x}}_0}
\geq \frac{\norm{M \vec{g}_{2n+1}}_0}{\norm{\vec{g}_{2n+1}}_0}
= \frac{\norm{\vec{y}}_0}{r}
> \gamma' d/r \ \text{.}
\end{equation}

Now, let $\vec{x}' \in \C$ be such that $\norm{\vec{x}'}_0 = \lambda_1(\C)$, and define $\vec{x}_1 := (\vec{x}', \vec{0}, \vec{0}), \vec{x}_2 := (\vec{0}, \vec{x}', \vec{0})$. Note that $\vec{x}_1, \vec{x}_2 \in \C_2$, and that $\vec{y}_1 := M^{-1} \vec{x}_1 = (\vec{y}_1', \vec{y}_1'', a_1 \vec{1}_r)$ and $\vec{y}_2 := M^{-1} \vec{x}_2 = (\vec{y}_2', \vec{y}_2'', a_2 \vec{1}_r)$ for some $\vec{y}_1', \vec{y}_1'', \vec{y}_2', \vec{y}_2'' \in \C$ and $a_1, a_2 \in \F_q$.
Furthermore, because $\vec{x}_1$ and $\vec{x}_2$ are linearly independent and $M$ is invertible, $\vec{y}_1$ and $\vec{y}_2$ must also be linearly independent. So, at least one of $\vec{y}_1', \vec{y}_1'', \vec{y}_2', \vec{y}_2''$ is non-zero, and therefore $\max \set{\norm{M^{-1} \vec{x}_1}_0, \norm{M^{-1} \vec{x}_2}_0} \geq \lambda_1(\C)$.
It follows that
\fullornot{
\begin{equation} \label{eq:C2-to-C1}
\max_{\vec{x} \in \C_2 \setminus \set{\vec{0}}} \frac{\norm{M^{-1}\vec{x}}_0}{\norm{\vec{x}}_0}
\geq \max \Big\{ \frac{\norm{M^{-1} \vec{x}_1}_0}{\norm{\vec{x}_1}_0}, \frac{\norm{M^{-1} \vec{x}_2}_0}{\norm{\vec{x}_2}_0} \Big\}
= \max \set{\norm{M^{-1} \vec{x}_1}_0, \norm{M^{-1} \vec{x}_2}_0}/\lambda_1(\C) \geq 1 \ \text{.}
\end{equation}
}{
\begin{align} \label{eq:C2-to-C1}
\begin{split}
\max_{\vec{x} \in \C_2 \setminus \set{\vec{0}}} \frac{\norm{M^{-1}\vec{x}}_0}{\norm{\vec{x}}_0}
&\geq \max \Big\{ \frac{\norm{M^{-1} \vec{x}_1}_0}{\norm{\vec{x}_1}_0}, \frac{\norm{M^{-1} \vec{x}_2}_0}{\norm{\vec{x}_2}_0} \Big\} \\
&= \max \set{\norm{M^{-1} \vec{x}_1}_0, \norm{M^{-1} \vec{x}_2}_0}/\lambda_1(\C) \geq 1 \ \text{.}
\end{split}
\end{align}
}
Combining \cref{eq:C1-to-C2,eq:C2-to-C1} then gives
\[
\D(\C_1, \C_2) > \gamma' d/r = \frac{\gamma'}{\ceil{\frac{D+1}{D-1}}} > \frac{\ceil{\frac{D+1}{D-1}} \cdot \gamma D}{\ceil{\frac{D+1}{D-1}}} = \gamma D \ \text{,}
\]
as needed.
\end{proof}

From \cref{thm:gmss-for-codes,thm:ncp-to-cdp} and the fact that $\gamma$-$\MDP$ is $\NP$-hard for any constant $\gamma \geq 1$, we conclude that the $\CDP$ is $\NP$-hard to approximate to within any constant factor. 

\hardnessapprox*

\begin{proof}
Combine the $\NP$-hardness result in \cref{thm:mdp-hardness} with the reductions in \cref{thm:gmss-for-codes,thm:ncp-to-cdp}.
\end{proof}

We note that restriction on the field size $q$ in \cref{thm:intro-hardness-of-approx} is due the running time of the reduction in \cref{thm:gmss-for-codes}.

\section{Algorithms for Code Distortion}
\label{sec:algorithms}

We now turn to giving algorithms for code distortion.

\subsection{An Exact Algorithm}
\label{sec:exact-alg}

We first analyze the running time of a brute force exact algorithm for $\CDP$, which is the best exact algorithm we know.

\begin{lemma} \label{lem:brute-force-alg}
Let $n, k \in \Z^+$, $k \leq n$, let $q$ be a prime power, and let $\C_1, \C_2$ be $[n, k]_q$ codes.  The distortion $\D(\C_1,\C_2)$ can be computed in 
$O^*(q^{k(k+1)})$ time.
\end{lemma}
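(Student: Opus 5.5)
The plan is to reduce the minimization over all $T \in \GL_n(\F_q)$ with $T(\C_1) = \C_2$ to a minimization over generator matrices of $\C_2$. The key observation is that $\D_T(\C_1, \C_2)$ depends only on the restriction of $T$ to $\C_1$. That restriction is determined by where $T$ sends a fixed generator matrix $G_1 \in \F_q^{n \times k}$ of $\C_1$, namely by $TG_1$, which is a generator matrix of $\C_2$. Conversely, for any generator matrix $G_2'$ of $\C_2$, \cref{lem:invertibilitylemma} (completing $G_1$ and $G_2'$ to full-rank $n \times n$ matrices) gives an invertible $T$ with $TG_1 = G_2'$. Hence
\[
\D(\C_1,\C_2) = \min_{G_2'} \; \max_{\vec{a},\vec{b} \in \F_q^k \setminus \set{\vec{0}}} \frac{\norm{G_2'\vec{a}}_0 \, \norm{G_1\vec{b}}_0}{\norm{G_1\vec{a}}_0 \, \norm{G_2'\vec{b}}_0} \ \text{,}
\]
where $G_2'$ ranges over generator matrices of $\C_2$.

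Next, I parametrize these generator matrices. Fixing one generator matrix $G_2$ of $\C_2$ (computable by Gaussian elimination), every other one has the form $G_2 U$ for $U \in \GL_k(\F_q)$. So the algorithm enumerates all $U \in \F_q^{k \times k}$, which are at most $q^{k^2}$ matrices, and discards the singular ones with a rank computation.

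For each candidate $U$, the algorithm evaluates the distortion directly. It computes
\[
\max_{\vec{a} \neq \vec{0}} \frac{\norm{G_2U\vec{a}}_0}{\norm{G_1\vec{a}}_0} \quad \text{and} \quad \min_{\vec{a} \neq \vec{0}} \frac{\norm{G_2U\vec{a}}_0}{\norm{G_1\vec{a}}_0}
\]
by enumerating the $q^k - 1$ nonzero messages, and takes their ratio. Computing the max and min separately avoids a quadratic $q^{2k}$ enumeration over pairs $(\vec{a}, \vec{b})$. Each candidate therefore costs $O^*(q^k)$ time, and the algorithm outputs the minimum ratio over all candidates. The total running time is $q^{k^2} \cdot O^*(q^k) = O^*(q^{k(k+1)})$.

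The only step needing care is the reduction in the first paragraph: checking that every $T$ with $T(\C_1) = \C_2$ corresponds to some $G_2U$, and that every $U$ is realized by an invertible $T$. This is where \cref{lem:invertibilitylemma} and the fact that $TG_1$ must have full column rank (since $T$ maps the $k$-dimensional $\C_1$ onto the $k$-dimensional $\C_2$) are used. Beyond this, the argument is just bookkeeping of the running time.
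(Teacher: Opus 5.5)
Your proposal is correct and matches the paper's proof: fix $G_1$, enumerate the generator matrices $G_2U$ of $\C_2$ over all $U \in \F_q^{k \times k}$ (discarding singular ones) in $O^*(q^{k^2})$ time, and evaluate each induced $\D_T$ in $O^*(q^k)$ time by enumerating messages. You also justify explicitly, via the completion in \cref{lem:invertibilitylemma}, that every $U \in \GL_k(\F_q)$ is realized by an invertible $T$, a step the paper leaves implicit.
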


\begin{proof}
A linear transformation $T$ with $T \C_1 = \C_2$ must map a generator matrix of $\C_1$ to a generator matrix of $\C_2$. So, it suffices to fix a generator matrix of $G_1 \in \F_q^{n \times k}$, enumerate generator matrices $G_2 \in \F_q^{n \times k}$ of $\C_2$, and compute $\D_T(\C_1, \C_2)$ for the induced map $T$ such that $T G_1 = G_2$.

To enumerate generator matrices $G_2$ of $\C_2$, it suffices to compute $G_2 := G_2' U$ for a fixed generator matrix $G_2'$ of $\C_2$ and each $U \in \GL_k(\F_q)$.
In order to enumerate elements of $\GL_k(\F_q)$, it suffices to enumerate all matrices $U \in \F_q^{k \times k}$, and discard them if they are singular.
This takes $O^*(q^{k^2})$ time.

To compute $\D_T(\C_1, \C_2)$ for a given map $T$, it suffices to compute $\norm{T\vec{x}}_0/\norm{\vec{x}}_0$ for all $\vec{x} \in \C_1 \setminus \set{\vec{0}}$ and $\norm{T^{-1} \vec{y}}_0/\norm{\vec{y}}_0$ for all $\vec{y} \in \C_2 \setminus \set{\vec{0}}$. This can be done in $O^*(q^k)$ time since $\card{\C_1} = \card{\C_2} = q^k$.
So, overall the algorithm runs in $O^*(q^{k(k+1)})$, as needed.
\end{proof}

We then get the following corollary.

\begin{corollary} \label{cor:poly-alg}
Let $n, k \in \Z^+$, $k \leq n$, let $q$ be a prime power, and let $\C_1, \C_2$ be $[n, k]_q$ codes.  The distortion $\D(\C_1,\C_2)$ can be computed in $\poly(n)$ time if $k = O\left(\sqrt{\log_q n}\right)$.
\end{corollary}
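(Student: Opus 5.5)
The plan is to apply \cref{lem:brute-force-alg} directly and check that its running time is polynomial in this regime. That lemma computes $\D(\C_1,\C_2)$ exactly in $O^*(q^{k(k+1)})$ time. Here $O^*(\cdot)$ hides only factors polynomial in $n$ and $\log q$, so it suffices to show two things: that $q^{k(k+1)} \le \poly(n)$ when $k = O(\sqrt{\log_q n})$, and that the hidden $\poly(n,\log q)$ factor is also $\poly(n)$.

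For the main term, write $k \le c\sqrt{\log_q n}$ for a constant $c$, and assume $k \ge 1$. Then $k(k+1) \le 2k^2 \le 2c^2 \log_q n$, which gives
\[
q^{k(k+1)} \le q^{2c^2 \log_q n} = n^{2c^2} \ \text{.}
\]
This is polynomial in $n$.

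For the hidden factor, I need $\log q$ to be polynomially bounded in $n$. When $q > n$ we have $\log_q n < 1$. In that case the hypothesis forces $k$ to be bounded by a constant, and then $q^{k(k+1)}$ is only polynomial in $q$. This edge case is the one subtle step, and I plan to handle it in either of two ways:
\begin{itemize}
\item treat $q$ as fixed (or at most $\poly(n)$), as in the rest of the paper, so that $\log q \le O(\log n)$ and $q^{O(1)} = \poly(n)$; or
\item interpret ``$\poly(n)$'' as polynomial in the input size, which is at least $nk\log q$; in the bounded-$k$ case $q^{O(1)}$ is then polynomial only when $q$ is itself $\poly(n)$.
\end{itemize}
I will state explicitly that $q \le \poly(n)$ (or that $q$ is fixed), matching the convention of \cref{thm:intro-hardness-of-approx}. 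With that assumption both the main term and the suppressed factors are $\poly(n)$, and the corollary follows.
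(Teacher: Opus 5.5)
This is the same route as the paper, which presents the corollary as an immediate consequence of \cref{lem:brute-force-alg}. Your computation $q^{k(k+1)} \le q^{2c^2 \log_q n} = n^{2c^2}$ is the whole argument.

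The ``edge case'' you flag is not a real obstacle, though, and you should not add the assumption that $q$ is fixed or polynomially bounded in $n$. Your bound never used $q \le n$. Moreover, since $k \ge 1$, the hypothesis $k \le c\sqrt{\log_q n}$ itself forces $c^2 \log_q n \ge 1$, i.e., $q \le n^{c^2}$. So even when $q > n$, $q^{O(1)}$ is polynomial in $n$ and $\log q \le c^2 \log n$. The factors hidden by $O^*(\cdot)$ are therefore polynomial in $n$, and the corollary holds exactly as stated. Assuming $q$ fixed would only prove a special case.
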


Notably, \cref{lem:brute-force-alg} runs in roughly $q^{k^2}$ time, and at a minimum it would be desirable to find a single-exponential, roughly $q^k$-time algorithm. Although we do not give an exact such $q^k$-time algorithm, we do give a roughly $q^k$-time \emph{approximation} algorithm for $\CDP_q$.

\subsection{A General Approximation Algorithm}
\label{sec:gen-approx-alg}

Our approximation algorithm uses the following fact.

\begin{lemma} \label{lem:one-vector-bound}
    Let $n,k \in \Z^+, k \le n$, let $q$ be a prime power, and let $\C_1,\C_2$ be $[n,k]_q$ codes with successive minima bases $G_1 = (\vec{v}_1,\dots,\vec{v}_k), G_2 = (\vec{w}_1,\dots,\vec{w}_k)$, respectively, and let $T$ be a linear transformation such that $TG_1 = G_2$. Furthermore, let $\vec{x} = G_1 \vec{a}$ for $\vec{a} \in \F_q^k$ be a non-zero codeword in $\C_1$.
    Suppose that $j \in [k]$ is the maximum index such that $a_j \neq 0$.
    Then
    $$\frac{\norm{T\vec{x}}_0}{\norm{\vec{x}}_0} \le \frac{\sum_{i=1}^j\lambda_i(\C_2)}{\lambda_j(\C_1)} \ \text{.}$$
\end{lemma}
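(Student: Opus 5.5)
We bound the numerator and the denominator separately.

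For the numerator, since $T G_1 = G_2$, we have
\[
T\vec{x} = T G_1 \vec{a} = G_2 \vec{a} = \sum_{i=1}^{j} a_i \vec{w}_i ,
\]
where the sum stops at $j$ because $a_i = 0$ for $i > j$. The triangle inequality for Hamming weight gives $\norm{a_i \vec{w}_i}_0 \le \norm{\vec{w}_i}_0$, since scaling by $a_i$ never increases support. Because $G_2$ is a successive minima basis, $\norm{\vec{w}_i}_0 = \lambda_i(\C_2)$. Hence
\[
\norm{T\vec{x}}_0 \le \sum_{i=1}^{j} \lambda_i(\C_2).
\]

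For the denominator, we need $\norm{\vec{x}}_0 \ge \lambda_j(\C_1)$, which is the key step. Consider the $j$ vectors $\vec{v}_1, \ldots, \vec{v}_{j-1}, \vec{x}$, all in $\C_1$.

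\begin{itemize}
\item They are linearly independent. The vectors $\vec{v}_1, \ldots, \vec{v}_{j-1}$ are independent as columns of a basis. Also $\vec{x} = \sum_{i \le j} a_i \vec{v}_i$ with $a_j \neq 0$, so $\vec{x} \notin \lspan(\vec{v}_1, \ldots, \vec{v}_{j-1})$ by the independence of $\vec{v}_1, \ldots, \vec{v}_j$.
\item Each $\vec{v}_i$ with $i < j$ has weight $\lambda_i(\C_1) \le \lambda_j(\C_1)$, since successive minima are nondecreasing by definition.
\end{itemize}

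Suppose, for contradiction, that $\norm{\vec{x}}_0 < \lambda_j(\C_1)$. Let $r := \max\bigl(\lambda_{j-1}(\C_1), \norm{\vec{x}}_0\bigr)$, which is at most $\lambda_j(\C_1)$. We split on whether this is strict.

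\begin{itemize}
\item If $r < \lambda_j(\C_1)$, we have $j$ independent codewords of weight at most $r < \lambda_j(\C_1)$. This contradicts the minimality in the definition of $\lambda_j(\C_1)$.
\item If $r = \lambda_j(\C_1)$, then $r$ must equal $\lambda_{j-1}(\C_1)$, so $\lambda_{j-1}(\C_1) = \lambda_j(\C_1)$. In that case $\norm{\vec{x}}_0 < \lambda_{j-1}(\C_1)$, and we recurse downward on the index.
\end{itemize}

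It is cleaner to avoid the recursion and argue directly. Let $m$ be the smallest index with $\lambda_m(\C_1) = \lambda_j(\C_1)$. The vectors $\vec{v}_1, \ldots, \vec{v}_{m-1}, \vec{x}$ are linearly independent, as a subset of the independent set above. Their weights are all strictly less than $\lambda_m(\C_1)$: each $\vec{v}_i$ with $i < m$ has weight $\lambda_i(\C_1) < \lambda_m(\C_1)$ by the choice of $m$, and $\norm{\vec{x}}_0 < \lambda_j(\C_1) = \lambda_m(\C_1)$ by assumption. So $\C_1$ contains $m$ independent codewords of weight less than $\lambda_m(\C_1)$, which contradicts the definition of $\lambda_m(\C_1)$. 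Therefore $\norm{\vec{x}}_0 \ge \lambda_j(\C_1)$.

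Dividing the numerator bound by the denominator bound gives the claimed inequality. The only subtle point is handling ties among the successive minima in the lower bound, which the choice of $m$ takes care of.
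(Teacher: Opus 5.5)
Your proof is correct and follows the same route as the paper: the triangle inequality on $G_2\vec{a} = \sum_{i \le j} a_i \vec{w}_i$ bounds the numerator, and linear independence of $\vec{v}_1, \ldots, \vec{v}_{j-1}, \vec{x}$ together with the definition of successive minima bounds the denominator. Your use of the smallest index $m$ with $\lambda_m(\C_1) = \lambda_j(\C_1)$ also makes rigorous a step the paper passes over: applying the definition directly to those $j$ vectors only yields $\max(\norm{\vec{x}}_0, \lambda_{j-1}(\C_1)) \ge \lambda_j(\C_1)$, which gives nothing about $\norm{\vec{x}}_0$ when $\lambda_{j-1}(\C_1) = \lambda_j(\C_1)$.
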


\begin{proof}
Since $TG_1 = G_2$, we know that $T\vec{v}_i = \vec{w}_i$ for $1 \le i \le k$ and that $T\vec{x} = \sum_{i=1}^j T(a_i\vec{v}_i) = \sum_{i=1}^j a_i\vec{w}_i$.  Using the triangle inequality, we get the upper bound $\norm{T\vec{x}}_0 \le \sum_{i=1}^j \norm{a_i\vec{w}_i}_0 \le \sum_{i=1}^j \norm{\vec{w}_i}_0 = \sum_{i=1}^j \lambda_i(\C_2)$.

On the other hand, the $j$ vectors $\vec{x}$ and $\vec{v}_1,\dots,\vec{v}_{j-1}$ must be linearly independent because $a_j \neq 0$ and $\vec{v}_j \notin \lspan(\vec{v}_1,\dots,\vec{v}_{j-1})$. By the definition of successive minima, we then have that $\norm{\vec{x}}_0 \ge \lambda_j(\C_1)$. Combining this lower bound on $\norm{\vec{x}}_0$ with the upper bound on $\norm{T\vec{x}}_0$ above implies the claim.
\end{proof}

We define the following quantity relating the successive minima of $[n, k]_q$ codes $\C_1$ and $\C_2$:
\begin{equation} \label{eq:max-succ-min-ratio}
    M(\C_1,\C_2) \defeq \max_{i \in [k]}\frac{\lambda_i(\C_2)}{\lambda_i(\C_1)} \ \text{.}
\end{equation}
We will use this quantity to give lower and upper bounds on the distortion $\D(\C_1, \C_2)$.

\begin{theorem} \label{thm:approx-factor}
    Let $n,k \in \Z^+, k \le n$, let $q$ be a prime power, and let $\C_1,\C_2$ be $[n,k]_q$ codes. 
    Let $G_1 = (\vec{v}_1, \ldots, \vec{v}_k)$ and $G_2 = (\vec{w}_1, \ldots, \vec{w}_k)$ be successive minima bases of $\C_1$ and $\C_2$, respectively, and let $T$ be a linear transformation such that $T G_1 = G_2$
    Then
    $$M(\C_1,\C_2)M(\C_2,\C_1) \le \D(\C_1,\C_2) \leq \D_T(\C_1, \C_2) \le k^2 \cdot M(\C_1,\C_2)M(\C_2,\C_1) \ \text{.}$$
\end{theorem}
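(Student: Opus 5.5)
The middle inequality is immediate from \cref{def:code-dist-new}, since $T$ is one admissible map; by \cref{lem:invertibilitylemma} we may take $T$ invertible. So there are two bounds to prove: a lower bound $\D_{T'}(\C_1,\C_2) \ge M(\C_1,\C_2)M(\C_2,\C_1)$ for an arbitrary invertible $T'$ with $T'(\C_1)=\C_2$, and the upper bound for the specific $T$ induced by the successive minima bases.

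\emph{Lower bound.} Fix such a $T'$. Write $\alpha := \zznorm{T'|_{\C_1}}$ and $\beta := \zznorm{(T')^{-1}|_{\C_2}}$, so $\D_{T'} = \alpha\beta$ by \cref{eq:dist-via-matrix-norms}.
\begin{itemize}
\item The vectors $\vec{v}_1,\dots,\vec{v}_i$ are linearly independent and each has weight at most $\lambda_i(\C_1)$.
\item Their images $T'\vec{v}_1,\dots,T'\vec{v}_i$ are $i$ linearly independent codewords of $\C_2$, each of weight at most $\alpha\lambda_i(\C_1)$.
\item By the definition of successive minima, $\lambda_i(\C_2) \le \alpha\lambda_i(\C_1)$ for every $i$, so $M(\C_1,\C_2) \le \alpha$.
\end{itemize}
The symmetric argument, applying $(T')^{-1}$ to $\vec{w}_1,\dots,\vec{w}_i$, gives $M(\C_2,\C_1) \le \beta$. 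Multiplying the two bounds yields $M(\C_1,\C_2)M(\C_2,\C_1) \le \D_{T'}$, and minimizing over $T'$ gives the first inequality.

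\emph{Upper bound.} Again $\D_T = \zznorm{T|_{\C_1}}\cdot\zznorm{T^{-1}|_{\C_2}}$, so it suffices to show $\zznorm{T|_{\C_1}} \le k\cdot M(\C_1,\C_2)$. The bound $\zznorm{T^{-1}|_{\C_2}} \le k\cdot M(\C_2,\C_1)$ then follows by symmetry, because $T^{-1}G_2 = G_1$ with the roles of the codes swapped.

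Take a nonzero $\vec{x} = G_1\vec{a}$, and let $j$ be the largest index with $a_j \ne 0$. By \cref{lem:one-vector-bound},
\[
\frac{\norm{T\vec{x}}_0}{\norm{\vec{x}}_0} \le \frac{\sum_{i=1}^j \lambda_i(\C_2)}{\lambda_j(\C_1)}.
\]
The successive minima are nondecreasing, so each numerator term satisfies $\lambda_i(\C_2) \le \lambda_j(\C_2)$. The numerator is therefore at most $j\lambda_j(\C_2) \le k\lambda_j(\C_2)$. The ratio is then at most $k\,\lambda_j(\C_2)/\lambda_j(\C_1) \le k\,M(\C_1,\C_2)$.

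No step looks like a real obstacle; the only point needing care is the monotonicity of the $\lambda_i$ used in bounding the sum.
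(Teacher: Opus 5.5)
Your proposal is correct and follows essentially the same argument as the paper. The lower bound maps the first $i$ successive-minima vectors through an arbitrary invertible $T'$ (and symmetrically through $(T')^{-1}$), and the upper bound combines \cref{lem:one-vector-bound} with monotonicity of the $\lambda_i$, applied to both $T$ and $T^{-1}$.
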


\begin{proof}
We first prove the upper bound.
For any $\vec{x} \in \C_1 \backslash \{\vec{0}\}$, we can write $\vec{x} = G\vec{a}$ for some $\vec{a} \in \F_q^k \setminus \set{\vec{0}}$.
Let $j \in [k]$ be the maximum index such that $a_j \neq 0$. 
By \cref{lem:one-vector-bound}, we then have that
\begin{align*}
    \frac{\norm{T\vec{x}}_0}{\norm{\vec{x}}_0} \leq \frac{\sum_{i=1}^j\lambda_i(\C_2)}{\lambda_j(\C_1)}
    \le \frac{\sum_{i=1}^j\lambda_j(\C_2)}{\lambda_j(\C_1)}
    \le k \cdot \frac{\lambda_j(\C_2)}{\lambda_j(\C_1)}
    \le k \cdot \max_{i \in [k]}\frac{\lambda_i(\C_2)}{\lambda_i(\C_1)} \ \text{.}
\end{align*}
As this is true for all $x \in \C_1 \backslash \{\vec{0}\}$, it holds that
$$\max_{\vec{x} \in \C_1 \setminus \set{\vec{0}}} \frac{\norm{T\vec{x}}_0}{\norm{\vec{x}}_0} \le k \cdot \max_{i \in [k]}\frac{\lambda_i(\C_2)}{\lambda_i(\C_1)} = k \cdot M(\C_1,\C_2) \ \text{.} $$
A symmetric argument applied to $T^{-1}$ yields
\[
\max_{\vec{y} \in \C_2 \setminus \set{\vec{0}}} 
\frac{\norm{T^{-1} \vec{y}}_0}{\norm{\vec{y}}_0} \le k \cdot M(\C_2,\C_1) \ \text{,}
\]
and plugging both into the definition of distortion yields the upper bound
$$\D(\C_1,\C_2) \le k^2 \cdot M(\C_1,\C_2)M(\C_2,\C_1) \ \text{.}$$

We next prove the lower bound using an analogous proof to one in~\cite{conf/esa/BennettDS16}.
Let $U$ be an arbitrary full-rank linear map such that $U \C_1 = \C_2$.  
Then, because $\vec{v}_1, \ldots, \vec{v}_i$ are linearly independent for every $i \in [k]$, $U \vec{v}_1,\dots, U \vec{v}_i$ are also linearly independent. It therefore follows that
\[
\lambda_i(\C_2) \le \max_{j \in [i]} \norm{U \vec{v}_j}_0 \le \norm{U|_{\C_1}}_{0 \rightarrow 0} \cdot \max_{j \in [i]}\norm{\vec{v}_j}_0 = \norm{U|_{\C_1}}_{0 \rightarrow 0}\lambda_i(\C_1) \ \text{.}
\]
This yields the bound $\frac{\lambda_i(\C_2)}{\lambda_i(\C_1)} \le \norm{U|_{\C_1}}_{0 \rightarrow 0}$ for all $i$, which implies $\max_{i \in [k]}\frac{\lambda_i(\C_2)}{\lambda_i(\C_1)} \le \norm{U|_{\C_1}}_{0 \rightarrow 0}$.  A symmetrical argument with $(U^{-1})|_{\C_2}$ mapping successive minima vectors of $\C_2$ to $\C_1$ yields $\max_{i \in [n]}\frac{\lambda_i(\C_1)}{\lambda_i(\C_2)} \le \norm{U^{-1}|_{\C_2}}_{0 \rightarrow 0}$, and multiplying the bounds together shows that
$$
\left(\max_{i \in [n]}\frac{\lambda_i(\C_2)}{\lambda_i(\C_1)}\right)\left(\max_{i \in [n]}\frac{\lambda_i(\C_1)}{\lambda_i(\C_2)}\right) \le \norm{U|_{\C_1}}_{0 \rightarrow 0}\norm{U^{-1}|_{\C_2}}_{0 \rightarrow 0}.
$$
Because $U$ was chosen arbitrarily, we then have that
$M(\C_1,\C_2)M(\C_2,\C_1) \le \D(\C_1,\C_2)$.
\end{proof}

We now restate and prove \cref{thm:intro-approx-alg}, which gives our main approximation algorithm for $\CDP_q$.

\approxalg*

\begin{proof}
    The algorithm works by computing successive minima bases $G_1$ and $G_2$ for $\C_1$ and $\C_2$, respectively, using \cref{lem:succ-min-basis}, and then computing and outputting a linear transformation $T$ such that $T G_1 = G_2$.
    By \cref{lem:succ-min-basis}, computing $G_1$ and $G_2$ takes $O^*(q^k)$ time and polynomial space, and using these bases it is efficient to compute $T$. 
    Furthermore, $\D_T(\C_1, \C_2) \leq k^2 \cdot \D(\C_1, \C_2)$ by \cref{thm:approx-factor}. The theorem follows.
\end{proof}

\full{
\subsection{An Improved Approximation Algorithm when \texorpdfstring{$\lambda_1 = \lambda_k$}{the Successive Minima are all the Same} with Tight Analysis}
\label{sec:approx-alg-all-succ-min-same}

In this section, we give an improved analysis of the approximation algorithm for $\CDP$ in \cref{sec:gen-approx-alg} in the case when the input codes $[n, k]_q$ codes are binary (i.e., when $q = 2$) and all of their successive minima are the same. See \cref{thm:approx-alg-succ-min-same-F2}.
Specifically, with this restriction we get an approximation factor of $\gamma = \big(\frac{2k + 1}{3}\big)^2 \approx 4k^2/9$, which is better than the $\gamma = k^2$ approximation factor that we achieve for general codes in \cref{sec:gen-approx-alg}. (The restriction to $q = 2$ is not inherent, but the proof is more complicated for larger values of $q$; see~\cite{morrell2026adventures} for a proof for general $q$.)
We also show that our analysis is tight in \cref{thm:tight-approx}.

We will use the following elementary claim.
\begin{claim} \label{clm:intersect-union-lb}
For sets $A, B \subseteq [k]$, $\card{A \cap B} \geq \card{A} + \card{B} - k$.
\end{claim}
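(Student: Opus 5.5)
The plan is to prove this by inclusion--exclusion on the finite set $[k]$, which serves as the ambient universe. Since $A, B \subseteq [k]$, their union is also a subset of $[k]$, so $\card{A \cup B} \leq k$.

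Next, we use the inclusion--exclusion identity for two finite sets, $\card{A \cup B} = \card{A} + \card{B} - \card{A \cap B}$. Rearranging gives $\card{A \cap B} = \card{A} + \card{B} - \card{A \cup B}$. Substituting the bound $\card{A \cup B} \leq k$ then yields $\card{A \cap B} \geq \card{A} + \card{B} - k$, which is the claim.

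The argument has no real obstacle. The only point requiring care is that the bound on $\card{A \cup B}$ uses containment in $[k]$. In the intended application, $A$ and $B$ will be the supports $\supp(\vec{a})$ and $\supp(\vec{b})$ of message vectors in $\F_2^k$, which certainly lie in $[k]$. The resulting lower bound on $\card{\supp(\vec{a}) \cap \supp(\vec{b})}$ is what drives the case $\norm{\vec{a}}_0 + \norm{\vec{b}}_0 > \frac{2k+1}{3}$ in the analysis of \cref{thm:intro-approx-alg-succ-min-same}.
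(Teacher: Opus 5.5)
Your proof is correct and matches the paper's: both use the inclusion--exclusion identity $|A| + |B| = |A \cup B| + |A \cap B|$ together with $|A \cup B| \leq k$ (since $A \cup B \subseteq [k]$), then rearrange. The only difference is cosmetic: the paper writes it as $|A| + |B| \leq k + |A \cap B|$ and subtracts $k$.
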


\begin{proof}
We have that $\card{A} + \card{B} = \card{A \cup B} + \card{A \cap B} \leq k + \card{A \cap B}$. Subtracting $k$ from both sides implies the claim.
\end{proof}

We will also use the following simple fact about successive minima bases.
\begin{claim} \label{clm:triang-ineq-succ-min-bas}
Let $G = (\vec{g}_1, \ldots, \vec{g}_k) \in \F_q^{n \times k}$ be a successive minima basis of an $[n, k]_q$ code $\C$. Then, for all $\vec{m} \in \F_q^k$, $\norm{G \vec{m}}_0 \leq \lambda_k(\C) \cdot \norm{\vec{m}}_0$.
\end{claim}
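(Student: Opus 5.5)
The plan is to expand $G\vec{m}$ as a linear combination of the columns of $G$ and apply the triangle inequality for Hamming weight, exactly as in the proof of \cref{prop:zznorm-val}. Concretely, for $\vec{m} \in \F_q^k$ we write
\[
G\vec{m} = \sum_{i \in \supp(\vec{m})} m_i \vec{g}_i \ \text{,}
\]
so that
\[
\norm{G\vec{m}}_0 \leq \sum_{i \in \supp(\vec{m})} \norm{m_i \vec{g}_i}_0 = \sum_{i \in \supp(\vec{m})} \norm{\vec{g}_i}_0 \ \text{.}
\]
Here we use that scaling by a nonzero field element $m_i$ does not change the support of a vector. If $\vec{m} = \vec{0}$, both sides of the claimed inequality are $0$ and there is nothing to prove.

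Next, because $G$ is a successive minima basis, $\norm{\vec{g}_i}_0 = \lambda_i(\C)$ for each $i$. The successive minima are nondecreasing in $i$ directly from their definition: having $i+1$ linearly independent codewords of weight at most $r$ implies having $i$ such codewords. Hence $\lambda_i(\C) \leq \lambda_k(\C)$ for all $i \in [k]$. Each summand is therefore at most $\lambda_k(\C)$, and since there are $\card{\supp(\vec{m})} = \norm{\vec{m}}_0$ summands, the sum is at most $\lambda_k(\C)\cdot\norm{\vec{m}}_0$.

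There is no real obstacle in this argument. The only points that need to be stated explicitly are the invariance of Hamming weight under nonzero scaling and the monotonicity $\lambda_1(\C) \leq \cdots \leq \lambda_k(\C)$.
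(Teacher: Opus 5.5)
Your proof is correct and matches the paper's argument: expand $G\vec{m}$ over $\supp(\vec{m})$, apply the triangle inequality together with invariance of Hamming weight under nonzero scaling, and bound each $\norm{\vec{g}_i}_0 = \lambda_i(\C)$ by $\lambda_k(\C)$. Your explicit justification of the monotonicity $\lambda_i(\C) \leq \lambda_k(\C)$ and your handling of the $\vec{m} = \vec{0}$ case are small details that the paper leaves implicit.
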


\begin{proof}
Fix $\vec{m} \in \F_q^k$. Then, by the triangle inequality and the fact that $\norm{\vec{g}_i}_0 = \lambda_i(\C) \leq \lambda_k(\C)$,
\[
\norm{G\vec{m}}_0
= \Big\|\sum_{i = 1}^k m_i \vec{g}_i \Big\|_0
\leq \sum_{i=1}^k \norm{m_i \vec{g}_i}_0
= \sum_{i \in \supp(\vec{m})} \norm{\vec{g}_i}_0
\leq \sum_{i \in \supp(\vec{m})} \lambda_k(\C)
= \lambda_k(\C) \cdot \norm{\vec{m}}_0 \ \text{.} \qedhere
\]
\end{proof}

We now give our improved analysis of the approximation algorithm for $\CDP$.

\begin{theorem} \label{thm:approx-alg-succ-min-same-F2}
    Let $n,k \in \Z^+, k \le n$, and let $\C_1,\C_2$ be $[n,k]_2$ codes with successive minima bases $G_1 = (\vec{v}_1, \ldots, \vec{v}_k), G_2 = (\vec{w}_1, \ldots, \vec{w}_k) \in \F_2^{n \times k}$, respectively, and let $T$ be a linear map such that $T G_1 = G_2$. Furthermore, assume that $\lambda_1(\C_1) = \lambda_1(\C_2) = \lambda_k(\C_1) = \lambda_k(\C_2)$, and let $\lambda := \lambda_1(\C_1)$. Then
    $$\D(\C_1,\C_2) \leq \D_T(\C_1,\C_2) \le \left(\frac{2k+1}{3}\right)^2 \ \text{.}$$
\end{theorem}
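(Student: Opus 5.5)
The plan is to write the distortion as a single maximum over pairs of messages and then bound each pair in two complementary ways. Since $TG_1 = G_2$, every non-zero codeword of $\C_1$ is $G_1\vec a$ and $T G_1 \vec a = G_2\vec a$. Hence
\[
\D_T(\C_1,\C_2) = \max_{\vec a,\vec b \in \F_2^k\setminus\set{\vec 0}} \frac{\norm{G_2\vec a}_0\,\norm{G_1\vec b}_0}{\norm{G_1\vec a}_0\,\norm{G_2\vec b}_0} \ \text{.}
\]
Fix non-zero $\vec a,\vec b$ and write $s := \norm{\vec a}_0$, $t := \norm{\vec b}_0$ and $u := \norm{\vec a+\vec b}_0$. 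Throughout I use the two-sided bound $\lambda \le \norm{G_\ell \vec m}_0 \le \lambda \norm{\vec m}_0$ for $\ell \in \set{1,2}$ and non-zero $\vec m$. The lower bound holds because $\lambda = \lambda_1$; the upper bound is \cref{clm:triang-ineq-succ-min-bas} with $\lambda_k = \lambda$.

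\textbf{First bound (direct).} Bounding numerators above and denominators below gives that the ratio is at most $\lambda s \cdot \lambda t/\lambda^2 = st \le \big(\frac{s+t}{2}\big)^2$ by AM--GM.

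\textbf{Second bound (via $\vec a + \vec b$).} Over $\F_2$ we have $G_1\vec b = G_1\vec a + G_1(\vec a+\vec b)$. By the triangle inequality,
\[
\norm{G_1\vec b}_0 \le \norm{G_1\vec a}_0 + \lambda u \le (1+u)\norm{G_1\vec a}_0 \ \text{,}
\]
using $\norm{G_1\vec a}_0 \ge \lambda$. Symmetrically, $\norm{G_2\vec a}_0 \le (1+u)\norm{G_2\vec b}_0$. Multiplying these, the ratio is at most $(1+u)^2$; this also covers $\vec a = \vec b$, where $u = 0$. This is the step where $q=2$ is used: over larger fields one would need $G\vec b$ expressed through $G\vec a$ with scalars, which is why the general case is harder.

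\textbf{Combining via the support intersection.} We have $u = s + t - 2\card{\supp(\vec a)\cap\supp(\vec b)}$. By \cref{clm:intersect-union-lb}, $\card{\supp(\vec a)\cap\supp(\vec b)} \ge s+t-k$, so $u \le 2k-(s+t)$. With $p := s+t$, the ratio is therefore at most $\min\big\{(p/2)^2,\,(1+2k-p)^2\big\}$. The first term increases in $p$ and the second decreases. They balance at $p/2 = 1+2k-p$, i.e.\ $p = \frac{2(2k+1)}{3}$, where both equal $\big(\frac{2k+1}{3}\big)^2$. So I split into two cases. If $p \le \frac{2(2k+1)}{3}$, I use the first bound. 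Otherwise $1+2k-p < \frac{2k+1}{3}$ and I use the second bound. The first inequality $\D(\C_1,\C_2)\le \D_T(\C_1,\C_2)$ is immediate from the definition.

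I expect the main obstacle to be identifying the second bound: the trivial bounds alone only give $k^2$, and the gain comes from rewriting $G\vec b$ in terms of $G\vec a$ and $G(\vec a+\vec b)$ and then controlling $u$ through the intersection claim. Once that is in place, the remaining step is just the one-variable optimization above.
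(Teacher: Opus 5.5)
Your proposal is correct and follows essentially the same route as the paper. The paper also rewrites $\D_T(\C_1,\C_2)$ as a maximum over pairs $\vec a,\vec b$ and uses AM--GM when $\norm{\vec a}_0+\norm{\vec b}_0 \le \frac{2(2k+1)}{3}$. Otherwise it bounds $\norm{G_1\vec b}_0 \le \norm{G_1\vec a}_0 + \lambda\norm{\vec b-\vec a}_0$ (and symmetrically for $G_2$), with $\norm{\vec b-\vec a}_0 \le k-\card{\supp(\vec a)\cap\supp(\vec b)} < \frac{2k-2}{3}$ via \cref{clm:intersect-union-lb}. Your $(1+u)^2$ bound is the same estimate, packaged so that the threshold $\frac{2k+1}{3}$ comes out of the balancing step rather than being fixed in advance as the case split.
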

\begin{proof}

    Let $T$ be a linear map such that $TG_1 = G_2$ (and therefore $T \C_1 = \C_2)$.
    Then
    \begin{equation} \label{eq:mixed-message-dist-bound}
    \D(\C_1,\C_2) \le \D_T(\C_1,\C_2) \le \max_{\vec{a} \in \F_2^k}\left(\frac{\norm{G_2\vec{a}}_0}{\norm{G_1\vec{a}}_0}\right)\max_{\vec{b} \in \F_2^k}\left(\frac{\norm{G_1\vec{b}}_0}{\norm{G_2\vec{b}}_0}\right) = \max_{\vec{a},\vec{b} \in \F_2^k}\left(\frac{\norm{G_1\vec{b}}_0}{\norm{G_1\vec{a}}_0}\frac{\norm{G_2\vec{a}}_0}{\norm{G_2\vec{b}}_0}\right) \ \text{.}
    \end{equation}

    Fix a pair of messages $\vec{a}, \vec{b} \in \F_2^k \setminus \set{\vec{0}}$. 
    We will upper bound the quantity $\frac{\norm{G_2\vec{a}}_0\norm{G_1\vec{b}}_0}{\norm{G_1\vec{a}}_0\norm{G_2\vec{b}}_0}$ appearing in the right-hand side of \cref{eq:mixed-message-dist-bound}
    by splitting into cases according to the average of $\norm{\vec{a}}_0$ and $\norm{\vec{b}}_0$.

    \begin{caseof}
        \case{$\frac{\norm{\vec{a}}_0 + \norm{\vec{b}}_0}{2} \le \frac{2k+1}{3}$.}
        In this case, we can upper bound the numerator and lower bound the denominator of $\frac{\norm{G_2\vec{a}}_0\norm{G_1\vec{b}}_0}{\norm{G_1\vec{a}}_0\norm{G_2\vec{b}}_0}$ separately. 
        We first upper bound the numerator:
        \begin{align*}
            \norm{G_2\vec{a}}_0\norm{G_1\vec{b}}_0 
             \leq (\lambda\norm{\vec{a}}_0)(\lambda\norm{\vec{b}}_0) 
             \le \left(\frac{\lambda\norm{\vec{a}}_0 + \lambda\norm{\vec{b}}_0}{2}\right)^2
            \le \lambda^2 \cdot \Big(\frac{2k+1}{3}\Big)^2  \ \text{.}
        \end{align*}
        The first inequality uses \cref{clm:triang-ineq-succ-min-bas} twice, the second inequality uses the AM-GM inequality, and the third inequality uses the case assumption.
        We then have that
        \begin{align*}
            \frac{\norm{G_2\vec{a}}_0\norm{G_1\vec{b}}_0}{\norm{G_1\vec{a}}_0\norm{G_2\vec{b}}_0} 
            \le \frac{\lambda^2 \cdot \big(\frac{2k+1}{3} \big)^2}{\norm{G_1\vec{a}}_0\norm{G_2\vec{b}}_0} 
            \le \frac{\lambda^2 \cdot \big(\frac{2k+1}{3} \big)^2}{\lambda^2}
            = \Big(\frac{2k+1}{3}\Big)^2 \ \text{,}
        \end{align*}
        where the second inequality holds because the minimum distance of both $\C_1$ and $\C_2$ is $\lambda$.
        
        \case{$\frac{\norm{\vec{a}}_0 + \norm{\vec{b}}_0}{2} > \frac{2k+1}{3}$.}
        In this case, we will upper bound $\frac{\norm{G_1\vec{b}}_0}{\norm{G_1\vec{a}}_0}$ and $\frac{\norm{G_2\vec{a}}_0}{\norm{G_2\vec{b}}_0}$ separately to get a bound on their product $\frac{\norm{G_2\vec{a}}_0\norm{G_1\vec{b}}_0}{\norm{G_1\vec{a}}_0\norm{G_2\vec{b}}_0}$.
        To do this, we introduce a new message $\vec{x}$ defined as the unique element $\vec{x} \in \F_2^k$ such that $\supp(\vec{x}) = \supp(\vec{a}) \cap \supp(\vec{b})$ (equivalently, $\vec{x}$ is the bit-wise AND of $\vec{a}$ and $\vec{b}$).
        Additionally, let $\vec{a}' := \vec{a} - \vec{x}$ and $\vec{b}' := \vec{b} - \vec{x}$.
        Note that $\supp(\vec{a}) \cup \supp(\vec{b}) \subseteq [k]$ is equal to the disjoint union $\supp(\vec{a}') \sqcup \supp(\vec{b}') \sqcup \supp(\vec{x})$. In particular, $\norm{\vec{a}'}_0 + \norm{\vec{b}'}_0 + \norm{\vec{x}}_0 \le k$.
        We will now show that $\frac{\norm{G_1\vec{b}}_0}{\norm{G_1\vec{a}}_0} < \frac{2k+1}{3}$.

        First, we must note that by the triangle inequality,
        \begin{equation} \label{eq:G1ba-lb}
            \norm{G_1\vec{b}}_0 - \norm{G_1\vec{a}}_0 \le \norm{G_1\vec{b} - G_1\vec{a}}_0
            \le \norm{G_1(\vec{b} - \vec{a})}_0
            \le \lambda \cdot \norm{\vec{b} - \vec{a}}_0 \ \text{.}
        \end{equation}
        By invoking \cref{clm:intersect-union-lb} with $A := \supp(\vec{a})$ and $B := \supp(\vec{b})$ and using the case lower bound, we have that
        \[
        \norm{\vec{x}}_0 \geq \norm{\vec{a}}_0 + \norm{\vec{b}}_0 - k > \frac{4k + 2}{3} - k = \frac{k + 2}{3} \ \text{.}
        \]
        So, using the fact that $\vec{a}, \vec{b}$ are vectors over $\F_2$, and the definitions of $\vec{a}', \vec{b}'$,
        \begin{equation} \label{eq:ba-diff-lb}
        \norm{\vec{b} - \vec{a}}_0 = \norm{\vec{a}' + \vec{b}'}_0 \leq k - \norm{\vec{x}}_0
        < k - \frac{k + 2}{3} = \frac{2k - 2}{3}
        \ \text{.}
        \end{equation}
        By combining \cref{eq:G1ba-lb,eq:ba-diff-lb},
        \[
        \norm{G_1 \vec{b}}_0 < \frac{\lambda (2k - 2)}{3} + \norm{G_1 \vec{a}}_0 \ \text{,}
        \]
        and so, using that $\norm{G_1 \vec{a}}_0 \geq \lambda$,
        \[
        \frac{\norm{G_1 \vec{b}}_0}{\norm{G_1 \vec{a}}_0} < \frac{\lambda (2k - 2)}{3 \norm{G_1 \vec{a}}_0} + 1 < \frac{2k - 2}{3} + 1 = \frac{2k + 1}{3} \ \text{.}
        \]
        
Essentially the same analysis shows that $\frac{\norm{G_2\vec{a}}_0}{\norm{G_2\vec{b}}_0} < (2k + 1)/3$. Therefore,
$$
\frac{\norm{G_2\vec{a}}_0\norm{G_1\vec{b}}_0}{\norm{G_1\vec{a}}_0\norm{G_2\vec{b}}_0} < \left(\frac{2k+1}{3}\right)^2 \ \text{,}
$$
as needed.
    \end{caseof}
\end{proof}

\approxalgsuccmin*

\begin{proof}
    The algorithm works by computing successive minima bases $G_1$ and $G_2$ for $\C_1$ and $\C_2$, respectively, using \cref{lem:succ-min-basis}, and then computing and outputting a linear transformation $T$ such that $T G_1 = G_2$.
    By \cref{lem:succ-min-basis}, computing $G_1$ and $G_2$ takes $O^*(q^k)$ time and polynomial space, and using these bases it is efficient to compute $T$. 
    Furthermore, by the lower bound in \cref{thm:approx-factor} and the upper bound in \cref{thm:approx-alg-succ-min-same-F2} (which applies because $\lambda_1(\C_1) = \cdots = \lambda_k(\C_1) = \lambda_1(\C_2) = \cdots = \lambda_k(\C_2)$), $\D_T(\C_1, \C_2) \leq \big(\frac{2k + 1}{3}\big)^2 \cdot \D(\C_1, \C_2)$, as needed.
\end{proof}

Moreover, we show that the analysis in \cref{thm:approx-alg-succ-min-same-F2} is \emph{tight}. That is, we give a pair of generator matrices $G_1, G_2$ for $[n, k]_2$ codes $\C_1, \C_2$ all of whose successive minima are the same such that any linear map $T$ with $T \C_1 = \C_2$ has distortion $\D_T(\C_1, \C_2) = \big(\frac{2k + 1}{3}\big)^2$. In fact, we simply choose $G_1, G_2$ to be different generator matrices of the same code $\C$.

\begin{theorem} \label{thm:tight-approx}
    Let $n,k,\ell \in \Z^+, \ell \ge 2, k:= 3\ell + 1, n := (2\ell + 1)\ell$. There exist $[n,k]_2$ codes $\C_1,\C_2$ with successive minima bases $G_1 = (\vec{v}_1,\dots,\vec{v}_k)$, $G_2 = (\vec{w}_1,\dots,\vec{w}_k)$, respectively, satisfying:
    \begin{enumerate}
        \item \label{item:all-succ-min-equal-example} $\ell = \lambda_1(\C_1) = \lambda_1(\C_2) = \lambda_k(\C_1) = \lambda_k(\C_2)$.
        \item The distortion between $\C_1$ and $\C_2$ is $\D(\C_1,\C_2) = 1.$
        \item For any linear transformation $T \in \F_2^{n\times n}$ such that $TG_1 = G_2$,
        $$\D_T(\C_1,\C_2) = \left(\frac{2k+1}{3}\right)^2.$$
    \end{enumerate}
\end{theorem}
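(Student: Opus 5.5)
The plan is to take $\C_1 = \C_2 = \C$ for a single, carefully designed $[n,k]_2$ code $\C$, and to let $G_1$ and $G_2$ be two different successive minima bases of $\C$. Item~2 is then immediate: the identity map sends $\C_1$ to $\C_2$ with distortion $1$, and $\D \geq 1$ by \cref{lem:code-dist-basic-bounds}. Any $T$ with $TG_1 = G_2$ is determined on $\C_1$, so $\D_T(\C_1,\C_2)$ depends only on $G_1$ and $G_2$. By \cref{thm:approx-alg-succ-min-same-F2} it is at most $(\frac{2k+1}{3})^2 = (2\ell+1)^2$. Item~3 therefore reduces to exhibiting the matching lower bound, namely messages $\vec a,\vec b \in \F_2^k$ with
\[
\frac{\norm{G_2\vec a}_0}{\norm{G_1\vec a}_0} \geq 2\ell+1
\quad\text{and}\quad
\frac{\norm{G_1\vec b}_0}{\norm{G_2\vec b}_0} \geq 2\ell+1 .
\]

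Since every nonzero codeword must have weight between $\ell$ and $n = (2\ell+1)\ell$, these ratios force extremal behaviour. We need $\norm{G_1\vec a}_0 = \ell$ and $G_2\vec a = \vec 1_n$, and symmetrically $\norm{G_2\vec b}_0 = \ell$ and $G_1 \vec b = \vec 1_n$. By \cref{clm:triang-ineq-succ-min-bas} this forces $\norm{\vec a}_0, \norm{\vec b}_0 \geq 2\ell+1$. Since $k = 3\ell+1$, \cref{clm:intersect-union-lb} then gives supports overlapping in at least $\ell+1$ coordinates, which is exactly the equality case of Case~2 in the proof of \cref{thm:approx-alg-succ-min-same-F2}. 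I will use this as a design guide: arrange the $n$ coordinates as $2\ell+1$ blocks $B_1,\dots,B_{2\ell+1}$ of size $\ell$, so that $\vec 1_n$ is a sum of $2\ell+1$ disjoint weight-$\ell$ block indicators $\vec 1_{B_c}$.

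The construction then goes as follows. Put the $2\ell+1$ block indicators into $G_2$ on the support of $\vec a$, so that $G_2\vec a = \vec 1_n$. Choose the corresponding columns of $G_1$ to be weight-$\ell$ codewords whose sum over the same support is a single weight-$\ell$ codeword; for example, add a common weight-$\ell$ ``twist'' vector spread across blocks to an even number of the indicators. Symmetrically, on the support of $\vec b$, the roles of $G_1$ and $G_2$ are swapped. The remaining $\ell$ generators fill out dimension $3\ell+1$. Concretely, I would try taking $\C$ to be the span of the block indicators together with $\ell$ ``diagonal'' weight-$\ell$ vectors, each meeting $\ell$ distinct blocks in one coordinate. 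I would then check the following:
\begin{itemize}
\item The spanning vectors are linearly independent, so the dimension is $k$.
\item Every nonzero codeword has weight at least $\ell$. Here I would argue blockwise: a codeword restricted to each block is either $0$, the full block, or a small perturbation of these, and count nonzero coordinates.
\item Both $G_1$ and $G_2$ consist of $k$ independent weight-$\ell$ codewords. This makes them successive minima bases and gives item~1.
\end{itemize}

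I expect the main obstacle to be the minimum-distance verification for the chosen $\C$ simultaneously with the existence of both ``twisted'' bases. Adding extra generators tends to create low-weight combinations, for example a diagonal vector plus a block indicator cancelling in one coordinate. My first attempt would be to take the diagonal vectors with pairwise disjoint supports, each hitting a different block in exactly one position. I would then verify weight $\geq \ell$ by a case analysis on how many block indicators appear in a combination: combinations of indicators alone have weight a multiple of $\ell$, and diagonal terms can change the weight by at most the number of blocks they touch. If this fails, I would adjust $\vec a,\vec b$ and the twist vectors to sit on the $\ell+1$ overlapping coordinates as the extremal analysis dictates.
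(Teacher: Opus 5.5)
Your reductions are fine: item~2 follows from the identity map, and the upper bound in item~3 follows from \cref{thm:approx-alg-succ-min-same-F2} once item~1 holds. Your extremal analysis, forcing $\vec a,\vec b$ to have weight $2\ell+1$ and overlap in $\ell+1$ coordinates, is also correct. Your ``concrete'' guess (the $2\ell+1$ block indicators plus $\ell$ disjoint diagonal vectors) is even the code family the paper uses. But the statement is existential, so the proof \emph{is} the explicit construction plus its verification, and you supply neither.

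First, the mechanism you give for $\norm{G_1\vec a}_0=\ell$ fails. If the non-indicator columns of $G_1$ on $\supp(\vec a)$ are indicators plus a common twist $\vec t$, then $G_1\vec a$ is $\vec 1_n$ or $\vec 1_n+\vec t$, of weight $n$ or $n-\ell$, never $\ell$. What you need, and never fix, is a precise placement. All $\ell$ diagonal vectors must lie inside the $\ell+1$ blocks indexed by $\supp(\vec a)\cap\supp(\vec b)$ and together cover all but exactly $\ell$ of their coordinates. Then the sum of those $\ell+1$ indicators and all the diagonals has weight $\ell$, and $G_1,G_2$ are the same $k$ columns with the $\ell$ diagonals and the remaining $\ell$ indicators swapped. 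The paper realizes this with $A=I_{\ell+1}\otimes\vec 1_\ell$, $B=\vec 1_\ell\otimes\begin{pmatrix}\vec 0_{1\times\ell}\\ I_\ell\end{pmatrix}$, $C=I_\ell\otimes\vec 1_\ell$, $G_1=(A,B,C)$, $G_2=(A,C,B)$ (zero-padded), $\vec a=(\vec 1_{\ell+1},\vec 1_\ell,\vec 0_\ell)$ and $\vec b=(\vec 1_{\ell+1},\vec 0_\ell,\vec 1_\ell)$. Placement genuinely matters: for example, $\ell$ disjoint diagonals meeting the same $\ell$ blocks fill those blocks entirely and are linearly dependent on the indicators.

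Second, you flag the minimum-distance check as the main obstacle and then hedge on it; it is not done, and your heuristic cannot do it. Bounding each diagonal's effect by the $\ell$ coordinates it touches gives only $\norm{\cdot}_0\ge (s-\abs{J})\ell$ for $s$ indicators and $\abs{J}$ diagonals. That is vacuous whenever $s\le\abs{J}$, e.g.\ one indicator plus all diagonals. The missing idea is the paper's ``mixed block'' count, which rests on three structural facts:
\begin{itemize}
\item each of the first $\ell$ blocks contains a coordinate touched by no diagonal;
\item each diagonal meets exactly $\ell-1$ of those blocks;
\item the $(\ell+1)$st block meets the diagonals as $I_\ell$.
\end{itemize}
Hence for any nonzero $\vec x_B$, at least $\ell$ blocks of $B\vec x_B$ are neither all-zero nor all-one. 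If $\vec x_B=\vec 1_\ell$, these are all of the first $\ell$ blocks; otherwise they are the last block together with the $\ell-1$ blocks met by any chosen diagonal. Since $A\vec x_A$ is constant on every block, each mixed block contributes at least one nonzero coordinate to $A\vec x_A+B\vec x_B$. Combinations with $\vec x_B=\vec 0$ are sums of disjoint blocks, and the $C$-part is disjoint from everything else. Without this or an equivalent argument, items~1 and~3 remain unproved.
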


In other words, an algorithm that picks arbitrary successive minima bases of $\C_1$, $\C_2$ to approximate $\D(\C_1,\C_2)$ must achieve an approximation factor of at least $\left(\frac{2k+1}{3}\right)^2$.

\begin{proof}
    We will give a constructive proof, defining $\C = \C_1 = \C_2$ by the bases $G_1,G_2$ which contain the same columns in a different order:
    \begin{align*}
        G_1 &:= \begin{pmatrix}
               A & B & \vec{0}_{(\ell^2+\ell)\times\ell}\\
               \vec{0}_{\ell^2 \times (\ell + 1)} & \vec{0}_{\ell^2 \times \ell} & C\\
        \end{pmatrix} \ \text{,} \\
        G_2 &:= \begin{pmatrix}
               A & \vec{0}_{(\ell^2+\ell)\times\ell} & B\\
               \vec{0}_{\ell^2 \times (\ell + 1)} & C & \vec{0}_{\ell^2 \times \ell}
        \end{pmatrix}  \ \text{.}
    \end{align*}
    Here $A \in \F_2^{(\ell^2 + \ell) \times (\ell + 1)}$, $B \in \F_2^{(\ell^2 + \ell) \times \ell}$, $C \in \F_2^{\ell^2 \times \ell}$ are defined as
    \begin{align*}
        A &:= I_{\ell + 1} \otimes \vec{1}_{\ell} \ \text{,} \\
        B &:= \vec{1}_{\ell} \otimes \begin{pmatrix}
            \vec{0}_{1 \times \ell}\\
            I_\ell
        \end{pmatrix} \ \text{,} \\
        C &:= I_\ell \otimes \vec{1}_{\ell} \ \text{.}
    \end{align*}
    Here $A$ is the identity matrix $I_{\ell+1}$ ``scaled by $\ell$ in unary,'' $C$ is a similar scaling of $I_\ell$, and $B$ is several repetitions of a row of $0$s followed by $I_\ell$.  Importantly, this set-up ensures that all columns of $G_1$ are linearly independent (and likewise for $G_2$), and that all columns of $G_1,G_2$ have Hamming weight $\ell$.
    We next prove the following claim.

    \begin{claim} \label{clm:lambda1-ex}
    \cref{item:all-succ-min-equal-example} holds. 
    \end{claim}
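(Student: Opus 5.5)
Since $\C_1 = \C_2 = \C$ has the same generator columns in both $G_1$ and $G_2$, it suffices to show $\lambda_1(\C) = \ell$ and that $\C$ contains $k$ linearly independent codewords of weight $\ell$. Then all successive minima equal $\ell$, and both $G_1$ and $G_2$ are successive minima bases. For the upper bound $\lambda_k(\C) \le \ell$, I would check that every column has weight exactly $\ell$: a column of $A$ is $\vec{e}_i \otimes \vec{1}_\ell$, a column of $B$ is $\vec{1}_\ell \otimes (0, \vec{e}_j)$, and a column of $C$ is $\vec{e}_i \otimes \vec{1}_\ell$. I would also check that the $k$ columns are linearly independent. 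The $C$-block lives on disjoint coordinates from the $A,B$ part, and the $C$-columns have disjoint supports, so it reduces to the independence of the columns of $(A \mid B)$. Inside a block of $\ell+1$ coordinates, the first coordinate sees only the $A$-column of that block. So a vanishing combination has zero $A$-coefficients, and then zero $B$-coefficients because the $B$-columns have disjoint supports.

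The main step is the lower bound $\lambda_1(\C) \ge \ell$. Take a nonzero message, written as $\alpha \in \F_2^{\ell+1}$ on the $A$-columns, $\beta \in \F_2^{\ell}$ on the $B$-columns, and $\gamma \in \F_2^\ell$ on the $C$-columns. The codeword splits into two parts.

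The first part is the $C$-part, of weight $\ell\norm{\gamma}_0$. If $\gamma \ne \vec{0}$ the weight is already at least $\ell$, so assume $\gamma = \vec{0}$.

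The second part is the $(A,B)$-part. Index its coordinates as $\ell+1$ blocks of size $\ell$; equivalently, view $A$ and $B$ as laid out on an $(\ell+1) \times \ell$ grid via the Kronecker structure. The entry at block $i$, position $p$ is $\alpha_i + [p \ge 2]\,\beta_{p-1}$. I would recompute this carefully from the stated Kronecker products, since the orientation of $B$ matters. On my reading, the $(A,B)$-part is $\ell$ repeated blocks, each of length $\ell+1$, with $A$ contributing $\vec{e}_i \otimes \vec{1}_\ell$. The bookkeeping is consistent only when both are viewed as $(\ell+1)\times\ell$ matrices: then $A$ gives the rank-one matrix $\alpha \vec{1}^\top$ and $B$ gives $\binom{0}{\beta}\vec{1}^\top$-type patterns.

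Under either orientation, the codeword restricted to these coordinates is a sum $u\vec{1}^\top + \vec{1}v^\top$-like matrix $M_{ij} = \alpha_i + \beta'_j$ with one row forced to $\beta' = 0$. Its weight is then computed by counting mismatches: if $\alpha$ has $s$ ones and the $B$-pattern has $t$ ones, the weight is $s(\ell - t) + (\text{rows})\cdot t - s t$ style, that is, $s(m - t) + (R - s)t$ with $R,m$ the grid dimensions. I would show this is at least $\ell$ whenever $(s,t) \ne (0,0)$, using that the extra zero row/column keeps the all-ones cancellation from occurring. I expect this counting step, and getting the exact grid orientation right, to be the main obstacle. I would handle it by a short case split: $t=0$ gives weight $s\ell \ge \ell$; $s=0$ gives weight $t\ell \ge \ell$ (repetition); otherwise the bilinear expression is minimized at extreme $s,t$, where it is still at least $\ell$ because the forced zero row prevents full cancellation. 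Combining the two parts gives $\lambda_1(\C) = \ell = \lambda_k(\C)$.
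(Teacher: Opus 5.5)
\textbf{Verdict: there is a genuine gap.} Your reduction (every column has weight $\ell$, so it suffices to show $\lambda_1(\C)\ge\ell$, and the $C$-part splits off) is fine. The problem is the lower bound $\lambda_1(\C)\ge\ell$, which is the real content of the claim: it is not established, because your model of the $(A,B)$-part is wrong.

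The columns of $A$ are indicators of $\ell+1$ consecutive blocks of length $\ell$. Column $j$ of $B$, by contrast, is supported on the coordinates $c\equiv j+1 \pmod{\ell+1}$. These two structures are misaligned, and that is the point of the construction. So there is no arrangement of the $\ell(\ell+1)$ coordinates into a full grid on which $A\alpha$ is row-constant and $B\beta$ is column-constant. Your formula $\alpha_i+[p\ge2]\beta_{p-1}$ loses $\beta_\ell$, and $s(m-t)+(R-s)t$ is the weight of a different code.

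The faithful picture is the $(\ell+1)\times(\ell+1)$ grid indexed by ($A$-block $i$, residue of $c-1$ modulo $\ell+1$), with the transversal of cells $(i,-i \bmod (\ell+1))$ deleted. The deleted cells matter. For $\alpha=\vec{1}$, $\beta=\vec{1}$ a full-grid count gives $\ell+1$, while $A\vec{1}+B\vec{1}$ actually has weight $\ell$ (it is $G_1\vec{a}$ from the theorem). Once the transversal is accounted for, the weight depends on how $\supp(\alpha)$ and $\supp(\beta)$ sit relative to it, not only on $(s,t)$. So ``minimized at extreme $s,t$; the forced zero row prevents cancellation'' does not prove anything about $\C$.

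The missing idea is the block-by-block count the paper uses:
\begin{itemize}
\item On the $i$th $A$-block (coordinates $(i-1)\ell+1,\dots,i\ell$), the vector $A\alpha$ is constant. So that block contributes at least one nonzero coordinate whenever $B\beta$ is \emph{mixed} there, i.e., neither all-zero nor all-one.
\item For $i\le\ell$, the $i$th block contains exactly one zero row of $B$ and misses exactly column $\ell+1-i$. The restriction of $B$ to block $\ell+1$ is $I_\ell$.
\item If $\beta=\vec{1}$, each of the first $\ell$ blocks has $\ell-1\ge1$ ones and a zero, so all are mixed.
\item If $0<\norm{\beta}_0<\ell$, block $\ell+1$ is mixed. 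Any $j\in\supp(\beta)$ puts a one into $\ell-1$ of the first $\ell$ blocks, each of which is mixed because of its zero row.
\end{itemize}
Either way at least $\ell$ blocks are mixed, giving weight at least $\ell$. Together with your cases $\beta=\vec{0}$ and $\gamma\neq\vec{0}$, this proves the claim.

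The same misalignment affects your independence check. The first coordinates of the $\ell$ length-$(\ell+1)$ blocks lie in $A$-blocks $1,\dots,\ell$ only, so they force only $\alpha_1=\cdots=\alpha_\ell=0$. Showing $\alpha_{\ell+1}=0$ needs one more step: a nonzero $B\beta$ always has ones outside block $\ell+1$.
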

    \begin{proof}
        Note that because each of the columns of $G_1, G_2$ has Hamming weight $\ell$, it suffices to show that $\lambda_1(\C) = \ell$. We do this by confirming that the columns of $A$ and $B$ cannot be combined into shorter codewords.  Observe that any two columns of $A$ have disjoint supports, as do any two columns of either $B$ or $C$.  Additionally, the columns of $C$ have disjoint supports from the columns of both $A$ and $B$, so the only codewords that could possibly yield a hamming weight less than $\ell$ are ones of the form $A\vec{x}_A + B\vec{x}_B$ for some $\vec{x}_A \in \F_2^{\ell+1},\vec{x}_B \in \F_2^{\ell}$ with $\vec{x}_A \neq \vec{0},\vec{x}_B \neq \vec{0}$.

        We now partition the $\ell(\ell+1)$ coordinates of $A$ and $B$ into $\ell+1$ blocks of $\ell$ consecutive coordinates.  In $A$, each block has one column with $\ell$ $1$s with all other columns being $\vec{0}$.  Meanwhile, each of the first $\ell$ blocks in $B$ has exactly one row and one column that is all $0$s, and $\ell-1$ $1$s spread across the remaining $\ell-1$ columns.  The last ($\ell+1$st) block in $B$ is $I_{\ell}$.  Note that any block in $A\vec{x}_A$ will either be $\vec{0}_{\ell}$ or $\vec{1}_{\ell}$, so any block in $B\vec{x}_B$ that is neither $\vec{1}_{\ell}$ or $\vec{0}_{\ell}$ must contain at least one $1$ in $A\vec{x}_A + B\vec{x}_B$.  We call these blocks of $B\vec{x}_B$ \emph{mixed}.

        To prove the claim, we will show that $B\vec{x}_B$ must have at least $\ell$ mixed blocks.  If $\norm{\vec{x}_B}_0 = \ell$ (so all columns of $B$ are included), each of the first $\ell$ blocks contains exactly $\ell-1$ ones and a zero, providing the $\ell$ mixed blocks we require.  When $0 < \norm{\vec{x}_B}_0 < \ell$, the $\ell+1$st block contains $\norm{\vec{x}_B}_0$ ones and is mixed, and we can pick any included column of $B$ to find the remaining mixed blocks.  Each column has the rest of its $\ell-1$ support spread across $\ell-1$ different blocks among the first $\ell$, and each of those blocks has a row of zeroes guaranteeing that it is mixed if it contains any ones.  Thus we have $\ell$ total mixed blocks ensuring that $\norm{A\vec{x}_A + B\vec{x}_B}_0 \ge \ell$ for any $\vec{x}_A,\vec{x}_B$ and $\lambda_1(\C) = \ell$ as a result. 
        \end{proof}

    Now that we have established that $\lambda_1(\C) = \lambda_k(\C) = \ell$, consider the following two messages $\vec{a},\vec{b} \in \F_2^k$:
    \begin{align*}
    \vec{a} &= \begin{pmatrix}\vec{1}_{\ell+1}\\\vec{1}_{\ell}\\\vec{0}_{\ell}\end{pmatrix} \ \text{,} \\
    \vec{b} &= \begin{pmatrix}\vec{1}_{\ell+1}\\\vec{0}_{\ell}\\\vec{1}_{\ell}\end{pmatrix} \ \text{.}
    \end{align*}
    Note that $G_1\vec{a} = G_2\vec{b}$ and $G_1\vec{b}= G_2\vec{a}$, and these can be computed as
    $$
        G_1\vec{a} = G_2\vec{b} = \begin{pmatrix}A\\
               \vec{0}_{\ell^2 \times (\ell + 1)}\\\end{pmatrix}\vec{1}_{\ell+1} + \begin{pmatrix}B\\
               \vec{0}_{\ell^2 \times \ell}\\\end{pmatrix}\vec{1}_{\ell} = \begin{pmatrix}\vec{1}_{\ell}\otimes\begin{pmatrix}1\\\vec{1}_{\ell}\\\end{pmatrix}\\ \vec{0}_{\ell^2}\end{pmatrix} + \begin{pmatrix}\vec{1}_{\ell}\otimes\begin{pmatrix}0\\\vec{1}_{\ell}\\\end{pmatrix}\\ \vec{0}_{\ell^2}\end{pmatrix} = \begin{pmatrix}\vec{1}_{\ell}\otimes\begin{pmatrix}1\\\vec{0}_{\ell}\\\end{pmatrix}\\ \vec{0}_{\ell^2}\end{pmatrix}
               $$
               and
        $$       
        G_1\vec{b} = G_2\vec{a} = \begin{pmatrix}A\\
               \vec{0}_{\ell^2 \times (\ell + 1)}\\\end{pmatrix}\vec{1}_{\ell+1} + \begin{pmatrix}\vec{0}_{(\ell^2+\ell) \times \ell} \\
               C\\\end{pmatrix}\vec{1}_{\ell} = \begin{pmatrix}\vec{1}_{\ell^2 + \ell}\\\vec{0}_{\ell^2}\end{pmatrix} + \begin{pmatrix}\vec{0}_{\ell^2 + \ell}\\\vec{1}_{\ell^2}\end{pmatrix} = \vec{1}_{2\ell^2 + \ell} \ \text{.}
    $$
    From this, we can conclude that any $T \in \F_2^{n \times n}$ such that $TG_1 = G_2$ will have distortion on $\C$ of at least
    \begin{align*}
        \D_T(\C,\C) &\ge \frac{\norm{TG_1\vec{a}}_0}{\norm{G_1\vec{a}}_0}\frac{\norm{G_1\vec{b}}_0}{\norm{TG_1\vec{b}}_0}\\
        &= \frac{\norm{G_2\vec{a}}_0}{\norm{G_1\vec{a}}_0}\frac{\norm{G_1\vec{b}}_0}{\norm{G_2\vec{b}}_0}\\
        &= \left(\frac{2\ell^2+\ell}{\ell}\right)\left(\frac{2\ell^2+\ell}{\ell}\right)\\
        &= \left(2\ell + 1\right)^2.
    \end{align*}
    Since $k = 3\ell + 1$, we have $2\ell + 1  = \frac{6\ell + 3}{3} = \frac{2(3\ell+1) + 1}{3} = \frac{2k + 1}{3}$ so
    $$\D_T(\C,\C) \ge \left(2\ell + 1\right)^2 = \left(\frac{2k+1}{3}\right)^2.$$
    Finally, we know that the distortion between a code and itself is necessarily $\D(\C,\C) = 1$, completing the proof.
\end{proof}
}

\bibliography{code-distortion,equivalence}

\newcommand{\etalchar}[1]{$^{#1}$}
\begin{thebibliography}{DDvW22}

\bibitem[ABC{\etalchar{+}}22]{classic-mceliece}
Martin~R. Albrecht, Daniel~J. Bernstein, Tung Chou, Carlos Cid, Jan Gilcher,
  Tanja Lange, Varun Maram, Ingo von Maurich, Rafael Misoczki, Ruben
  Niederhagen, Kenneth~G. Paterson, Edoardo Persichetti, Christiane Peters,
  Peter Schwabe, Nicolas Sendrier, Jakub Szefer, Cen~Jung Tjhai, Martin
  Tomlinson, , and Wen Wang.
\newblock {Classic McEliece}, 2022.
\newblock {NIST} Post-Quantum Cryptography Standardization Project submission.

\bibitem[ABSS97]{journals/jcss/AroraBSS97}
Sanjeev Arora, L{\'{a}}szl{\'{o}} Babai, Jacques Stern, and Z.~Sweedyk.
\newblock The hardness of approximate optima in lattices, codes, and systems of
  linear equations.
\newblock {\em J. Comput. Syst. Sci.}, 54(2):317--331, 1997.
\newblock Preliminary version in FOCS 1993.

\bibitem[AK14]{journals/tit/AustrinK14}
Per Austrin and Subhash Khot.
\newblock A simple deterministic reduction for the gap minimum distance of code
  problem.
\newblock {\em {IEEE} Trans. Inf. Theory}, 2014.
\newblock Preliminary verison in ICALP 2011.

\bibitem[BBB{\etalchar{+}}26]{BBB+asymp-improvements26}
Huck Bennett, Drisana Bhatia, Jean-François Biasse, Medha Durisheti, Lucas
  LaBuff, Vincenzo Pallozzi~Lavorante, and Philip Waitkevich.
\newblock Asymptotic improvements to provable algorithms for the code
  equivalence problem.
\newblock {\em IEEE Transactions on Information Theory}, 72(2):1093--1108,
  2026.

\bibitem[BBPS21]{conf/pqcrypto/BarenghiBPS21}
Alessandro Barenghi, Jean{-}Fran{\c{c}}ois Biasse, Edoardo Persichetti, and
  Paolo Santini.
\newblock {LESS-FM:} fine-tuning signatures from the code equivalence problem.
\newblock In {\em PQCrypto}, 2021.

\bibitem[BBPS23]{journals/amco/BarenghiBPS23}
Alessandro Barenghi, Jean{-}Fran{\c{c}}ois Biasse, Edoardo Persichetti, and
  Paolo Santini.
\newblock On the computational hardness of the code equivalence problem in
  cryptography.
\newblock {\em Adv. Math. Commun.}, 17(1):23--55, 2023.

\bibitem[BCGQ11]{conf/soda/BabaiCGQ11}
L{\'{a}}szl{\'{o}} Babai, Paolo Codenotti, Joshua~A. Grochow, and Youming Qiao.
\newblock Code equivalence and group isomorphism.
\newblock In {\em SODA}, 2011.

\bibitem[BDS16]{conf/esa/BennettDS16}
Huck Bennett, Daniel Dadush, and Noah Stephens{-}Davidowitz.
\newblock On the lattice distortion problem.
\newblock In {\em ESA}, 2016.

\bibitem[Beu20]{conf/sacrypt/Beullens20}
Ward Beullens.
\newblock Not enough {LESS:} an improved algorithm for solving code equivalence
  problems over $\mathbb{F}_q$.
\newblock In {\em SAC}, 2020.

\bibitem[BGG{\etalchar{+}}19]{conf/soda/BhattiproluGGLT19}
Vijay Bhattiprolu, Mrinalkanti Ghosh, Venkatesan Guruswami, Euiwoong Lee, and
  Madhur Tulsiani.
\newblock Approximability of $p \to q$ matrix norms: Generalized krivine
  rounding and hypercontractive hardness.
\newblock In {\em SODA}, 2019.

\bibitem[BMPS20]{conf/africacrypt/BiasseMPS20}
Jean{-}Fran{\c{c}}ois Biasse, Giacomo Micheli, Edoardo Persichetti, and Paolo
  Santini.
\newblock {LESS} is more: Code-based signatures without syndromes.
\newblock In {\em AFRICACRYPT}, volume 12174, pages 45--65. Springer, 2020.

\bibitem[BW24]{BennettWin2024}
Huck Bennett and Kaung Myat~Htay Win.
\newblock Relating code equivalence to other isomorphism problems.
\newblock {\em Designs, Codes and Cryptography}, 93(3):701–723, Dec 2024.

\bibitem[CW12]{journals/tit/ChengW12}
Qi~Cheng and Daqing Wan.
\newblock A deterministic reduction for the gap minimum distance problem.
\newblock {\em {IEEE} Trans. Inf. Theory}, 58(11):6935--6941, 2012.
\newblock Preliminary version in STOC 2009.

\bibitem[DDvW22]{journals/tit/Debris-AlazardD22}
Thomas Debris{-}Alazard, L{\'{e}}o Ducas, and Wessel P.~J. van Woerden.
\newblock An algorithmic reduction theory for binary codes: {LLL} and more.
\newblock {\em {IEEE} Trans. Inf. Theory}, 68(5):3426--3444, 2022.

\bibitem[DG23]{conf/pkc/DucasG23}
L{\'{e}}o Ducas and Shane Gibbons.
\newblock Hull attacks on the lattice isomorphism problem.
\newblock In {\em PKC}, 2023.

\bibitem[DMS03]{journals/tit/DumerMS03}
Ilya Dumer, Daniele Micciancio, and Madhu Sudan.
\newblock Hardness of approximating the minimum distance of a linear code.
\newblock {\em {IEEE} Trans. Inf. Theory}, 49(1):22--37, 2003.
\newblock Preliminary version in FOCS 1999.

\bibitem[GAA{\etalchar{+}}25]{hqc_spec_2025}
Philippe Gaborit, Carlos {Aguilar Melchor}, Nicolas Aragon, Slim Bettaieb,
  Lo{\"i}c Bidoux, Olivier Blazy, Jean{-}Christophe Deneuville, Edoardo
  Persichetti, Gilles Z{\'e}mor, Jurjen Bos, Arnaud Dion, J{\'e}r{\^o}me Lacan,
  Jean{-}Marc Robert, Pascal V{\'e}ron, Paulo~L. Barreto, Santosh Ghosh, Shay
  Gueron, Tim G{\"u}neysu, Rafael Misoczki, Jan Richter{-}Brokmann, Nicolas
  Sendrier, Jean{-}Pierre Tillich, and Valentin Vasseur.
\newblock {HQC} cryptosystem specification.
\newblock \url{https://pqc-hqc.org/doc/hqc_specifications_2025_08_22.pdf},
  2025.
\newblock Version dated 2025-08-22.

\bibitem[GMSS99]{journals/ipl/GoldreichMSS99}
Oded Goldreich, Daniele Micciancio, Shmuel Safra, and Jean{-}Pierre Seifert.
\newblock Approximating shortest lattice vectors is not harder than
  approximating closest lattice vectors.
\newblock {\em Inf. Process. Lett.}, 71(2):55--61, 1999.

\bibitem[GS24]{conf/approx/GhentiyalaS24}
Surendra Ghentiyala and Noah Stephens{-}Davidowitz.
\newblock More basis reduction for linear codes: Backward reduction, {BKZ},
  slide reduction, and more.
\newblock In {\em APPROX}, 2024.

\bibitem[GS26]{golovnev-stephens-davidowitz-personal-comm-26}
Alexander Golovnev and Noah Stephens{-}Davidowitz.
\newblock Personal communication, 2026.

\bibitem[Leo82]{Leon82}
J.~Leon.
\newblock Computing automorphism groups of error-correcting codes.
\newblock {\em IEEE Transactions on Information Theory}, 28(3):496--511, 1982.

\bibitem[Mat13]{Matousek2013MetricEmbeddings}
Ji{\v{r}}{\'i} Matou{\v{s}}ek.
\newblock Lecture notes on metric embeddings.
\newblock Lecture notes, Charles University, 2013.
\newblock Available at \url{https://kam.mff.cuni.cz/~matousek/ba-a4.pdf}.

\bibitem[McE78]{mceliece78}
Robert~J. McEliece.
\newblock A public-key cryptosystem based on algebraic coding theory, 1978.
\newblock {DSN} Progress Report.

\bibitem[Mic14]{conf/coco/Micciancio14}
Daniele Micciancio.
\newblock Locally dense codes.
\newblock In {\em CCC}, 2014.

\bibitem[Mor26]{morrell2026adventures}
Bryant Morrell.
\newblock Adventures with code distortion: Hardness and approximation.
\newblock Master's thesis, University of Colorado Boulder, 2026.

\bibitem[{Nat}25]{nist_pqc_round2_additional_signatures_2025}
{National Institute of Standards and Technology (NIST)}.
\newblock {Post-Quantum Cryptography: Additional Digital Signature Schemes —
  Round 2 Additional Signatures}.
\newblock
  \url{https://csrc.nist.gov/projects/pqc-dig-sig/round-2-additional-signatures},
  2025.

\bibitem[Now25]{Nowakowski25}
Julian Nowakowski.
\newblock An improved algorithm for code equivalence.
\newblock In {\em PQCrypto}, page 71–103, 2025.

\bibitem[PR97]{article/PetrankR97}
E.~Petrank and R.M. Roth.
\newblock Is code equivalence easy to decide?
\newblock {\em IEEE Transactions on Information Theory}, 43(5):1602--1604,
  1997.

\bibitem[Reg14]{regev14}
Oded Regev, 2014.
\newblock Personal communication.

\bibitem[Sen00]{journals/tit/Sendrier00}
Nicolas Sendrier.
\newblock Finding the permutation between equivalent linear codes: The support
  splitting algorithm.
\newblock {\em {IEEE} Trans. Inf. Theory}, 46(4):1193--1203, 2000.

\bibitem[TS17]{Ta-Shma2017}
Amnon Ta-Shma.
\newblock Explicit, almost optimal, epsilon-balanced codes.
\newblock In {\em Proceedings of the 49th Annual ACM SIGACT Symposium on Theory
  of Computing}, STOC 2017, page 238–251, New York, NY, USA, 2017.
  Association for Computing Machinery.

\end{thebibliography}
\bibliographystyle{alpha}

\end{document}